\newif\ifstoc
\newtheorem{theorem}{Theorem}[section]
\newtheorem{lemma}[theorem]{Lemma}
\newtheorem{prop}[theorem]{Proposition}
\newtheorem{claim}[theorem]{Claim}
\newtheorem{corollary}[theorem]{Corollary}
\theoremstyle{definition}
\newcommand {\ignore} [1] {}
\newcommand{\vzero}{\underline 0}
\newcommand{\veps}{\varepsilon}
\newcommand{\R}{{\mathbb R}}
\newcommand{\D}{{\mathcal D}}
\newcommand{\zo}{\{0,1\}}
\DeclareMathOperator{\conv}{conv}
\DeclareMathOperator*{\EX}{{\mathbb E}}
\newcommand{\minn}[1]{\min\{{#1}\}}
\newcommand{\maxx}[1]{\max\{{#1}\}}
\providecommand{\card}[1]{\lvert#1\rvert}
\providecommand{\aset}[1]{\{#1\}}
\providecommand{\eqdef}{:=}
\providecommand{\discrete}{\text{discrete}}
\providecommand{\Ddisc}{\D_\veps^\discrete}
\providecommand{\hatG}{{\smash{\widehat{G}}}}
\providecommand{\tilp}{\smash{\widetilde{p}}}
\providecommand{\whp}{with high probability\xspace}
\def\compactify{\itemsep=0pt \topsep=0pt \partopsep=0pt \parsep=0pt}
\newcounter{this-list}
\newcounter{par-list}
\newlength{\parlistlength}
\newcommand{\dem}{\mathbf{d}}
\newcommand{\len}{\ell}
\newcommand{\dist}{\delta}
\newcommand{\maxeps}{{\tfrac18}}
\providecommand{\hatf}{{\smash{\widehat{f}}}}
\renewcommand{\epsilon}{\varepsilon}
\newcommand{\eps}{\varepsilon}
\newcommand{\gap}{\gamma}
\newcommand{\T}{\mathcal{T}}
\newcommand{\mincut}{\operatorname{mincut}}
\newcommand{\poly}{\operatorname{poly}}
\newcommand{\sse}{\subseteq}
\newcommand{\etal}{et al.\xspace}
\newenvironment{myproof}{\begin{proof}}{\end{proof}}
\title{Towards $(1+\epsilon)$-Approximate Flow Sparsifiers%
}
\author{
Alexandr Andoni\thanks{%
Email: \texttt{andoni@microsoft.com} }
\\ Microsoft Research
\and
Anupam Gupta\thanks{%
Work supported in part by
    NSF awards CCF-0964474 and CCF-1016799, US-Israel BSF grant
    \#2010426, and by a grant from the
    CMU-Microsoft Center for Computational Thinking. Email: \texttt{anupamg@cs.cmu.edu}}
\\ CMU and MSR
\and
Robert Krauthgamer\thanks{%
Work supported in part by the Israel Science Foundation grant \#897/13, 
the US-Israel BSF grant \#2010418, and by the Citi Foundation.
Email: \texttt{robert.krauthgamer@weizmann.ac.il} }
\\ Weizmann Institute
}
\begin{document}
\maketitle

\begin{abstract}
  A useful approach to ``compress'' a large network $G$ is to
  represent it with a {\em flow-sparsifier}, i.e., a small network $H$
  that supports the same flows as $G$, up to a factor $q \geq 1$
  called the quality of sparsifier.  Specifically, we assume the
  network $G$ contains a set of $k$ terminals $T$, shared with the
  network $H$, i.e., $T\subseteq V(G)\cap V(H)$, and we want $H$ to
  preserve all multicommodity flows that can be routed between the terminals
  $T$. The challenge is to construct $H$ that is small.

  These questions have received a lot of attention in recent years,
  leading to some known tradeoffs between the sparsifier's quality $q$ and
  its size $|V(H)|$. Nevertheless, it remains an
  outstanding question whether every $G$ admits a
  flow-sparsifier $H$ with quality $q=1+\epsilon$, or even $q=O(1)$,
  and size $|V(H)|\leq f(k,\epsilon)$ (in particular, independent of 
  $|V(G)|$ and the edge capacities).

  Making a first step in this direction, we present new constructions 
  for several scenarios:
\begin{itemize}
\item
Our main result is that for quasi-bipartite networks $G$, one can construct
a $(1+\epsilon)$-flow-sparsifier of size $\poly(k/\eps)$. 
In contrast, exact ($q=1$) sparsifiers for this family of networks
are known to require size $2^{\Omega(k)}$.
\item
For networks $G$ of bounded treewidth $w$, we construct a flow-sparsifier 
with quality $q=O(\log w / \log\log w)$ and size $O(w\cdot \poly(k))$.
\item
For general networks $G$, we construct a {\em sketch} $sk(G)$,
that stores all the feasible multicommodity flows up to factor $q=1+\eps$,
and its size (storage requirement) is $f(k,\eps)$.
\end{itemize}

\end{abstract}

\newpage

\section{Introduction}
\label{sec:intro}

A powerful tool to deal with big graphs is to ``compress'' them by
reducing their size --- not only does it reduce their storage requirement,
but often it also reveals opportunities for more efficient graph algorithms. 
Notable examples in this context include the cut and spectral sparsifiers 
of \cite{BK96, ST04a}, which have had a huge impact on graph algorithmics. 
These sparsifiers reduce the number of edges of the graph, while preserving
prominent features such as cut values and Laplacian spectrum, up to 
approximation factor $1+\eps$. This immediately improves the runtime of graph
algorithms that depend on the number of edges, at the expense of 
$(1+\eps)$-approximate solutions. 
Such sparsifiers reduce only the number of {\em edges},
but it is natural to wonder whether more is to be gained by reducing
the number of {\em nodes} as well. This vision --- of ``node sparsification'' 
--- already appears, say, in \cite{FM95}. 

One promising notion of node sparsification is that of {\em flow or cut
  sparsifiers}, introduced in \cite{HKNR98, Moitra09, LM10}, where we have a
\emph{network} (a term we use to denote edge-capacitated graphs) $G$,
and the goal
is to construct a smaller network $H$ that supports the same flows as
$G$, up to a factor $q \geq 1$ called the quality of sparsifier $H$.
Specifically, we assume the network $G$ contains a set $T$ of $k$
\emph{terminals} shared with the network $H$, i.e., $T\subseteq V(G)\cap
V(H)$, and we want $H$ to preserve all multicommodity flows that can
be routed between the terminals $T$. 
(A formal definition is given in Section~\ref{sec:prelims}.)
A somewhat simpler variant is a {\em cut sparsifier}, 
which preserves the single-commodity flow from every set $S\subset T$ to its complement
$T\setminus S$, i.e., a minimum-cut in $G$ of the terminals
bipartition $T=S\cup (T\setminus S)$.
Throughout, we consider undirected networks (although some of the results
apply also for directed networks), 
and unless we say explicitly otherwise, flow and cut sparsifiers refer
to their node versions, i.e.,
networks on few nodes that support (almost) the same flow. 

The main question is: \emph{what tradeoff can one achieve between the
  quality of a sparsifier and its size?}  This question has received a
lot of attention in recent years.  In particular, if the sparsifier is
only supported on $T$ (achieves minimal size), one can guarantee quality
$q \leq O\big(\frac{\log k}{\log\log k}\big)$ \cite{Moitra09, LM10,
  CLLM10, EGKRTT10, MM10}.  On the other hand, with this minimal size,
the (worst-case) quality must be $q\geq \tilde \Omega(\sqrt{\log k})$
\cite{LM10, CLLM10, MM10}, and thus a significantly better quality
cannot be achieved without increasing the size of the sparsifier.  The
only other result for flow sparsifiers, due to \cite{Chuzhoy12},
achieves a constant quality sparsifiers whose size depends on the
capacities in the original graph. (Her results give %
flow sparsifiers of size $C^{O(\log\log C)}$; here $C$ is the total
capacity of edges incident to terminals and hence may be $\Omega(nk)$
even for unit-capacity graphs.) For the simpler notion of cut
sparsifiers, there are known constructions at the other end of the
tradeoff. Specifically, one can achieve exact (quality $q=1$) cut sparsifier of
size $2^{2^{k}}$ \cite{HKNR98, KRTV12}, however, the size must still be
at least $2^{\Omega(k)}$ \cite{KRTV12, KR13} (for both cut and flow
sparsifiers).

Taking cue from edge-sparsification results, and the above lower bounds,
it is natural to focus on small sparsifiers that achieve quality
$1+\eps$, for small $\eps \geq 0$. Note that for flow sparsifiers, we do
not know of any bound on the size of the sparsifier that would depend
only on $k$ (and $1/\eps$), but not on $n$ or edge capacities. In fact,
we do not even know whether it is possible to represent the sparsifier
{\em information theoretically} (i.e., by a small-size sketch), let alone by a
graph.

\subsection{Results} 
\label{sec:results}

Making a first step towards constructing high-quality sparsifiers of
small size, we present constructions for several scenarios:

\begin{itemize} \compactify
\item
Our main result is for \emph{quasi-bipartite graphs}, i.e., graphs where the
non-terminals form an independent set (see~\cite{RV99}),
and we construct for such networks 
a $(1+\epsilon)$-flow-sparsifier of size $\poly(k/\eps)$. 
In contrast, exact ($q=1$) sparsifiers for this family of networks
are known to require size $2^{\Omega(k)}$ \cite{KRTV12, KR13}. (See Theorem \ref{thm:bipartiteNew}.)
\item
For general networks $G$, we construct a {\em sketch} $sk(G)$, that
stores all the feasible multicommodity flows up to factor $q=1+\eps$,
and has size (storage requirement) of $f(k,\eps)$ words. This implies an
affirmative answer to the above information-theoretic question on
existence of flow sparsifiers, and raises the hope for a
$(1+\eps)$-flow-sparsifier of size $f(k,\eps)$. (See Theorem \ref{thm:DS}.)
\item
For networks $G$ of bounded treewidth $w$, we construct a flow-sparsifier 
with quality $q=O(\frac{\log w}{\log\log w})$ and size $O(w\cdot \poly(k))$. (See Theorem \ref{thm:twk}.)
\item 
Series-parallel networks admit an exact (quality $1$) 
flow sparsifier with $O(k)$ vertices. (See Theorem \ref{thm:sp}.)
\end{itemize}

\subsection{Techniques}
\label{sec:techniques}

Perhaps our most important contribution is the introduction of the 
three techniques listed below, and indeed, one can view our results 
from the prism of these three rather different approaches.
In particular, applying these three techniques to quasi-bipartite graphs
yields $(1+\eps)$-quality sparsifiers whose sizes are (respectively) 
doubly-exponential, exponential, and polynomial in $k/\eps$.

\begin{enumerate} \compactify
\item \emph{Clumping:} %
  We first ``discretize'' the set of (almost) all possible
  multi-commodity demands into a finite set, whose size depends only on
  $k/\eps$, and then partition the graph vertices into a small number
  of ``clusters'', so that clumping each cluster into a single vertex 
  still preserves one (and eventually all) of the discretized demands. 
  The idea of clumping vertices was used in the past
  to obtain exact (quality $1$) cut sparsifiers \cite{HKNR98}.
  Flow-sparsifiers require, in effect, to preserve all metrics between 
  the terminals rather than merely all inter-terminal cut metrics,
  and requires new ideas.

\item \emph{Splicing/Composition:} Our Splicing Lemma shows that it is
  enough for the sparsifier to maintain flows routed using paths that do
  not contain internally any terminals. Our Composition Lemma shows that for a
  network obtained by gluing two networks along some subset of their
  terminals, gluing the respective sparsifiers (in the same manner) gives us
  a sparsifier for the glued network. These lemmas enable us to do
  ``surgery'' on networks, to decompose and recompose them, so that we
  find good sparsifiers on smaller networks and then combine them
  together without loss of quality.
\item \emph{Sampling:} 
  This technique samples parts of the graph, while preserving the flows 
  approximately. The main difficulty is to determine correct sampling 
  probabilities (and correlations). 
  This is the technical heart of the paper, and we outline its main ideas 
  in Section~\ref{sec:intro-sample}.
\end{enumerate}

We hope they will inspire ulterior constructions of high-quality 
flow sparsifiers for general graphs.
The clumping techniques
give information-theoretic bounds on flow-sparsification, and the
splicing/composition approach proves useful for sparsification of bounded
treewidth and series-parallel graphs (beyond what can be derived 
using their flow/cut gaps from known cut sparsifiers).

\subsection{Outline of Our Sampling Approach}
\label{sec:intro-sample}

A classic approach to obtain an {\em edge}-sparsifier \cite{Karger94,BK96,SS11}
is to sample the edges of the graph and rescale appropriately. 
Here, we outline instead how to sample the \emph{vertices} of the graph 
to obtain a small flow-sparsifier. 
We outline our main idea on quasi-bipartite graphs (where the
non-terminals form an independent set), considering 
for simplicity the (simpler) question of {\em cut} sparsifiers, 
where we want to construct a smaller graph $G'$ that preserves the minimum 
cut between every bipartition of terminals $T=S\cup (T\setminus S)$.
The main idea is to sample a small number of non-terminals $v$,
keeping only their incident edges, and rescaling the corresponding capacities.
For
a fixed bipartition $T=S\cup \bar S$, we can write the value of the
min-cut as
\begin{equation}
\label{eqn:cutSum}
\alpha_{S,\bar S}=\sum_{v\notin T} \min\Big\{\sum_{s\in S}c_{s v},\sum_{t\in \bar S}c_{v t}\Big\}.
\end{equation}
(Here $c_{xy}$ is the capacity of the edge $xy$.)
Suppose we assign each non-terminal $v$ with some sampling probability $p_v$, 
then sample the non-terminals using these probabilities, letting
$I_v$ be an indicator variable for whether $v$ was sampled.  Then, for
sampled $v$'s we re-normalize the capacities on incident edges by
$1/p_v$, i.e., the new capacities are $c'_{v,t}=c_{v,t}/p_v$ for all
$t\in T$ (non-sampled $v$'s are dropped). The new value of the min-cut
in the sparsifier $G'$ is
\begin{equation}
\label{eqn:cutApprox}
\alpha'_{S,\bar S}=\sum_{v\notin T} I_v/p_v\cdot \min\Big\{\sum_{s\in S}c_{s 
v},\sum_{t\in \bar S}c_{v t}\Big\}.
\end{equation}
This classical estimator is unbiased, and hence each min-cut
$\alpha_{S,\bar S}$ is preserved {\em in expectation}.

The main challenge now is to prove that the above random sum
concentrates around its expectation for ``small'' values of
$p_v$. For example, consider setting all $p_v$ equal, say to 
$\poly(k)/|V|$. Even if all $c_{s,v}\in\zo$ (i.e., all edges in $E$ 
have unit capacity), due to the min operation, it is possible 
that only very few terms in the summation in Eqn.~\eqref{eqn:cutSum} 
have nonzero contribution to $\alpha_{S,\bar S}$, and are 
extremely unlikely to be sampled.

Our general approach is to employ {\em importance sampling}, where
each $p_v$ is related to $v$'s contribution to the sum, namely
$\min\left\{\sum_{s\in S}c_{s v},\sum_{t\in \bar S}c_{v t}\right\}$. 
Applying this directly is hard --- since that minimum 
depends on the bipartition $S\cup \bar S$, whereas $p_v$
cannot. Instead, we exploit the fact that for any bipartition,
we can estimate
\begin{equation}\label{eqn:cutST}
  \alpha'_{S,\bar S} 
  \geq \max_{s\in S, t\in \bar S}\ \sum_v I_v/p_v\cdot \min\{c_{s v},c_{v t}\}
  \geq \tfrac{1}{k^2} \alpha'_{S,\bar S},
\end{equation}
and hence arguing about the sum
in Eqn.~\eqref{eqn:cutST} should be enough for bounding
the variance. Following this reasoning through, it turns out that a good choice is
\begin{equation}\label{eqn:cutPv}
p_v=M\cdot \max_{s\neq t}
\frac{\min\{c_{s v},c_{v t}\}}{\sum_{v'} \min\{c_{s v'},c_{v' t}\}},
\end{equation}
where $M=\poly(k/\eps)$ is an over-sampling factor. The underlying
intuition of Eqn.~\eqref{eqn:cutPv} is that, replacing the max with 
a ``correct'' choice of $s\in S,t\in \bar S$, the denominator is just the entire
potential contribution 
to the sum in Eqn.~\eqref{eqn:cutST}, and hence these $p_v$ values
can be used as importance sampling probabilities for the sum in
Eqn.~\eqref{eqn:cutApprox}.  Moreover, we prove that this setting of $p_v$
allows for a high-probability concentration bound in the sum from
Eqn.~\eqref{eqn:cutApprox}, and thus sampling $\poly(k/\eps)$ vertices 
suffices for the purpose of taking a union bound over all $2^k$ bipartitions.

So far we have described the approach for obtaining cut sparsifiers,
but in fact we prove that the exact same approach works for obtaining
flow sparsifiers as well. There are more issues that we need to take
care in this generalized setting. First, we need to bound the
``effective'' number of demand vectors. Second, the flow does not have 
a simple closed-form formula like \eqref{eqn:cutSum}, so upper
and lower bounds need to be proved by analyzing separately (the concentration
of) the optimal value of the flow LP and of its dual.

\section{Preliminaries}
\label{sec:prelims}

A \emph{$k$-terminal network} is an edge-capacitated graph $G=(V,E,c)$
with a subset $T\subseteq V$ of $k$ terminals. We will be interested
only in \emph{terminal flows}, i.e., flows that start and end only at
the terminal vertices of $G$. Define $\D(G)$, the \emph{demand polytope
  of $G$} as the set of all demand vectors $\dem$ that are supported only
on terminal-pairs, and admit a feasible multicommodity-flow in $G$,
formally,
\begin{gather}
  \D(G) \eqdef \aset{ \dem \in \R_+^{\binom{T}{2}} :\ \text{demand
      $\dem$ can be routed in $G$} }, \label{eq:1}
\end{gather}
where we denote $\R_+\eqdef \aset{x\in R: x\ge 0}$ and
$\tbinom{T}{2}\eqdef \aset{S\subseteq T:\ \card{S}=2 }$.  Throughout,
we assume $G$ is connected.

\begin{lemma}
  \label{lem:polytope}
  $\D(G)$ is a polytope, and is down-monotone.
\end{lemma}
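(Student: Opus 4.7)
The plan is to exhibit $\D(G)$ as the projection of an explicit bounded polyhedron, and then to handle down-monotonicity by a simple scaling argument on flows.

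First I would write down a standard edge-based multicommodity flow LP. Introduce, for each pair $\{s,t\}\in\binom{T}{2}$ and each directed edge $e$, a flow variable $f^{st}_e\ge 0$. The constraints are (i) for every non-terminal vertex $v$ and every pair $\{s,t\}$, flow conservation $\sum_{e\text{ into }v} f^{st}_e = \sum_{e\text{ out of }v} f^{st}_e$; (ii) at the endpoints, $\sum_{e\text{ out of }s} f^{st}_e - \sum_{e\text{ into }s} f^{st}_e = \dem_{st}$ and the analogous equation at $t$; and (iii) for every (undirected) edge $e$ of $G$ the capacity constraint $\sum_{\{s,t\}} (f^{st}_e+f^{st}_{\overleftarrow{e}}) \le c_e$. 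Then $\dem\in\D(G)$ iff the system in the variables $(f,\dem)$ is feasible, so the set of feasible $(f,\dem)$ is a polyhedron $P$. By definition $\D(G)$ is the coordinate projection $\pi_\dem(P)$, and a coordinate projection of a polyhedron is again a polyhedron (by Fourier–Motzkin elimination, or by observing that $P$ has finitely many extreme rays so its projection does too). Boundedness of $\D(G)$ follows from the capacity constraints: each $\dem_{st}$ is upper-bounded by the total capacity of edges incident to $s$, so $\D(G)$ is contained in a box and is therefore a polytope. Containment in $\R_+^{\binom{T}{2}}$ is part of the definition.

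For down-monotonicity, suppose $\dem\in\D(G)$ with witness flow $\{f^{st}_e\}$, and let $\dem'\le \dem$ componentwise with $\dem'\ge 0$. For each pair set $\lambda_{st} := \dem'_{st}/\dem_{st} \in [0,1]$ if $\dem_{st}>0$, and $\lambda_{st}:=0$ otherwise, and define $g^{st}_e := \lambda_{st}\cdot f^{st}_e$. Flow conservation and the endpoint equations for the pair $\{s,t\}$ are linear in $f^{st}$, so they hold for $g^{st}$ with right-hand side $\lambda_{st}\cdot\dem_{st}=\dem'_{st}$. The capacity constraint holds because we only shrink each $f^{st}_e$. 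Thus $g$ witnesses that $\dem'\in\D(G)$.

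The only mild subtlety is making the projection step rigorous, but this is entirely standard linear-programming material; I expect no real obstacle. The whole argument is essentially bookkeeping about a well-known LP.
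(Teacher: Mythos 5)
Your proposal is correct and takes essentially the same approach as the paper: realize $\D(G)$ as a coordinate projection of an explicit flow polytope, and scale the witnessing flow for down-monotonicity. The paper uses a path-based formulation (variables $f_P$) and even notes in a footnote that the edge-based formulation you chose is an equivalent alternative; your version of down-monotonicity is spelled out a bit more explicitly than the paper's one-line appeal to the absence of flow lower bounds, and you also address boundedness, which the paper leaves implicit. No substantive differences.
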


\begin{myproof}
  Let $\mathcal{P}_{ij}$ be the set of paths between terminals $i$ and
  $j$. Consider the extended demand polytope $\D_{ext}(G)$ with
  variables $d_{ij}$ for all $\{i,j\} \in \binom{T}{2}$, and $f_P$ for
  each $P \in \cup_{i,j} \mathcal{P}_{ij}$. 
  \begin{align*}
    \textstyle \sum_{P \in \mathcal{P}_{ij}} f_P &= d_{ij} \\
    \textstyle \sum_{i,j} \sum_{P \in \mathcal{P}_{ij}: e \in P} f_P &\leq c_e \\
    f_P, d_{ij} &\geq 0.
  \end{align*}
  This polytope captures all the feasible terminal flows, and hence all
  the routable demands between the terminals of $G$. The projection of
  this polytope $\D_{ext}(G)$ onto the variables $d$ is exactly $\D(G)$;
  hence the latter is also a polytope.\footnote{As a aside, we can write
  $\D_{ext}(G)$ more compactly using edge-flow variables $f^{ij}(e)$
  instead of path variables $f_P$; we omit such standard optimizations
  here.} Finally, the down-monotonicity of the polytope follows from the
  downward-feasibility of flows, in turn due to the lack of lower-bounds
  on the flows on edges.
\end{myproof}

\paragraph{Dual linear program for concurrent flow.}

For a demand vector $\dem \in \R_+^{\binom{T}{2}}\setminus\aset{0}$, 
we denote the
\emph{concurrent flow} problem (inverse of the congestion) by
\[
  \lambda_G(\dem) \eqdef \sup \aset{ \lambda\ge0 :\ \lambda \dem \in \D(G) }.
\]
This is well-defined because $\vec0\in \D(G)$. The following well-known
lemma writes $\lambda_G(\dem)$ by applying linear programming (LP) duality to
multicommodity flow, see e.g.\ \cite{LR99,Shmoys:CutSurvey,Moitra09}.

\begin{lemma} \label{lem:LP} $\lambda_G(\dem)$ can be computed via the
  linear program \eqref{eq:dualLP1} %
  which has ``edge-length''
  variables $\len_e$ for edges $e\in E$ and ``distance'' variable
  $\dist_{uv}=\dist_{vu}$ for terminal pairs
  $\aset{s,t}\in\binom{T}{2}$.

\begin{equation} \label{eq:dualLP1}
\framebox{ $
\begin{array}{lllll}
  \lambda_G(\dem) = 
  & \min 
  &  \sum_{e\in E} c_e \len_e 
  \\
  & \operatorname{s.t.}
  &  \sum_{\aset{s,t}\in\tbinom{T}{2}} d_{st} \dist_{st} \ge 1 
  \\
  & &  \dist_{st} \le \sum_{e\in P} \len_e 
  & \forall \aset{s,t}\in\tbinom{T}{2} \text{ and a path $P$ connecting them}
  \\
  & & \len_e \ge 0 
  & \forall e\in E
  \\
  & & \dist_{st} \ge 0 
  & \forall \aset{s,t}\in\tbinom{T}{2}.
\end{array}
$ } \tag{LP1}
\end{equation}

\end{lemma}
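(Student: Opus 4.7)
The plan is a direct application of linear programming duality to the path-based formulation of concurrent multicommodity flow. First, I would write $\lambda_G(\dem)$ as the optimum of the following primal LP: maximize $\lambda$ over variables $\lambda \ge 0$ and $f_P \ge 0$, one for every simple path $P \in \mathcal{P}_{ij}$ and every terminal pair $\{i,j\} \in \binom{T}{2}$, subject to the routing constraints $\lambda d_{ij} - \sum_{P \in \mathcal{P}_{ij}} f_P \le 0$ for each pair, and to the capacity constraints $\sum_{\{i,j\}} \sum_{P \in \mathcal{P}_{ij},\, e \in P} f_P \le c_e$ for each edge $e$. Feasibility (take $\lambda=0$) is immediate, and boundedness follows because the total $s$–$t$ flow for any fixed pair is at most the total capacity crossing any cut separating them. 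By Lemma~\ref{lem:polytope} together with the standard path-decomposition of multicommodity flow (the projection of the extended polytope $\D_{ext}(G)$ onto the $d$-variables is exactly $\D(G)$), the optimum of this LP equals $\lambda_G(\dem)$.

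Next I would form the LP dual mechanically, introducing multipliers $\dist_{ij} \ge 0$ for the routing constraints and $\len_e \ge 0$ for the capacity constraints. The column of $\lambda$ contributes coefficient $1$ to the objective and $d_{ij}$ to the $\{i,j\}$-th routing row, giving the dual constraint $\sum_{\{s,t\}\in\binom{T}{2}} d_{st}\,\dist_{st} \ge 1$. The column of $f_P$ for $P \in \mathcal{P}_{ij}$ contributes $0$ to the objective, $-1$ to the $\{i,j\}$-th routing row, and $+1$ to each edge-capacity row $e \in P$, yielding $\sum_{e \in P} \len_e - \dist_{ij} \ge 0$, i.e.\ $\dist_{ij} \le \sum_{e\in P} \len_e$. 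The dual objective is $\min \sum_e c_e \len_e$. This is precisely \eqref{eq:dualLP1}.

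Finally, strong LP duality (applicable because both LPs are feasible and the primal is bounded) gives the claimed equality between $\lambda_G(\dem)$ and the optimum of LP1. I do not expect a serious obstacle; the only delicate points are keeping the sign conventions straight when reading off the dual, and handling the (exponentially many) path variables. The latter is standard: one may either restrict to simple paths and obtain a finite LP, or replace the path formulation by the equivalent polynomial-size edge-flow formulation $f^{ij}(e)$ mentioned in the proof of Lemma~\ref{lem:polytope} and take its dual, arriving at the same LP1 after observing that at optimality one may set $\dist_{st}$ to equal the $\len$-shortest-path distance from $s$ to $t$ (this only tightens the ``$\ge 1$'' constraint).
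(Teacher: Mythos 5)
Your proposal is correct and follows exactly the approach the paper intends: the paper states Lemma~\ref{lem:LP} without proof, remarking only that it follows ``by applying linear programming (LP) duality to multicommodity flow'' and citing standard references, and your argument is precisely that standard derivation — write the path-based concurrent-flow primal, mechanically take its dual with multipliers $\dist_{ij}$ for the routing rows and $\len_e$ for the capacity rows, and invoke strong duality (using $\dem\neq 0$ and connectivity of $G$ for boundedness). Nothing is missing; this is the proof the paper leaves to the reader.
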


\paragraph{Flow-sparsifier definition.}

A network
$G'=(V',E',c')$ with $T \subseteq V'$ is called a \emph{flow sparsifier
of $G$ with quality $q\ge 1$} if
\[
  \forall d\in \R_+^{\binom{T}{2}},
  \qquad
  \lambda_G(\dem) \leq \lambda_{G'}(\dem) \leq q\cdot \lambda_G(\dem).
\]
This condition is equivalent to writing $\D(G) \subseteq \D(G') \subseteq q\cdot\D(G)$.

\section{A Data Structure for Multicommodity Flows} \label{sec:DS}

We present a data structure that ``maintains'' $\D(G)$ within
approximation factor $1+\veps$.  More precisely, we preprocess the
terminal network $G$ into a data structure whose storage requirement
depends only on $k$ and $\veps$ (but not on $n=|V(G)|$).  Given a
query $\dem \in \R_+^{\binom{T}{2}}$, this data structure returns an
approximation to $\lambda_G(\dem)$ within factor $1+\veps$ (without
further access to $G$).  The formal statement appears in Theorem
\ref{thm:DS}.  We assume henceforth that $0<\veps<\maxeps$.

\paragraph{An approximate polytope.}

Let $G=(V,E,c)$ be a terminal network with k terminals $T\subset V$. 
For each commodity $\aset{i,j} \in\binom{T}{2}$, let $L_{ij}$ be the maximum flow of
commodity $ij$ alone (i.e., as a single-commodity flow) in $G$.
Discretize the set $\D(G)$ defined in~\eqref{eq:1} by defining the subset
\[
  \Ddisc \eqdef \aset{\, \dem \in \D(G) :\ \text{every nonzero $d_{ij}$ 
is a power of $1+\veps$ in the range $[\veps/k^2\cdot L_{ij},L_{ij}]$\, }  }.
\]
The range upper bound $L_{ij}$ (which is not really necessary, 
as it follows from $\dem\in \D(G)$), immediately implies that 
\begin{equation} \label{eq:DdiscSize}
  \card{\Ddisc} 
  \leq \Big(1+\tfrac1\veps \log_{1+\veps} k\Big)^{\binom{k}{2}}
  \leq \Big(\tfrac{O(1)}\veps \log k\Big)^{k^2}.
\end{equation}

\begin{lemma} \label{lem:monotone}
The convex hull $\conv(\Ddisc)$ is down-monotone,
namely, if $\dem \in\conv(\Ddisc)$ and $0\leq \widehat\dem \leq \dem$, 
then also $\widehat \dem \in\conv(\Ddisc)$.
\end{lemma}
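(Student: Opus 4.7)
The plan is to reduce down-monotonicity of $\conv(\Ddisc)$ to down-monotonicity of $\D(G)$ (Lemma~\ref{lem:polytope}) via a one-coordinate-at-a-time modification argument. The key structural observation is that $\Ddisc$ is closed under zeroing coordinates: if $\dem'\in\Ddisc$ and $\dem''$ is obtained from $\dem'$ by setting some subset of its coordinates to $0$, then $\dem''\in\Ddisc$. Indeed, the remaining nonzero entries of $\dem''$ are still powers of $1+\veps$ in the required ranges, and $\dem''\le\dem'\in\D(G)$ forces $\dem''\in\D(G)$ by Lemma~\ref{lem:polytope}. This gives a rich supply of ``truncated'' atoms to mix in.

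Given $\widehat\dem\le\dem$ with a representation $\dem=\sum_l \alpha_l\dem^{(l)}$ and $\dem^{(l)}\in\Ddisc$, I process the pairs $\{i,j\}\in\binom{T}{2}$ one at a time, maintaining a convex combination of $\Ddisc$-points whose value $\dem^{\mathrm{cur}}$ matches $\widehat\dem$ on already-processed coordinates and matches $\dem$ on the rest. To process a new coordinate $ij$ with current value $d^{\mathrm{cur}}_{ij}>0$, set $\beta:=1-\widehat d_{ij}/d^{\mathrm{cur}}_{ij}\in[0,1]$; then, for each atom $\dem^{(l)}$ in the current representation, let $\dem^{(l,0)}$ denote $\dem^{(l)}$ with its $ij$-coordinate set to $0$ (which lies in $\Ddisc$ by the observation above), and replace the atom $\dem^{(l)}$ of weight $\alpha_l$ by the two atoms $\dem^{(l,0)}$ of weight $\alpha_l\beta$ and $\dem^{(l)}$ of weight $\alpha_l(1-\beta)$. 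The new mixture is still a convex combination of $\Ddisc$-points; a direct calculation shows its $ij$-coordinate is $(1-\beta)d^{\mathrm{cur}}_{ij}=\widehat d_{ij}$ while all other coordinates are unchanged. (If $d^{\mathrm{cur}}_{ij}=0$ then $\widehat d_{ij}=0$ automatically and no modification is needed.) After processing all $\binom{k}{2}$ pairs, the current vector equals $\widehat\dem$, witnessing $\widehat\dem\in\conv(\Ddisc)$.

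The only subtle point is the structural observation in the first paragraph, and specifically that zeroing a coordinate (rather than rounding it down to the nearest allowed power of $1+\veps$) is the right operation: the natural alternative of rounding coordinatewise could produce a value below the lower threshold $\veps L_{ij}/k^2$ and thus outside the discrete grid, whereas $0$ is always permitted. Beyond this observation, the iteration is mechanical and costs only a $\binom{k}{2}$-fold blowup in the number of atoms; I foresee no further obstacle.
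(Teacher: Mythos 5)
Your proposal is correct and is essentially the paper's own argument: both decompose the current vector into a convex combination of $\Ddisc$-atoms and handle coordinates one at a time (the paper phrases this as a uniformly-scaled subset $S$ then iterates, which collapses to the same single-coordinate procedure), using the key fact that zeroing a coordinate of a $\Ddisc$-vector keeps it in $\Ddisc$. The $\beta$/$1-\beta$ split of each atom into its zeroed copy and itself is exactly the convex combination $\beta\dem_j+(1-\beta)\widehat\dem_j$ in the paper's proof.
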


\begin{myproof}
Consider first the special case where $\widehat \dem$ is obtained from $\dem$ 
by scaling the coordinates in some subset $S\subseteq \binom{T}{2}$
by a scalar $0\le \beta <1$.
Write $\dem$ as a convex combination of some vectors $\dem_j\in \Ddisc$,
say $\sum_j \alpha_j \dem_j$, where $\alpha_j>0$ and $\sum_j \alpha_j = 1$.
Let $\widehat \dem_j$ be the vector obtained from $\dem_j$ by zeroing 
all the coordinates in $S$, and observe it is also in $\Ddisc$.
Now write 
\[
  \widehat \dem 
  = \sum_{j} \alpha_j [\beta\dem_j + (1-\beta)\widehat\dem_j]
  = \sum_{j} \alpha_j\beta \dem_j + \sum_{j} \alpha_j(1-\beta) \widehat \dem_j,
\]
and observe the right-hand side is a convex combination of vectors in $\Ddisc$, 
which proves the aforementioned special case.
The general case follows by iterating this special-case argument several times.
\end{myproof}

\paragraph{The data structure.}

The next theorem expresses the space requirement of an algorithm
in machine words, assuming every machine word can store $\log k$ bits 
and any single value $L_{ij}$ 
(either exactly or within accuracy factor $1+\veps/2$).
This holds, in particular, when edge capacities in the graph $G$ 
are integers bounded by $n=\card{V(G)}$, and a word has $2\log n$ bits.

\begin{theorem} \label{thm:DS}
For every $0<\veps<\maxeps$ there is a data structure 
that provides a $(1+\veps)$-approximation for multicommodity flows
in a $k$-terminal network, 
using space $\Big(\tfrac{O(1)}{\veps} \log k\Big)^{k^2}$ 
and query time $O(\frac1\veps k^2 \log k)$.
\end{theorem}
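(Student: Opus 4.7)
My plan is to build the data structure by explicitly storing the set $\Ddisc$ as a hash table keyed on the tuple of grid exponents, which fits in $(\tfrac{O(1)}{\eps}\log k)^{k^2}$ words by \eqref{eq:DdiscSize}. Correctness and speed then reduce to (i)~an approximation claim relating $\conv(\Ddisc)$ to $\D(G)$, and (ii)~a fast query routine that exploits down-monotonicity.

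The approximation claim I would establish is
\[
  \conv(\Ddisc) \;\subseteq\; \D(G) \;\subseteq\; (1+3\eps)\,\conv(\Ddisc).
\]
The left containment is immediate since $\Ddisc\subseteq\D(G)$ and $\D(G)$ is convex. For the right, I would ``pad'' any $\dem^*\in\D(G)$ by $\eta_{ij}:=2\eps L_{ij}/k^2$. The point is that commodity $\{i,j\}$ alone can be routed at value $L_{ij}$, so routing $\eta_{ij}$ of it costs congestion $2\eps/k^2$, and summing over the $\binom{k}{2}$ pairs bounds the total congestion of $\eta$ by $\eps$. Thus $\dem^*+\eta$ is routable with congestion $\le 1+\eps$, so $(\dem^*+\eta)/(1+\eps)\in\D(G)$. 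Rounding each of its coordinates down to the nearest power of $1+\eps$ in $[\eps L_{ij}/k^2,L_{ij}]$ produces a vector $\tilde\dem$; the padding guarantees every coordinate is at least $\eps L_{ij}/k^2$, so no coordinate drops out of range, and by down-monotonicity of $\D(G)$ (Lemma~\ref{lem:polytope}) we have $\tilde\dem\in\Ddisc$. A direct computation gives $\tilde\dem\ge \dem^*/(1+\eps)^2$, and then Lemma~\ref{lem:monotone} yields $\dem^*/(1+\eps)^2\in\conv(\Ddisc)$. Rescaling $\eps$ by a constant absorbs $(1+\eps)^2$ into $1+\eps$.

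For the query I would compute $\lambda'':=\sup\{\lambda:\exists \tilde\dem\in\Ddisc\text{ with }\tilde\dem\ge\lambda\dem\}$, which the approximation claim forces into $[\lambda_G(\dem)/(1+3\eps),\,\lambda_G(\dem)]$ (the upper inequality from $\Ddisc\subseteq\D(G)$ plus down-monotonicity, the lower from the witness $\tilde\dem$ built above). For a candidate $\lambda$, I would round each $\lambda d_{ij}$ \emph{up} to the smallest grid value $\hat\dem^{(\lambda)}_{ij}$ (infeasible if that would exceed $L_{ij}$); any grid vector dominating $\lambda\dem$ must also dominate $\hat\dem^{(\lambda)}$, so down-monotonicity of $\D(G)$ reduces the existence check to a single hash lookup of $\hat\dem^{(\lambda)}$. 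A simple bound $\lambda_G(\dem)\ge \lambda_{\max}/k^2$ (where $\lambda_{\max}=\min_{ij:d_{ij}>0}L_{ij}/d_{ij}$, obtained by routing each commodity separately) confines the relevant $(1+\eps)$-powers of $\lambda$ to a range of $O(\log k/\eps)$ values, so either a linear sweep or binary search (using monotonicity in $\lambda$) costs $O(k^2\log k/\eps)$.

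I expect the main obstacle to be the approximation claim: naively rounding a boundary demand $\dem^*$ to the grid would drop any coordinate with $d_{ij}<\eps L_{ij}/k^2$ to zero, and even a $(1+\eps)$ blow-up cannot recover them. The $\eta$-padding is exactly what resolves this, lifting every coordinate above the threshold at total congestion cost $\le\eps$ thanks to the $1/k^2$ factor baked into the definition of $\Ddisc$; everything else (space accounting, hash lookups, and the range argument for binary search) is routine.
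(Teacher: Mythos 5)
Your proposal is correct and follows essentially the same approach as the paper: store $\Ddisc$ in a hash table, compute the range $[\lambda_{\max}/k^2,\lambda_{\max}]$ for the answer, and binary-search over $O(\tfrac1\veps\log k)$ powers of $1+\veps$ with one $O(k^2)$ hash lookup per step. The only real variation is the rounding direction in the decision procedure --- you round $\lambda\dem$ \emph{up} to the unique smallest grid vector dominating it, which makes the per-$\lambda$ test exact for the relaxed question ``does some grid vector dominate $\lambda\dem$?'' and pushes all the approximation slack into the sandwich $\conv(\Ddisc)\subseteq\D(G)\subseteq(1+3\veps)\conv(\Ddisc)$, whereas the paper rounds $\lambda\dem$ \emph{down} (after zeroing small coordinates) and shows the resulting test is a $(1+2\veps)$-approximate decision via the decomposition $\lambda\dem=\dem^-+\sum\alpha_{ij}L_{ij}\mathbf{e}_{ij}$; your $\eta$-padding in the containment proof is the mirror image of that decomposition, and both routes deliver the same guarantees.
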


\begin{myproof}
We present a data structure achieving approximation $1+O(\veps)$;
the theorem would then follow by scaling $\veps>0$ appropriately.
The data structure stores the set $\Ddisc$, 
using a dictionary (such as a hash table) to answer membership queries 
in time $O(k^2)$, the time required to read a single vector. 
It additionally stores all the values $L_{ij}$.
(We assume these values can be stored exactly; the case of $1+\veps/2$ 
approximation follows by straightforward modifications.)

Given a query $\dem$, the algorithm first computes 
$\beta = \min_{i,j\in T}\aset{L_{ij}/d_{ij}}$.
We thus have that $\beta k^{-2} \leq \lambda_G(\dem) \leq \beta$,
because the commodity $i,j\in T$ attaining $\beta d_{ij}=L_{ij}$
limits $\lambda_G(\dem)$ to not exceed $\beta$,
and because we can ship $\beta d_{ij}\leq L_{ij}$ units separately 
for every commodity $i,j\in T$, hence also their convex combination 
$\binom{k}{2}^{-1}\beta \dem$.

The query algorithm then computes an estimate for $\lambda_G(\dem)$ 
by performing a binary search over all powers of $(1+\veps)$ 
in the range $[\beta k^{-2},\beta]$,
where each iteration decides, up to $1+2\veps$ multiplicative approximation,
whether a given $\lambda$ in that range is at most $\lambda_G(\dem)$.
The number of iterations is clearly 
$O(\log_{1+\veps} k^2) \le O(\frac1\veps \log k)$.

The approximate decision procedure is performed in two steps.
In the first step, we let $\dem^-$ be the vector obtained from $\lambda \dem$
by zeroing all coordinates that are at most $2\veps/k^2\cdot L_{ij}$.
This vector can be written as 
\[ 
  \dem^-=\lambda \dem-\sum_{i,j\in T} \alpha_{ij} L_{ij} \mathbf{e}_{ij},
\]
where $\mathbf{e}_{ij}\in\R_+^{\binom{T}{2}}$ is the standard basis
vector corresponding to $\aset{i,j}$, and for every $i,j\in T$ we
define $\alpha_{ij} \eqdef d_{ij}/L_{ij}$ 
if $d_{ij} \leq 2\veps/k^2\cdot L_{ij}$, and $\alpha_{ij}\eqdef 0$ otherwise.
By definition, $\sum_{i,j} \alpha_{i,j}\leq \veps$.
The second step lets $\dem'$ be the vector obtained from $\dem^-$ 
by rounding down each coordinate to the nearest power of $1+\veps$.
Finally, decide whether $\lambda \leq \lambda_G(\dem)$ 
by checking whether $\dem'\in \Ddisc$, 
which is implemented using the dictionary in $O(k^2)$ time.

It remains to prove the correctness of the approximate decision procedure.
For one direction, assume that $\lambda \leq \lambda_G(\dem)$.
It follows that the demands $\lambda \dem \ge \dem^- \ge \dem'$ 
can all be routed in $G$, and furthermore $\dem' \in \Ddisc$,
implying that our procedure reports a correct decision.
For the other direction, 
suppose our procedure reports that $\lambda \leq \lambda_G(\dem)$,
which means that its corresponding $\dem'\in \Ddisc \subset \D(G)$.
We can thus write
\[ 
  \lambda \dem 
  = \dem^- + \sum_{ij} \alpha_{ij} L_{ij} \mathbf{e}_{ij} 
  \leq (1+\veps)\dem' + \sum_{ij} \alpha_{ij} L_{ij} \mathbf{e}_{ij}.
\]
The right-hand side can be described as a positive combination of vectors 
in $\D(G)$, whose sum of coefficients is 
$(1+\veps)+\sum_{ij}\alpha_{ij} \le 1+2\veps$.
Since $\D(G)$ is convex and contains $\vzero$,
we have that also $(1+2\veps)^{-1}\lambda\dem \in \D(G)$,
i.e., that $(1+2\veps)^{-1}\lambda \le \lambda_G(\dem)$,
which proves the correctness of the decision procedure 
up to $1+2\veps$ multiplicative approximation.
Overall, we have indeed shown that the binary search algorithm approximates 
$\lambda_G(\dem)$ within factor $(1+\veps)(1+2\veps) \leq 1+O(\veps)$.
\end{myproof}

\section{The Clumping Method for Flow Sparsifiers} 
\label{sec:clump}

In this section we develop a method based on clumping (merging) vertices, 
and exemplify its use on quasi-bipartite graphs.
Let $G=(V,E,c)$ be a terminal network with terminal set $T$. 
For a subset $S\subseteq V$, denote the edges in the induced subgraph
$G[S]$ by $E[S] \eqdef \aset{(u,v)\in E:\ u,v\in S}$. Given a partition
$\Pi = \{ S_1, S_2, \ldots, S_{m}\}$ of the vertex set (i.e., $\cup_{i =
  1}^m S_i = V$), say a distance function $\dist$ on the vertex set $V$
is \emph{$\Pi$-respecting} if for all $l \in [m]$ and $\{i,j\} \in S_l$
it holds that $\dist_{ij} = 0$.

\begin{prop} \label{prop:SuffCond2} Let $G=(V,E,c)$ be a $k$-terminal
  network, and fix $0<\veps<1/3$ and $b\ge1$.  Suppose there is an
  $m$-way partition $\Pi = \{S_1, \ldots, S_{m}\}$ such that for every
  $\dem'\in\Ddisc$, there exists a $\Pi$-respecting distance function
  $\dist$ that is a feasible solution to~(\ref{eq:dualLP1}) with
  objective value at most $b\cdot\lambda_G(\dem')$.
Then the graph $G'$ obtained from $G$ by merging 
each $S_i$ into a single vertex (keeping parallel edges%
\footnote{From our perspective of flows, 
parallel edges can also be merged into one edge with the same total capacity.
}%
) is a flow-sparsifier of $G$ with quality $(1+3\veps)b$.
\end{prop}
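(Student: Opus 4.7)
The plan is to prove both halves of $\lambda_G(\dem)\le\lambda_{G'}(\dem)\le(1+3\eps)b\cdot\lambda_G(\dem)$. The lower bound is immediate: any feasible multicommodity flow in $G$ yields one in $G'$, because within-part edges contract to self-loops (which may be deleted) while between-part capacities are preserved.

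The central observation for the upper bound is the LP-duality identity $\lambda_{G'}(\dem)=\lambda^\Pi_G(\dem)$, where $\lambda^\Pi_G$ denotes the optimum of \eqref{eq:dualLP1} in $G$ with the additional constraint that $\dist$ be $\Pi$-respecting. The ``$\le$'' direction projects any $\Pi$-respecting dual $(\len,\dist)$ in $G$ to $G'$ by restricting $\len$ to between-part edges; feasibility of the path constraints in $G'$ follows from the triangle inequality, because each within-part hop traversed when lifting a $G'$-path to $G$ contributes $0$ to the distance. The ``$\ge$'' direction lifts any $(\len',\dist')$ in $G'$ to $G$ by setting $\len_e=0$ on within-part edges and extending $\dist$ to be $0$ within parts and to match $\dist'$ across parts, which is $\Pi$-respecting with the same objective.

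For an arbitrary nonzero $\dem$, set $\lambda:=\lambda_G(\dem)$ and apply the rounding procedure from the proof of Theorem~\ref{thm:DS} to $\lambda\dem$, obtaining $\dem'\in\Ddisc$ with $\dem'\le\lambda\dem$ and
\[
\lambda\dem\le(1+\eps)\dem'+\sum_{\{i,j\}\in\binom{T}{2}}\alpha_{ij}L_{ij}\mathbf{e}_{ij},\qquad \sum_{ij}\alpha_{ij}\le\eps.
\]
By Lemma~\ref{lem:polytope}, $\D(G)$ is convex and down-monotone, so the Minkowski functional $\mu_G:=1/\lambda_G$ is both monotone and sublinear. Using $\mu_G(\lambda\dem)=1$ and $\mu_G(L_{ij}\mathbf{e}_{ij})=1$, applying $\mu_G$ to the displayed inequality yields $1\le(1+\eps)\mu_G(\dem')+\eps$, whence $\lambda_G(\dem')\le(1+\eps)/(1-\eps)\le 1+3\eps$ for $\eps<1/3$. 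The hypothesis then supplies a $\Pi$-respecting dual $(\len^*,\dist^*)$ feasible for $\dem'$ with objective at most $b\cdot\lambda_G(\dem')$; since $\dem\ge\dem'/\lambda$, the rescaled pair $(\lambda\len^*,\lambda\dist^*)$ remains $\Pi$-respecting, is feasible for $\dem$, and has objective at most $\lambda\cdot b\cdot\lambda_G(\dem')\le(1+3\eps)b\cdot\lambda_G(\dem)$. By the identity above, $\lambda_{G'}(\dem)\le(1+3\eps)b\cdot\lambda_G(\dem)$.

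The delicate step I anticipate is establishing the intermediate bound $\lambda_G(\dem')\le 1+O(\eps)$. A priori, the zeroing of small coordinates during the rounding could leave $\dem'$ on a low-dimensional face of $\D(G)$ and allow $\lambda_G(\dem')$ to blow up far beyond $\lambda_G(\dem)$, which would destroy the argument. The sublinearity of $\mu_G$ dissolves this concern, because the ``lost'' mass from zeroing is absorbed into the $L_{ij}\mathbf{e}_{ij}$ terms whose total $\mu_G$-weight is only $\sum_{ij}\alpha_{ij}\le\eps$.
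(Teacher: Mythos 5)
Your proof is correct and follows essentially the same approach as the paper: reduce to a rounded $\dem'\in\Ddisc$, bound $\lambda_G(\dem')\le 1+3\eps$, then invoke the hypothesis together with the LP characterization of $\lambda_{G'}$ via $\Pi$-respecting duals (the paper's LP2). The only cosmetic differences are that you prove the bound $\lambda_G(\dem')\le 1+3\eps$ directly from sublinearity/monotonicity of the Minkowski functional $\mu_G$ rather than by the paper's explicit contradiction with $\dem''>\dem$, and you rescale the dual certificate for $\dem'$ instead of using the one-line monotonicity fact $\dem'\le\dem\Rightarrow\lambda_{G'}(\dem)\le\lambda_{G'}(\dem')$; both are interchangeable with the paper's steps.
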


\begin{myproof}
  The graph $G'$ can equivalently be defined as taking $G$ and adding
  the edges $E_0 \eqdef \cup_{i = 1}^m \binom{S_i}{2}$, each with
  infinite capacity---merging all vertices in each $S_i$ is the same as
  adding these infinite capacity edges. Formally, let $G'=(V,E\cup
  E_0,c')$ where $c'(e)=c(e)$ if $e\in E$ and $c'(e)=\infty$ if $e\in
  E_0$. Then for any $\dem\in\R_+^{\binom{T}{2}}$, it is immediate that
  $\lambda_{G'}(\dem) \ge \lambda_G(\dem)$---every flow that is feasible
  in $G$ is also feasible in $G'$, even without shipping any flow on
  $E_0$.

  For the opposite direction, without loss of generality we may assume
  (by scaling) that $\lambda_G(\dem)=1$.  Let $\dem'\in \Ddisc$ be the
  demand vector obtained from $\dem$ in the construction of $\Ddisc$ (by
  zeroing small coordinates and rounding downwards to the nearest power
  of $(1+\veps)$). Clearly, $\dem'\le \dem$.

  First, we claim that $\lambda_G(\dem') < 1+3\veps$.  Indeed, assume to
  the contrary that $(1+3\veps) \dem'$ is feasible in $G$, i.e.,
  $(1+3\veps)\dem'\in \D(G)$.  Then the demand
  \[ 
  \dem'' 
  \eqdef (1-\veps)(1+3\veps)\dem' 
          + \sum_{i\in \binom{T}{2}} \veps/\tbinom{k}{2}\cdot L_i \mathbf{e}_i,
  \]
  is a convex combination of demands in $\D(G)$, and thus also
  $\dem''\in \D(G)$.  Observe that $\dem''>\dem$ (coordinate-wise),
  because each coordinate of $\dem'$ was obtained from $\dem$ by
  rounding down and possible zeroing (if it is smaller than some
  threshold), but we more than compensate for this when $\dem''$
  is created by multiplying $\dem'$ by $(1-\veps)(1+3\veps)\ge 1+\veps$
  and adding more than the threshold.  By down-monotonicity of $\D(G)$
  we obtain that $\lambda_G(\dem)>1$ in contradiction to our
  assumption, and the claim that $\lambda_G(\dem') < 1+3\veps$ follows.

  To get a handle on the value $\lambda_{G'}(\dem)$, we
  rewrite~(\ref{eq:dualLP1}) for $G' = G + E_0$ to obtain LP \eqref{eq:dualLPstrong}.
\begin{equation} \label{eq:dualLPstrong}
\framebox{ $
\begin{array}{lllll}
  \lambda_{G'}(\dem) = 
  & \min 
  &  \sum_{e\in E} c_e \len_e 
  \\
  & \operatorname{s.t.}
  &  \sum_{\aset{s,t}\in\tbinom{T}{2}} d_{st}\cdot \dist_{st} \ge 1 
  \\
  & &  \len_{e}=0 
  & \forall e\in E_0\eqdef\cup_{i\in[m]} \tbinom{S_i}{2},
  \\
  & &  \dist_{st} \ge \sum_{e\in P} \len_e 
  & \forall \aset{s,t}\in\tbinom{T}{2} \text{ and $s$-$t$ path $P$ on $E\cup E_0$}
  \\
  & & \len_e \ge 0 
  & \forall e\in E
  \\
  & & \dist_{st} \ge 0 
  & \forall \aset{s,t}\in\tbinom{T}{2}.
\end{array}
$ } \tag{LP2}
\end{equation}

By our premise, the demand $\dem' \in \Ddisc$ has a $\Pi$-respecting
feasible solution $\aset{\dist_{st}, \len_e}$ with value at most
$b\cdot\lambda_G(\dem')$; note that such a $\Pi$-respecting distance
function is also a solution to~\eqref{eq:dualLPstrong}.
Hence $\lambda_{G'}(\dem') \leq b\cdot\lambda_G(\dem')$.
Plugging in $\lambda_{G}(\dem') \leq (1+3\veps)$ and the normalization $\lambda_G(\dem)=1$, 
we conclude that $\lambda_{G'}(\dem) \le (1+3\veps)b \lambda_G(\dem)$,
which completes the proof of Proposition \ref{prop:SuffCond2}. 
\end{myproof}

The next proposition is similar in spirit to the previous one,
but with the crucial difference that it allows (or assumes)
a different partition of $V$ for every demand $\dem \in\Ddisc$.
Its proof is a simple application of the Proposition~\ref{prop:SuffCond2}.

\begin{prop} \label{prop:SuffCond3} Let $G=(V,E,c)$ be a $k$-terminal
  network, and fix $0<\veps<1/3$, $b\ge1$, and $m\ge 1$.  Suppose that
  for every $\dem\in\Ddisc$, there is an $m$-way partition $\Pi_{\dem} =
  \{S_1, \ldots, S_{m}\}$ (with some sets $S_j$ potentially empty) 
  and a $\Pi_{\dem}$-respecting distance function that is a feasible
  solution to~(\ref{eq:dualLP1}) with objective value at most
  $b\cdot\lambda_G(\dem)$.
  Then $G$ has a flow-sparsifier $G'$ with quality
  $(1+3\veps)b$, which has 
  $$
  \card{V(G')} \leq m^{\card{\Ddisc}} \leq m^{(\veps^{-1}\log k)^{k^2}}
  $$ 
  vertices.  Moreover, this graph $G'$ is obtained by merging vertices
  in $G$.
\end{prop}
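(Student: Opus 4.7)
The plan is to reduce this statement to Proposition~\ref{prop:SuffCond2} by taking the common refinement of all the per-demand partitions $\Pi_\dem$ and invoking the earlier proposition on this single partition.

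Concretely, enumerate $\Ddisc = \{\dem^{(1)}, \ldots, \dem^{(N)}\}$ where $N \eqdef \card{\Ddisc}$, and for each $t\in[N]$ let $\Pi_{\dem^{(t)}} = \{S^{(t)}_1, \ldots, S^{(t)}_m\}$ (with possibly empty parts) be the given $m$-way partition of $V$. Define $\Pi$ to be the common refinement: each $v\in V$ is labeled by the tuple $(j_1(v), \ldots, j_N(v))\in [m]^N$, where $j_t(v)$ is the index of the part of $\Pi_{\dem^{(t)}}$ containing $v$, and the parts of $\Pi$ are the fibers of this labeling. The number of distinct labels is at most $m^N$, so $\Pi$ has at most $m^{\card{\Ddisc}}\le m^{(\veps^{-1}\log k)^{k^2}}$ parts, using the bound on $\card{\Ddisc}$ from \eqref{eq:DdiscSize}.

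Next, I would verify the hypothesis of Proposition~\ref{prop:SuffCond2} for this common refinement. The key observation is monotonicity of the $\Pi$-respecting property under refinement: if $\dist$ is $\Pi_{\dem^{(t)}}$-respecting, and $\Pi$ refines $\Pi_{\dem^{(t)}}$ (i.e., every part of $\Pi$ is contained in some part of $\Pi_{\dem^{(t)}}$), then whenever $u,v$ lie in the same part of $\Pi$ they also lie in the same part of $\Pi_{\dem^{(t)}}$, and so $\dist_{uv}=0$. Consequently, $\dist$ is also $\Pi$-respecting. Thus for every $\dem\in\Ddisc$, the feasible $\Pi_\dem$-respecting solution to \eqref{eq:dualLP1} with value at most $b\cdot\lambda_G(\dem)$ guaranteed by the hypothesis is also a $\Pi$-respecting feasible solution to \eqref{eq:dualLP1} with the same value.

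Finally, I would apply Proposition~\ref{prop:SuffCond2} to the partition $\Pi$ (whose size plays the role of $m$ there). The resulting graph $G'$ is obtained from $G$ by merging each part of $\Pi$ into a single vertex, is a flow-sparsifier of quality $(1+3\veps)b$, and has $\card{V(G')} \le m^{\card{\Ddisc}} \le m^{(\veps^{-1}\log k)^{k^2}}$ vertices, exactly as claimed. The only real content is the refinement observation in the previous paragraph; everything else is essentially bookkeeping, so I do not anticipate any real obstacle beyond checking that the common refinement $\Pi$ need not be explicitly enumerated to match the size bound.
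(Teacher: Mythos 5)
Your proof is correct and matches the paper's argument: both take the common refinement of the partitions $\{\Pi_\dem\}_{\dem\in\Ddisc}$, note that any $\Pi_\dem$-respecting distance function is automatically $\Pi$-respecting for the refinement $\Pi$, bound the number of parts by $m^{\card{\Ddisc}}$, and invoke Proposition~\ref{prop:SuffCond2}. There is nothing to add.
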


\begin{myproof}
  For every demand $\dem\in\Ddisc$, we know there is an appropriate
  $m$-way partition $\Pi_{\dem}$ of $V$.  Imposing all these partitions
  simultaneously yields a ``refined'' partition $\Pi = \{S'_1, \ldots,
  S'_{m'}\}$ in which the number of parts is $m'\le m^{\card{\Ddisc}}$,
  and two vertices in the same part $S'_j$ of this refined partition if
  and only if they are in the same part of every initial partition.  Now
  apply Proposition~\ref{prop:SuffCond2} using this $m'$-way partition
  (using that any $\Pi_{\dem}$-respecting distance function is also a
  $\Pi$-respecting one), we obtain graph $G'$ that is a flow-sparsifier
  of $G$ and has at most $m'$ vertices.  Finally, we bound
  $\card{\Ddisc}$ using \eqref{eq:DdiscSize}.
\end{myproof}

\subsection{Quasi-Bipartite Graphs via Clumping}
\label{sec:qb-clump}

As a warm-up, we use Proposition~\ref{prop:SuffCond3} to construct a
graph sparsifier for quasi-bipartite graphs with quality $(1+\veps)$,
where the size of the sparsifier is only a function of $k$ and
$\veps$. Recall that a graph $G$ with terminals $T$ is
\emph{quasi-bipartite} if the non-terminals form an independent
set~\cite{RV99}. For this discussion, we assume that the terminals
form an independent set as well, by subdividing every terminal-terminal
edge---hence the graph is just bipartite.

\begin{theorem} \label{thm:bipartite}
Let $G=(V,E,c)$ be a quasi-bipartite $k$-terminal network, 
and let $\veps \in (0,\maxeps)$.
Then $G$ admits a quality $1+\veps$ flow-sparsifier $\hatG$ of size 
$\exp \{ O(k\log\tfrac{k}{\veps} (\tfrac{1}{\veps}\log k)^{k^2}) \}$.
\end{theorem}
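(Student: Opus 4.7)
The plan is to apply Proposition~\ref{prop:SuffCond3}: it suffices to exhibit, for every $\dem\in\Ddisc$, a partition $\Pi_{\dem}$ of $V$ into at most $m=\exp(O(k\log(k/\veps)))$ parts, together with a $\Pi_{\dem}$-respecting semi-metric on $V$ that is feasible for~\eqref{eq:dualLP1} with objective at most $(1+\veps)\lambda_G(\dem)$. Substituting this $m$ together with the bound $\card{\Ddisc}\le(\veps^{-1}\log k)^{k^2}$ from~\eqref{eq:DdiscSize} into Proposition~\ref{prop:SuffCond3} will then yield the stated sparsifier size.

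For a fixed $\dem$, I would begin by subdividing any terminal-terminal edges so that the graph becomes truly bipartite, then compute an optimal solution $(\ell^*,\delta^*)$ of~\eqref{eq:dualLP1} and extend $\delta^*$ to the shortest-path semi-metric on all of $V$. Each non-terminal $v$ then carries a ``position vector'' $p^*(v)\eqdef (\ell^*_{vt})_{t\in T}\in\R_+^k$; its LP contribution is $\sum_t c_{vt}\,p^*(v)_t$, and $\delta^*_{st}=\min_{v\notin T}\bigl(p^*(v)_s+p^*(v)_t\bigr)$. I form a discretized signature $\hat p(v)$ by (i) zeroing out coordinates that are smaller than an $\veps/\poly(k)$ fraction of $M_v\eqdef \max_t p^*(v)_t$, (ii) rounding each remaining coordinate up to the nearest power of $1+\veps$, and (iii) bucketing the scale $M_v$ itself into geometric intervals of ratio $1+\veps$ inside an appropriate range. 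The partition $\Pi_{\dem}$ will place each terminal in its own singleton class and group non-terminals sharing the same signature $\hat p$.

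Given $\Pi_{\dem}$, I would then define the candidate semi-metric $\hat\delta$ on $V$ by $\hat\delta_{uv}\eqdef 0$ for $u,v$ in a common class, $\hat\delta_{vt}\eqdef \hat p(v)_t$ for non-terminals $v$ and terminals $t$, and $\hat\delta_{st}\eqdef \min_{v\notin T}\bigl(\hat p(v)_s+\hat p(v)_t\bigr)$ for terminal pairs. This is $\Pi_{\dem}$-respecting by construction; setting $\ell_e\eqdef \hat\delta_e$ on each edge, the path constraints of~\eqref{eq:dualLP1} reduce to the triangle inequality on the semi-metric. Because coordinates were only rounded up (and truncations only affect coordinates that are already an $O(\veps)$ fraction of the local maximum), $\hat\delta_{st}\ge (1-O(\veps))\delta^*_{st}$, so after a final $1+O(\veps)$ rescaling the budget constraint $\sum d_{st}\hat\delta_{st}\ge 1$ remains satisfied, while the objective $\sum_e c_e\hat\delta_e=\sum_{v\notin T}\sum_t c_{vt}\hat p(v)_t$ is at most $(1+O(\veps))\lambda_G(\dem)$, as required.

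The hard step will be to bound the number of distinct signatures by $m=\exp(O(k\log(k/\veps)))$. Each signature is a $k$-tuple whose nonzero entries are powers of $1+\veps$, so after normalizing by $M_v$ the ``shape'' contributes only $\bigl(O(\veps^{-1}\log(k/\veps))\bigr)^k=(k/\veps)^{O(k)}$ possibilities. The subtler issue is that the scales $M_v$ a priori span many orders of magnitude: I would argue that for a fixed $\dem$, any scale far below the overall maximum contributes negligibly to both the LP objective and to any relevant two-hop estimate $\hat p(v)_s+\hat p(v)_t$, so all such non-terminals can be merged into a single ``negligible'' class, paying a loss absorbed by the $1+\veps$ slack and by the truncation already baked into $\Ddisc$. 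Making this scale-reduction precise---while simultaneously ensuring feasibility of $\hat\delta$---is the technical core of the argument; once achieved, Proposition~\ref{prop:SuffCond3} completes the proof.
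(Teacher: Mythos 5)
Your proposal follows the same high-level plan as the paper: apply Proposition~\ref{prop:SuffCond3}, take an optimal dual solution $(\ell^*,\delta^*)$ for each $\dem\in\Ddisc$, round the edge lengths incident to each non-terminal to a small discrete set, and bucket non-terminals by their rounded ``profile'' vectors. The shape count $(k/\veps)^{O(k)}$ and the final substitution into Proposition~\ref{prop:SuffCond3} are exactly right. But there is a genuine gap in the step you flag as the technical core, and I think the proposed patch does not work.

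The problem is that you normalize each non-terminal's profile by its own scale $M_v=\max_t\ell^*_{vt}$, which leaves you with an unbounded family of scales to bucket. Your fix --- merge all non-terminals whose $M_v$ is far below the overall maximum into a single ``negligible'' class --- relies on the claim that such $v$ contribute negligibly to the LP objective and to the relevant two-hop estimates. Neither holds in general. A non-terminal $v$ with tiny $\ell^*_{sv},\ell^*_{vt}$ can still dominate the objective if its incident capacities $c_{sv},c_{vt}$ are huge, since the objective term is $\sum_t c_{vt}\ell^*_{vt}$ and the $c$'s are not controlled by $M_v$. Moreover, merging two ``small-scale'' non-terminals $u,w$ whose nonzero coordinates live on disjoint terminals (say $u\sim s$, $w\sim t$) creates a zero-length $u$--$w$ hop and hence a short $s$--$u$--$w$--$t$ detour, collapsing $\hat\delta_{st}$ for terminal pairs $(s,t)$ whose true $\delta^*_{st}$ is not small at all. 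This loss is not bounded by $1+O(\veps)$ and cannot be absorbed by rescaling, nor by the truncation in $\Ddisc$ (which controls demands, not dual lengths).

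The paper avoids the scale issue entirely by choosing an \emph{absolute} discretization rather than a per-vertex one. Each edge length $\ell^*_{vt}$ is rounded down to the nearest value in $\Gamma^\veps$, where $\Gamma^\veps$ contains $0$ together with all powers of $1+\veps$ lying in $\Gamma=\{0\}\cup\bigcup_{s,t}[\veps d_{st},d_{st}]$. Because this set is anchored to the $\binom{k}{2}$ demand values $d_{st}$ of the fixed $\dem$, its size is $O\bigl(k^2\cdot\tfrac1\veps\log\tfrac1\veps\bigr)$ regardless of how the $M_v$'s are distributed, giving $m\le(|\Gamma^\veps|+1)^k=\exp\{O(k\log(k/\veps))\}$ with no scale bucketing at all. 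Claim~\ref{clm:qb-struct} then shows that this rounding preserves the two-hop constraints $\hat\ell_{sv}+\hat\ell_{vt}\ge\frac{(1-\veps)^2}{1+\veps}\delta_{st}$, because whenever the larger of the two lengths is at least $(1-\veps)^2\delta_{st}$ it survives the rounding (the value lands in or above some interval $[\veps d_{s't'},d_{s't'}]$), and otherwise the smaller one is already above the lower cutoff $\veps d_{st}$. I'd encourage you to replace the per-vertex normalization with a discretization anchored to the $\binom{k}{2}$ pairwise quantities; once you do that, the rest of your outline (singleton terminal classes, bucket by profile, scale the rounded lengths up by $\tfrac{1+\veps}{(1-\veps)^2}$ to restore feasibility) goes through essentially as you wrote it.
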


\begin{myproof}
  To apply Proposition \ref{prop:SuffCond3}, the main challenge is to
  bound $m$, the number of parts in the partition.  To this end, fix a
  demand $\dem\in\Ddisc\subseteq\D(G)$, and let $\aset{\len_{e},
    \dist_{ij}}$ be an optimal solution for the linear program
  \eqref{eq:dualLP1}, hence its value is $\sum_{e\in E}
  c_e\len_e=\lambda_G(\dem)$.  We will modify this solution into another
  one, $\aset{\len'_{e}, \dist'_{ij}}$, that satisfies the desired conditions.
  This modification will be carried out in a couple steps, where we
  initially work only with lengths $\len_{uv}$ of edges $(u,v)\in E$,
  and eventually let $\dist'$ be the metric induced by shortest-path
  distances.

  For every $s,t\in T$ define the interval $\Gamma_{st}\eqdef[\veps d_{st},d_{st}]$, and let $\Gamma=\aset{0}\cup \Big(\cup_{s,t\in T}
  \Gamma_{st}\Big)$, and let $\Gamma^\veps$ contain $0$ and all powers of
  $(1+ \veps)$ that lie in $\Gamma$. The following claim provides
  structural information about a ``nice'' near-optimal solution.

  \begin{claim}
    \label{clm:qb-struct}
    Fix a non-terminal $v\in V\setminus T$, Then the edges between $v$
    and its neighbors set $N(v)\subset T$ (in $G$) admit edge lengths
    $\aset{\widehat \len_{vt}: t\in N(v)}$ that
    \begin{itemize} \compactify
    \item are dominated by $l$ (namely, $\forall t\in N(v),\ \widehat
      \len_{vt} \leq \len_{vt}$);
    \item use values only from $\Gamma^\veps$ (namely, $\forall t\in N(v),\
      \widehat \len_{vt}\in \Gamma^\veps$); and
    \item satisfy $\frac{1+\veps}{(1-\veps)^2}$-relaxed shortest distance constraints
      (namely, $\forall s,t\in T,\ \widehat \len_{sv}+\widehat \len_{vt}
      \ge \frac{(1-\veps)^2}{1+ \veps} \dist_{st}$).
\end{itemize}
\end{claim}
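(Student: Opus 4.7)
My approach is to construct $\widehat\ell$ from $\ell$ by a single rounding step: for each $t\in N(v)$, set $\widehat\ell_{vt}$ to the largest value in $\Gamma^\eps$ that does not exceed $\ell_{vt}$, defaulting to $0$ when no positive such value exists. Bullets (i) ``$\widehat\ell_{vt}\le\ell_{vt}$'' and (ii) ``$\widehat\ell_{vt}\in\Gamma^\eps$'' are then immediate from the construction, so the content of the claim is concentrated in (iii), the approximate triangle inequality through $v$.

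To establish (iii), fix $s,t\in N(v)\cap T$ and note that the LP path constraint gives $\delta_{st}\le\ell_{sv}+\ell_{vt}$, since $s{-}v{-}t$ is a path in the bipartite graph $G$. I case-analyze on how many of the two edges $(s,v),(v,t)$ survive the rounding. If neither is zeroed, each rounded length exceeds its pre-image by at most a factor $(1+\eps)$, so $\widehat\ell_{sv}+\widehat\ell_{vt}\ge(\ell_{sv}+\ell_{vt})/(1+\eps)\ge\delta_{st}/(1+\eps)$, already better than the claimed bound. If exactly one side (say $(v,t)$) is zeroed, then $\ell_{vt}\le\alpha_v$ where $\alpha_v\eqdef\min(\Gamma^\eps\setminus\{0\})$ is the zeroing threshold; using $\ell_{sv}\ge\delta_{st}-\ell_{vt}$ and the rounding on the surviving edge gives $\widehat\ell_{sv}\ge(\delta_{st}-\alpha_v)/(1+\eps)$, which meets the target $(1-\eps)^2\delta_{st}/(1+\eps)$ exactly when $\alpha_v\le(2\eps-\eps^2)\delta_{st}$. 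If both edges are zeroed, then $\delta_{st}\le\ell_{sv}+\ell_{vt}\le 2\alpha_v$, which combined with the same threshold bound forces $\delta_{st}\le 4\eps\delta_{st}$ --- impossible for $\eps<\maxeps=\tfrac18$ unless $\delta_{st}=0$, in which case the inequality is vacuous.

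The step I expect to be the main obstacle is therefore the quantitative input to the case analysis: verifying that $\alpha_v\le(2\eps-\eps^2)\delta_{st}$ for every relevant terminal pair, together with the companion fact that the rounding step is clean --- i.e., that $\ell_{vt}$ does not lie in a ``gap'' between consecutive intervals $[\eps d_{s't'},d_{s't'}]$ comprising $\Gamma$, which would cause rounding to lose more than a $(1+\eps)$ factor. Both are handled by a preliminary normalization: scaling the optimal dual solution so that $\lambda_G(\dem)=1$ forces $\sum_{ij}d_{ij}\delta_{ij}=1$, tying the scale of $\delta$ to that of $d$ through duality; combined with $\dem\in\Ddisc$ (whose coordinates are powers of $(1+\eps)$ inside $[\eps L_{ij}/k^2,L_{ij}]$), this yields a lower bound on $\min_{s,t}\delta_{st}$ in terms of $\min_{s,t:d_{st}>0}d_{st}$ sufficient to dominate $\alpha_v$. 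A clean-up reduction also helps: by truncating $\ell_{vt}$ to $\max_{s,t}d_{st}$ whenever it exceeds this value --- which does not shorten any terminal-pair shortest path, and hence preserves every $\delta_{s't'}$ --- we ensure that nonzero rounded values land inside $\Gamma$ rather than above it, and that the remaining rounding step only ever loses a $(1+\eps)$ factor per edge.
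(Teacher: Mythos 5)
Your construction of $\widehat\len$---round each $\len_{vt}$ down to the nearest value in $\Gamma^\veps$---coincides with the paper's, and (i)--(ii) are indeed immediate. The genuine gap is in your Case~0, where you assert that whenever the rounding returns a nonzero value it loses at most a factor $(1+\veps)$, i.e.\ $\widehat\len_{vt}\ge\len_{vt}/(1+\veps)$. This fails whenever $\len_{vt}$ lies in a ``hole'' between two constituent intervals of $\Gamma$, not merely above all of them. Concretely, if two of the intervals are $[\veps a, a]$ and $[\veps b, b]$ with $a<\veps b$, then any $\len_{vt}\in(a,\veps b)$ rounds all the way down to (about) $a$, which is an arbitrarily large loss. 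Your proposed remedy---truncating $\len_{vt}$ at the top of the range---only rules out lengths above the largest interval; it does nothing for interior holes, so Case~0 stands as a false step.

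The paper's case split sidesteps this by conditioning on the size of $\max(\len_{sv},\len_{vt})$ relative to the specific target $\dist_{st}$, rather than on whether the rounding zeroes out. If (WLOG) $\len_{vt}\ge(1-\veps)^2\dist_{st}$, then the very interval $[\veps\dist_{st},\dist_{st}]$ indexed by the pair under consideration supplies a power of $(1+\veps)$ that is at most $\len_{vt}$ and at least $\tfrac{(1-\veps)^2}{1+\veps}\dist_{st}$, so $\widehat\len_{vt}$ alone meets the bound no matter how far past that interval $\len_{vt}$ lies. Otherwise both $\len_{sv}$ and $\len_{vt}$ are forced into $[\veps\dist_{st},\dist_{st}]$, where rounding really is a clean $(1+\veps)$ loss. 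The interval associated with the pair $(s,t)$ you are actually trying to certify is what guarantees the rounding lands somewhere useful; your case analysis never invokes it, which is why you end up relying on a global ``no holes'' claim that is simply false. One further remark: for the argument to be consistent, the intervals $\Gamma_{st}$ must live on the scale of the distances $\dist_{st}$ (matching the claim's conclusion), not the demands $d_{st}$---these are dual quantities on inverse scales. Your duality-based attempt to lower-bound $\min_{s,t}\dist_{st}$ via $\sum_{ij}d_{ij}\dist_{ij}=1$ would not go through in any case, since that single constraint leaves individual $\dist_{st}$ free to be arbitrarily small or zero regardless of the $d$'s; once $\Gamma_{st}$ is read as $[\veps\dist_{st},\dist_{st}]$, the threshold bound you need is immediate and no such detour is required.
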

\begin{myproof}
  Let every edge length $\widehat \len_{vt}$ be defined as $\len_{vt}$
  rounded down to its nearest value from $\Gamma^\veps$.  The first two
  claimed properties then hold by construction.  For the third property,
  recall that $l$ is a feasible LP solution, thus $\len_{sv}+\len_{vt}
  \ge d_{st}$.  Assume without loss of generality that $\len_{sv} \leq
  \len_{vt}$.  If the large one $\len_{vt} \ge (1-\veps)^2 d_{st}$, the
  claim follows because also $\widehat \len_{vt} \ge
  \frac{(1-\veps)^2}{1+\veps} d_{st}$, regardless of whether $\len_{vt}$
  is smaller or bigger than $d_{st}$, where the extra term of
  $(1+\veps)$ comes from rounding down to the nearest power of
  $(1+\veps)$.  Otherwise, $\len_{vt} < (1 - \veps)^2 d_{st}$ hence the
  smaller one $\len_{sv} \ge \veps(2 - \veps) d_{st}$, and rounding down
  ensures $\widehat \len_{sv} \ge \frac{\veps(2 - \veps)}{1 + \veps}
  d_{st}$. The fact that $\veps < \maxeps$ means ${\veps(2 - \veps)}{1 +
    \veps} \geq 1$, so rounding down does not zero out $\widehat
  \len_{sv}$.  We conclude that the new lengths are at least
  $\frac{1}{1+\veps}$ times the old ones, namely $\widehat
  \len_{sv} \geq \frac{\len_{sv}}{1+\veps}$ and $\widehat \len_{vt} \geq
  \frac{\len_{vt}}{1+\veps}$.
  The claim follows.
\end{myproof}

We proceed with the proof of Theorem \ref{thm:bipartite}. Define new
edge lengths $\aset{\len'_e: e\in E}$ by applying the claim and scaling
edge lengths by $\frac{1+\veps}{(1 - \veps)^2}$, namely for every $v\in
V\setminus T$, and an adjacent $t\in T$, set $\len'_{vt}\eqdef
\frac{1+\veps}{(1 - \veps)^2}\, \widehat \len_{vt}$. This scaling and
the third property of Claim~\ref{clm:qb-struct} ensures that the
shortest-path distances (using edge
lengths $\len'_e$) between each pair of vertices $i,j$  is at least
$\dist_{ij}$.

Now partition the non-terminals into buckets, where two non-terminals
$u,v\in V\setminus T$ are in the same bucket if they ``agree'' about
each of their neighbors $t\in T$: either (i) they are both non-adjacent
to $t$, or (ii) they are both adjacent to $t$ and $\len'_{ut} =
\len'_{vt}$. Observe that this bucketing is indeed a well-defined
equivalence relation. Now for every $u,v\in V\setminus T$ that are the
same bucket, add an edge of length $\len'_{uv}=0$, and let $E'$ denote
this set of new edges. Let $\dist'$ be the shortest-path distances
according to these new edge-lengths. Observe that the shortest-path
distances between the terminals are unchanged by the addition of these
new zero-length edges, even though the distances between some
non-terminals have obviously changed. Hence $(\len'_e, \dist'_{ij})$ is
a feasible solution to~(\ref{eq:dualLP1}), with objective function value
at most $\frac{1+\veps}{(1 - \veps)^2} \lambda_G(\dem) \leq (1+5\veps) \lambda_G(\dem)$.

We define the equivalence classes of the bucketing above as the sets
$S_i$ in Proposition~\ref{prop:SuffCond3}.  Each bucket corresponds to a
``profile'' vector with $k$ coordinates that represent the lengths of
$k$ edges going to the $k$ terminals, if at all there is an edge to the
terminals.  Each coordinate of this profile vector is an element of
$\Gamma^\veps$ or it represents the corresponding edge does not
exist. It follows that the number of buckets (or profile vectors) is $ m
\leq \Big(\binom{k}{2} (\log_{1+\veps}\tfrac1\veps +3) \Big)^k \leq
(O(\tfrac{k^2}{\veps} \log\tfrac{1}{\veps}))^k \leq
(O(\tfrac{k}{\veps}))^{2k} \leq \exp \{ {O(k\log(k/\veps))} \} $.
The theorem follows by applying Proposition \ref{prop:SuffCond3},
which asserts the existence of a flow-sparsifier with 
$ m^{(\veps^{-1}\log k)^{k^2}} 
  \leq \exp \{ O(k\log\tfrac{k}{\veps} (\tfrac{1}{\veps}\log k)^{k^2}) \} 
$
vertices.
\end{myproof}

\section{The Splicing and Composition Techniques}
\label{sec:splicing-lemmas}

We say that a path is \emph{terminal-free} if all its internal vertices
are non-terminals. 
This terminology shall be used mostly for flow paths,
in which the paths' endpoints are certainly terminals.
The lemma below refers to two \emph{different} methods of routing 
a demand $\dem$ in a network $G$. 
The first method is the usual (and default) meaning,
where the demand is routed along arbitrary flow paths.
The second method is to route the demand along terminal-free flow paths,
and we will say this explicitly whenever we refer to this method.
We use a parameter $\rho\ge1$ to achieve greater generality, 
although the case $\rho=1$ conveys the main idea.

\begin{lemma}[Splicing Lemma]
  \label{lem:splice}
  Let $G_a$ and $G_b$ be two networks having the same set of terminals
  $T$, and fix $\rho\ge 1$. Suppose that whenever a demand $\dem$
  between terminals in $T$ can be routed in $G_a$ \emph{using terminal-free
  flow paths}, demand $\dem/\rho$ can be routed in $G_b$ (by arbitrary flow
  paths). Then for every demand $\dem$ between terminals in $T$ that can
  be routed in $G_a$, demand $\dem/\rho$ can be routed in $G_b$.
\end{lemma}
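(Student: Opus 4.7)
The plan is to reduce an arbitrary routing in $G_a$ to a terminal-free one by slicing each flow path at its internal terminals, apply the hypothesis to route the sliced demand in $G_b$, and then stitch those $G_b$-routings back together along the slice pattern to recover a routing of $\dem/\rho$.

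\textbf{Step 1 (slicing in $G_a$).} I would start with an arbitrary path-decomposition of a feasible multicommodity flow for $\dem$ in $G_a$: paths $P$ with values $f(P)\ge0$ such that $\sum_P f(P)\chi_P=\dem$ and $\sum_{P\ni e} f(P)\le c_e$ for every $e\in E(G_a)$. For each such $P$, list the terminals it visits in order as $s=u_0^P,u_1^P,\ldots,u_{\len(P)}^P=t$, which splits $P$ into terminal-free sub-paths $P_i$ from $u_{i-1}^P$ to $u_i^P$. Assigning value $f(P)$ to each $P_i$ produces a terminal-free path-flow in $G_a$ that uses each edge exactly as often as the original, so capacities are respected. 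This terminal-free routing realizes the new demand
\[
  \dem^{\star} \eqdef \sum_P \sum_{i=1}^{\len(P)} f(P)\cdot \mathbf{e}_{u_{i-1}^P u_i^P}.
\]

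\textbf{Step 2 (applying the hypothesis).} Since $\dem^{\star}$ is routable in $G_a$ using only terminal-free paths, the hypothesis yields a feasible multicommodity flow in $G_b$ realizing $\dem^{\star}/\rho$. I would path-decompose this $G_b$-flow: for every pair $\aset{u,u'}$ of terminals it provides paths $Q$ from $u$ to $u'$ with values $g(Q)\ge 0$ whose sum is $\dem^{\star}_{uu'}/\rho$, and with $\sum_{Q\ni e} g(Q)\le c'_e$ for every $e\in E(G_b)$.

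\textbf{Step 3 (stitching in $G_b$).} Now I would reassign each unit of $G_b$-flow to an original commodity by following the slice pattern. For each $G_a$-path $P$ with sequence $u_0^P,\ldots,u_{\len(P)}^P$, allocate $f(P)/\rho$ out of the total $\dem^{\star}_{u_{i-1}^P u_i^P}/\rho$ units of $G_b$-flow between $u_{i-1}^P$ and $u_i^P$, for each $i$. Such an allocation exists because the needs sum precisely to the supply: $\dem^{\star}_{uu'}/\rho=\sum_{P,i:\ (u_{i-1}^P,u_i^P)=(u,u')} f(P)/\rho$. Concatenating the $G_b$ paths allocated to $P$ yields a walk from $s$ to $t$ carrying $f(P)/\rho$ units of the $(s,t)$-commodity (any cycles can be canceled, only decreasing edge usage). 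Summing over all $P$ with endpoints $\aset{s,t}$ delivers $d_{st}/\rho$ units for each commodity, i.e.\ the demand $\dem/\rho$.

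\textbf{Step 4 (capacity check).} The key observation is that the stitching is just a relabeling of the same $G_b$-flow by commodity identity: every unit of flow through an edge $e\in E(G_b)$ in the spliced routing was already present in the $G_b$-routing of $\dem^{\star}/\rho$, counted once. Hence the per-edge load in $G_b$ is unchanged and capacities remain respected. The main obstacle I anticipate is purely notational---carefully tracking the bookkeeping between path decompositions in $G_a$ and $G_b$---since once the concatenation is set up, the flow-conservation and capacity constraints follow automatically from the observation that we are not creating any new $G_b$-flow, only reattributing existing units to longer source-sink routes.
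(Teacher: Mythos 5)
Your proposal is correct and follows essentially the same approach as the paper: decompose the $G_a$-flow, slice each flow path at its internal terminals to obtain a terminal-free routing of an intermediate demand $\dem^\star$, apply the hypothesis to get a $G_b$-routing of $\dem^\star/\rho$, and stitch that routing back into $\dem/\rho$ by reattributing (concatenating) the per-segment flows without changing any edge load. The only difference is presentational: the paper peels off one internal terminal from one path at a time and closes the argument by induction on the total number of internal terminal-occurrences, whereas you perform all the slicing in one shot and then do a single global reallocation/concatenation step.
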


\begin{myproof} 
  Consider a demand $\dem$ that is routed in $G_a$ using flow $f^*$,
  and let us show that it can be routed also in $G_b$. 
  Fix for $f^*$ a flow decomposition $D = \{ (P_1, \phi(P_1)), (P_2,
  \phi(P_2)), \ldots\}$ for it, where each $P_l$ is a terminal-to-terminal 
  path, and $\phi(P_l)$ is the amount of flow sent on this path. 
  A flow decomposition also specifies the demand vector since
  $d_{st} = \sum_{\text{$st$-paths $P \in D$}} \phi(P)$. 
  If all the paths $(P,\phi)\in D$ are terminal-free, then we know by 
  the assumption of the lemma that demand $\dem/\rho$ can be routed in $G_b$. 
  Else, take a path $(P,\phi)\in D$ that contains internally some terminal---say 
  $P$ routes flow between terminals $t', t''$ and uses another terminal $s$ 
  internally. 
  We may assume without loss of generality that the flow paths are simple, 
  so $s\not\in \{t',t''\}$. 
  We replace the flow $(P, \phi(P))$ in $\dem$ by the two paths 
  $(P[t',s], \phi)$ and $(P[s,t''],\phi)$ to get a new flow decomposition $D'$, 
  and denote the corresponding demand vector by $\dem'$. 
  Note that $d'_{t',t''} = d_{t',t''} - \phi$, 
  whereas $d'_{t',s} = d_{t',s} + \phi$ and the same for $d'_{s,t''}$. 
  Moreover, if $\dem'/\rho$ can be routed on some graph $G_b$ with an arbitrary 
  routing, we can connect together $\phi/\rho$ amount of the flow from $t'$ to $s$ 
  with $\phi/\rho$ flow from $s$ to $t''$ to get a feasible routing for 
  $\dem/\rho$ in $G_b$. 
  Moreover the total number of terminals occurring internally on paths in 
  the flow decomposition $D'$ is less than that in $\dem$, 
  so the proof follows by a simple induction.
\end{myproof}

The next lemma addresses the case where our network 
can be described as the gluing of two networks $G_1$ and $G_2$,
and we already have sparsifiers for $G_1$ and $G_2$;
in this case, we can simply glue together the two sparsifiers, 
provided that the vertices at the gluing locations are themselves terminals.
Formally, let $G_1$ and $G_2$ be networks on disjoint sets of vertices, having
terminal sets $T_1 = \{s_1, s_2, \ldots, s_a\}$ and $T_2 = \{t_1, t_2,
\ldots, t_b\}$ respectively. Given a bijection $\phi \eqdef \{ s_1
\leftrightarrow t_1, \ldots, s_c \leftrightarrow t_c\}$ between some
subset of $T_1$ and $T_2$, the \emph{$\phi$-merge} of $G_1$ and $G_2$
(denoted $G_1 \oplus_\phi G_2$) is the graph formed by identifying
the vertices $s_i$ and $t_i$ for all $i \in [c]$. Note that the set of
terminals in $G$ is $T := T_1 \cup \{ t_{c+1}, \ldots, t_b\}$.

\begin{lemma}[Composition Lemma]
  \label{lem:compose}
  Suppose $G = G_1 \oplus_\phi G_2$. For $j \in \{1,2\}$, let $G'_j$ be
  a flow-sparsifier for $G_j$ with quality $\rho_j$. Then the graph $G'
  = G_1' \oplus_\phi G_2'$ is a quality $\maxx{\rho_1, \rho_2}$ flow
  sparsifier for $G$.
\end{lemma}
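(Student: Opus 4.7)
The plan is to prove both required containments $\D(G) \subseteq \D(G')$ and $\D(G') \subseteq \max\{\rho_1,\rho_2\}\cdot \D(G)$ via the same ``split and reglue'' argument; the only new tool we need beyond the hypothesis is the splitting-at-terminals trick already used in the proof of the Splicing Lemma.

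For the first containment, I would start with an arbitrary $\dem \in \D(G)$ and fix a path-flow decomposition that realizes $\dem$ in $G = G_1 \oplus_\phi G_2$. Every flow path may alternate between the $G_1$-side and the $G_2$-side, but each switch must occur at a vertex identified by $\phi$, i.e.\ at a merged (and hence shared) terminal. Iteratively splitting each path at every such internal occurrence (exactly the operation used in Lemma~\ref{lem:splice}) yields flow path pieces that live entirely in $V(G_1)$ or entirely in $V(G_2)$. Collecting the pieces on each side gives sub-flows $f_1$ in $G_1$ and $f_2$ in $G_2$ routing demand vectors $\dem_1$ on $T_1$ and $\dem_2$ on $T_2$. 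Hence $\dem_j \in \D(G_j) \subseteq \D(G'_j)$, so there is a flow $g_j$ routing $\dem_j$ in $G'_j$. Because $G'_1$ and $G'_2$ share only the merged terminal vertices and have disjoint edge sets inside $G'$, the union $g_1 \cup g_2$ is automatically a feasible flow in $G'$. Finally, at each shared terminal $s$, the ``fake'' demand endpoints created when we split a path through $s$ come in matched pairs (one piece ending at $s$ from one side, one piece starting at $s$ on the other), so combining $g_1$ and $g_2$ reassembles these into flow passing through $s$; the external demand realized in $G'$ is therefore exactly $\dem$. This shows $\dem \in \D(G')$.

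For the reverse containment I would run the same argument in the opposite direction. Take $\dem \in \D(G')$ with a realizing flow in $G'$, split at shared terminals, and obtain demand vectors $\dem'_j$ on $T_j$ routable in $G'_j$. The quality guarantees give $\dem'_j \in \D(G'_j) \subseteq \rho_j \cdot \D(G_j)$, so $\dem'_j/\rho_j \in \D(G_j)$, and by the down-monotonicity of $\D(G_j)$ established in Lemma~\ref{lem:polytope}, also $\dem'_j/\max\{\rho_1,\rho_2\} \in \D(G_j)$. Routing these scaled-down demands in $G_1$ and $G_2$ respectively and regluing at the shared terminals as above produces a feasible flow in $G$ whose realized external demand is exactly $\dem/\max\{\rho_1,\rho_2\}$.

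The only genuine verification, and the step I expect to need the most care, is the bookkeeping at shared terminals: one must check that splitting a path-decomposition and then reassembling a pair of flows (one on each side) really induces the original external demand and does not create or destroy flow at the interface. This is exactly the inductive step in the proof of the Splicing Lemma, and it goes through here because the splitting and regluing operations are inverses of each other at the level of path decompositions, and because the two networks $G'_1, G'_2$ contribute independent, edge-disjoint capacity to $G'$ (so no capacity constraint is violated when the two routings are superposed).
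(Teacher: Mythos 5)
Your proof is correct and follows essentially the same strategy as the paper's. The one stylistic difference is that the paper applies the Splicing Lemma cleanly as a black box: it first considers only demands routable in $G$ by terminal-free paths (for which each flow path automatically lies entirely inside one of $G_1,G_2$, so no splitting or bookkeeping is needed), deduces the containment for those, and then invokes Lemma~\ref{lem:splice} once at the end to extend to all demands. You instead unroll the Splicing Lemma's split-and-reglue induction inline, which is logically equivalent but forces you to carry out the matched-pair bookkeeping at shared terminals explicitly --- exactly the step you flag as needing the most care. The paper's ordering avoids that bookkeeping altogether, which is why its proof reads shorter; your invocation of down-monotonicity (Lemma~\ref{lem:polytope}) to pass from $\dem'_j/\rho_j$ to $\dem'_j/\max\{\rho_1,\rho_2\}$ is a small piece of reasoning the paper leaves implicit but is a good point to make explicit.
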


\begin{myproof}
  Consider a demand $\dem$ that is routable in $G$ using flow paths
  that do not have internal terminals. Since $G$ is formed by gluing
  $G_1$ and $G_2$ at terminals, this means each of the flow paths lies
  entirely within $G_1$ or $G_2$. 
  We can write $\dem=\dem_1+\dem_2$, where each $\dem_j$ is the demand 
  being routed on the flow paths among these that lie within $G_j$. 
  By the definition of flow-sparsifiers, these demands are also routable 
  in $G_1', G_2'$ respectively, and hence demand $\dem_1+\dem_2=\dem$ is
  routable in $G'$ (in fact by paths that lie entirely within $G_1$ or $G_2$). 
  Applying the Splicing Lemma (with $\rho=1$), we get that every demand 
  $\dem$ routable in $G$ is routable also in $G'$.

  The argument in the other direction is similar. Assume $\dem$ is
  routable in $G'$ using terminal-free flow paths; then we get two
  demands $\dem_1, \dem_2$ routable entirely in $G_1', G_2'$ respectively. 
  Scaling these demands down by $\maxx{\rho_1, \rho_2}$, they can be routed 
  in $G_1, G_2$ respectively, and hence we can route their sum 
  $(\dem_1+\dem_2)/\maxx{\rho_1,\rho_2}$ in $G$. 
  Applying the Splicing Lemma with $\rho=\maxx{\rho_1,\rho_2}$, we get
  a similar conclusion for all demands routable in $G'$ (on arbitrary 
  flow paths), and this completes the proof.
\end{myproof}

\paragraph{Applications of Splicing/Composition.}
The Splicing and Composition Lemmas will be useful in many of our
arguments: we use them to show a singly-exponential bound for
quasi-bipartite graphs in
Section~\ref{sec:qbip-via-splice} below,
in the sampling approach for quasi-bipartite graphs in
Section \ref{sec:sampling}, and also in constructing flow-sparsifiers
for series parallel and bounded treewidth graphs in
Section~\ref{sec:flowcuts}.

\subsection{Quasi-Bipartite Graphs via Splicing}
\label{sec:qbip-via-splice}

We show how to use Splicing Lemma \ref{lem:splice} to construct a flow
sparsifier for the quasi-bipartite graph of size $(1/\eps)^{\tilde
  O(k)}$.

\begin{theorem}
Let $G=(V,E,c)$ be a quasi-bipartite $k$-terminal network, 
and let $\veps \in (0,\maxeps)$.
Then $G$ admits a quality $1+\veps$ flow-sparsifier $\hatG$ of size 
$(1/\eps)^{\tilde O(k)}$.
\end{theorem}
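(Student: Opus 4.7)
The plan is to apply the Splicing Lemma to reduce to preserving $2$-hop flows, then bucket non-terminals by their capacity profile and merge within buckets. First subdivide every terminal-terminal edge of $G$ with a new non-terminal of equal capacity: this leaves $\D(G)$ unchanged and ensures both $T$ and $V\setminus T$ are independent, so every terminal-free path has length exactly two and takes the form $t\text{-}v\text{-}t'$. By Lemma~\ref{lem:splice} (applied once with $\rho=1$ and once with $\rho=1+\eps$), it suffices to build $\hatG$ whose polytope of $2$-hop flows matches that of $G$ up to a factor $1+\eps$. Since a single non-terminal $v$ contributes only the ``star'' polytope $\{f^v:\sum_{t'}f^v_{t,t'}\le c_{vt}\}$, the entire $2$-hop polytope is determined by the capacity vectors $(c_{vt})_{t\in T}$ across non-terminals $v$.

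For each non-terminal $v$, let $M_v \eqdef \max_t c_{vt}$ and define the normalized profile $q_v \eqdef (c_{vt}/M_v)_{t\in T}\in [0,1]^T$. Round each coordinate $q_{vt}$ down to the nearest element of $\{0\}\cup\{(1+\eps)^{-i}:0\le i\le \lceil \log_{1+\eps}(k^2/\eps)\rceil\}$, with coordinates below $\eps/k^2$ mapped to a special symbol $\star$; the resulting discretized profile $\pi(v)$ lies in an alphabet of size $O(\tfrac{1}{\eps}\log(k/\eps))$ per coordinate, yielding at most $\bigl(O(\tfrac{1}{\eps}\log(k/\eps))\bigr)^k = (1/\eps)^{\tilde O(k)}$ distinct profiles. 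Let $V_\pi\eqdef\{v:\pi(v)=\pi\}$, and form $\hatG$ by introducing one super-non-terminal $u_\pi$ per occupied class, with edge capacities $\hat c_{u_\pi,t}\eqdef \sum_{v\in V_\pi}c_{vt}$ (the \emph{exact} raw sum, not the rounded value). This yields $|V(\hatG)|\le k+(1/\eps)^{\tilde O(k)}$ as required.

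To verify the quality, first note $\lambda_G(\dem)\le\lambda_{\hatG}(\dem)$: aggregating any $2$-hop flow $\{f^v\}$ in $G$ via $\hat f^{u_\pi}_{t,t'}\eqdef \sum_{v\in V_\pi}f^v_{t,t'}$ is feasible in $\hatG$ since $\sum_{t'}\hat f^{u_\pi}_{t,t'}\le\sum_{v\in V_\pi}c_{vt}=\hat c_{u_\pi,t}$. Conversely, for $\lambda_{\hatG}(\dem)\le(1+O(\eps))\lambda_G(\dem)$, decompose any $2$-hop flow $\hat f^{u_\pi}$ in $\hatG$ proportionally via $f^v_{t,t'}\eqdef (M_v/M_\pi)\hat f^{u_\pi}_{t,t'}$, where $M_\pi\eqdef \sum_{v\in V_\pi}M_v$. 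Because all $v\in V_\pi$ agree on $\pi(v)$, on each ``big'' coordinate ($\pi_t\ne\star$) the ratio $c_{vt}/M_v$ lies in $[\pi_t,(1+\eps)\pi_t)$, and this decomposition overshoots $v$'s capacity by at most a $1+\eps$ factor, absorbed by a global rescaling of the flow.

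The main obstacle is the $\star$-labeled (``small'') coordinates, where $c_{vt}\in[0,\eps M_v/k^2)$ can vary arbitrarily across $v\in V_\pi$---even vanishing for some $v$ while being positive for others---so the naive proportional decomposition may violate capacity on small edges. The resolution exploits that at every $v\in V_\pi$ the total small-edge capacity is bounded by $k\cdot\eps M_v/k^2 = \eps M_v/k$, only an $O(\eps/k)$ fraction of $v$'s dominant capacity $M_v$; hence any overflow on small edges at $v$ can be absorbed through big-edge traffic with only an $O(\eps)$ multiplicative loss. Equivalently, the LP dual of the $V_\pi$-star polytope coincides with that of $u_\pi$'s single-star polytope up to a $(1+O(\eps))$ factor whenever the discretized profiles agree on every coordinate. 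Rescaling $\eps$ by a constant yields the claimed $(1+\eps)$ quality together with the size bound $(1/\eps)^{\tilde O(k)}$.
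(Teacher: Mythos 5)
Your high-level plan---subdivide to make everything bipartite, invoke the Splicing Lemma to reduce to two-hop flows, bucket the non-terminals by a discretized ``capacity profile'' and merge within buckets---is essentially the same strategy the paper uses in its splicing-based proof. But your bucketing key is different from the paper's, and the difference introduces a genuine gap.

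The paper's bucket is the pair (\emph{super-type}, thresholded ratio vector), where the super-type is the \emph{support} $\{t : c_{vt}>0\}$, and the ratio vector records consecutive ratios in sorted order, capped at $M=k^2/\eps+1$. Your bucket is the normalized profile $(c_{vt}/M_v)_t$, rounded to powers of $(1+\eps)^{-1}$ and with everything below $\eps/k^2$ collapsed to a single symbol~$\star$. This $\star$ conflates ``$c_{vt}=0$'' with ``$c_{vt}$ tiny but positive,'' and that destroys the quality bound. Concretely, take $k=5$ terminals and two non-terminals
\[
  v_1:\ (c_{v_1t_1},\dots,c_{v_1t_5})=(1,\ \tfrac{\eps}{25},\ 0,\ 0,\ 0),
  \qquad
  v_2:\ (c_{v_2t_1},\dots,c_{v_2t_5})=(1,\ \tfrac{\eps}{25},\ \tfrac{\eps}{50},\ \tfrac{\eps}{50},\ \tfrac{\eps}{50}).
\]
Both have profile $\pi=(1,\ \eps/k^2,\ \star,\star,\star)$, so you merge them into $u_\pi$ with $\hat c_{t_2}=\tfrac{2\eps}{25}$ and $\hat c_{t_3}=\hat c_{t_4}=\hat c_{t_5}=\tfrac{\eps}{50}$. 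The demand $d_{t_3,t_2}=d_{t_4,t_2}=d_{t_5,t_2}=\tfrac{\eps}{50}$ is routable through $u_\pi$ (it uses $\tfrac{3\eps}{50}\le\tfrac{2\eps}{25}$ on edge $(u_\pi,t_2)$). But in $G$ every one of those three commodities \emph{must} pass through $v_2$, forcing $\tfrac{3\eps}{50}$ units onto $(v_2,t_2)$ whose capacity is only $\tfrac{\eps}{25}=\tfrac{2\eps}{50}$. So $\lambda_{\hatG}(\dem)/\lambda_G(\dem)\ge 3/2$, and $\hatG$ is not a $(1+\eps)$-sparsifier. The paper avoids this because $v_1$ and $v_2$ have different super-types.

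Even if you repair the bucketing by treating exact zeros as a separate symbol, the hard direction of your quality argument remains incomplete. The proportional split by $M_v/M_\pi$ is only analyzed on the big coordinates, and the sentence claiming that ``any overflow on small edges at $v$ can be absorbed through big-edge traffic'' does not correspond to a legal rerouting: a two-hop commodity $(t,t')$ with $t$ a $\star$-coordinate \emph{must} traverse $(v,t)$, and there is no way to ``absorb'' its congestion via big edges. The closing LP-duality assertion (``the dual of the $V_\pi$-star polytope coincides with that of $u_\pi$'s star up to $1+O(\eps)$'') is stated without proof and is precisely the nontrivial content. The paper's proof replaces this with a careful staged routing: it processes the thresholded ratio vector from the last $M$-threshold outward, at each stage allocating flow to the $v_j$'s in proportion to their capacity at the pivot terminal, explicitly tracking the small slack used at each stage and showing the accumulated error stays $O(\eps)$. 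Some such explicit allocation argument (or an honest LP-duality computation) is needed; the intuition that the $\star$-mass is globally small does not by itself control the per-vertex, per-edge overload.
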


\begin{myproof}
The construction goes through several stages. First, we construct $G'$
by rounding down the capacity to an integer power of $1+\eps$. The
main idea is to define ``types'' for non-terminals $v$ and then merge all
vertices of the same type (i.e., the new edge capacity is the sum of
the respective edge capacities incident to the merged vertices). The
main difficulty is in defining the types.

To define the type, first of all partition all non-terminals $v$ into
``super-types'', according to the set $S$ of terminals that are
connected to $v$ by edges with non-zero capacity. Now fix one such
super-type $S$, i.e., all vertices $v$ such that $\{t\in
T:c_{vt}\neq0\}=S$. Without loss of generality, suppose
$S=\{t_1,\ldots t_{h+1}\}$ and $c_{vt_i}\ge c_{vt_{i+1}}$ for
$i\in[h]$. For a vertex $v$, consider the vector of ratios
$r^*_v=\{c_{vt_1}/c_{vt_2}, c_{vt_2}/c_{vt_3},\ldots
c_{vt_h}/c_{vt_{h+1}})$. Note that $r^*_v$'s entries are all power of
$1+\eps$.  Now let $M=k^2/\eps+1$, and define $r_v$ by thresholding
all entries of $r^*_t$ exceeding $M$ by $M$. The $r_v$ defines the
type of the vertex $v$. Now we merge all vertices $v$ with the same
super-type $S$ and type $r_t$. Denote the new capacities $\hat
c_{t,u}$ for a terminal $t$ and a non-terminal node $u$ in $\hat G$.

Now we proceed to the analysis. First of all, notice that $G'$ is a
quality $1+\eps$ flow sparsifier, so we will care to preserve its
flows only. Furthermore, since the main operation is merging of the
nodes, we can only increase the set of feasible demands
in $\hat G$. The main challenge is to prove that if we can route a
demand vector $\dem$ in $G'$, we can route a demand $(1-O(\eps))\dem$
in $G'$. Using the Splicing Lemma \ref{lem:splice}, it is enough to
consider only demands $\dem$ that are feasible using 2-hop paths.

Fix some demand vector $\dem$ that is feasible in $\hat G$ using 2-hop
paths only. Fix a non-terminal node $u\in \hat G$, and let $f_{s,t}$
be the flow (of the solution) between $s,t\in T$ via $u$. Suppose $u$
has super-type $S$ and type $r=(r_1,\ldots r_h)$. We will show that we
can route $(1-O(\eps))f_{s,t}$ in $G$ for all $s$ to $t$ via the nodes
$v\in G'$ that have super-type $S$ and type $r$. This would clearly be
sufficient to conclude that $(1-O(\eps))\dem$ is feasible in $G'$.
Let $v_1,\ldots v_m$ be the nodes with super-type $S$ and type $r$.

We proceed in stages, routing iteratively from the ``small flows'' to
the ``large flows'' via $u$. Consider a suffix of $r$, denoted
$r_i,\ldots r_h$ where $r_i=M$ and $r_{i'}<M$ for all $i'>i$. For
$j\in[m]$, let $\alpha_{v_j}=c_{t_i,v_j}/\hat c_{t_i,u}$. Now for all
flows $f_{st}$, where $s\in\{t_{i+1},\ldots t_{h+1}\}$ and
$t\in\{t_1,\ldots t_h\}$, we route $(1-\eps)\alpha_{v_j} f_{st}$ flow
from $s$ to $t$ via $v_j$ in $G'$. We argue this is possible (even
when doing this for all $s,t$). Namely, consider any edge
$e=(t_{i'},v_j)$ for $t_{i'}\in\{t_{i+1},\ldots t_{h+1}\}$. The flow
accumulated on this edge is:
\begin{align*}
\sum_t (1-\eps)\alpha_{v_j}f_{t_{i'},t}
&=
(1-\eps)c_{t_i,v_j}/\hat c_{t_i,u}\cdot \sum_t f_{t_{i'},t}
\\
&\le
(1-\eps)c_{t_i,v_j}/\hat c_{t_i,u}\cdot \hat c_{t_{i'},u}.
\end{align*}
Note that $c_{t_iv_j}/c_{t_{i'},v_j}=r_i\cdot r_{i+1}\cdots
r_{i'-1}$, and similarly $\hat c_{t_i,u}/\hat c_{t_{i'},u}=r_i\cdot
r_{i+1}\cdots r_{i'-1}$. Hence the above formula is bounded by
$(1-\eps)c_{t_{i'},v_j}$, i.e., we satisfy the edge capacity (with a
$1-\eps$ slack, which will help later). Furthermore, we have routed
$\sum_j (1-\eps)\alpha_{v_j}f_{st}=(1-\eps)f_{st}$ flow for each
$s,t$.

We will repeat the above procedure for the next suffix of $r$ until we
are done routing flow $G'$. Note that we have at most $k$ such stages.

We need to mention one more aspect in the above argument --- what
happens to the flow that is contributed to edges $(t_{i'},v_j)$ where
$i'\le i$? The total contribution is at most $k/M\le \eps/k$ fraction
of the capacity (since $r_i=M$), which, over all (at most) $k$ stages
is still at most $\eps$ fraction of the edge capacity. Since we left a
slack of $\eps$ in the capacity for each edge in the above argument,
we still satisfy the capacity constraint overall for each edge.

Finally, to argue about the size of $\hat G$, note that there are only
$2^k$ super-types, and there are at most $O(k^2/\eps)^k$ possible
vectors $r$, and hence $\hat G$ has size at most $O(2^k\cdot
O(k^2/\eps)^k)=(1/\eps)^{\tilde O(k)}$.
\end{myproof}

\section{A Sampling Approach for Flow Sparsifiers} 
\label{sec:sampling}

In this section we develop our sampling approach to construct flow
sparsifiers. In particular, for quasi-bipartite graphs we construct 
in this method flow sparsifiers of size bounded by a polynomial in
$k/\eps$. 
This family includes the graphs for which a lower bound (for exact 
cut/flow sparsification) was proved in \cite{KR13}, 
and we further discuss how our construction extends to include also 
the graphs for which a lower bound was proved in \cite{KRTV12}.

\subsection{Preliminaries}
\label{sec:prelims-sampling}

We say that a random variable is \emph{deterministic} if it has variance $0$
(i.e., it attains one specific value with probability $1$).

\begin{theorem}[A Chernoff Variant] \label{thm:ChernoffPrime}
Let $X_1,\ldots,X_m\ge0$ be independent random variables, such that
each $X_i$ is either deterministic or $X_i\in [0,b]$,
and let $X=\sum_{i=1}^m X_i$. Then
\begin{align*}
  \Pr\Big[X \leq (1-\veps)\EX[X] \Big]
  & \leq e^{-\veps^2 \EX[X] /(2b)},
  & \forall \veps\in(0,1),
  \\
  \Pr\Big[X \geq (1+\veps)\EX[X] \Big]
  & \leq e^{-\veps^2 \EX[X] /(3b)},
  & \forall \veps\in(0,1).
\end{align*}
\end{theorem}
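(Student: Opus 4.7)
The plan is to reduce to the standard multiplicative Chernoff bound by isolating the deterministic summands. Write $\{1,\ldots,m\} = I_{\mathrm{det}} \cup I_{\mathrm{rand}}$, where $I_{\mathrm{det}}$ collects the indices $i$ for which $X_i$ is deterministic, and set $D \eqdef \sum_{i\in I_{\mathrm{det}}} X_i$ (a constant) and $Y \eqdef \sum_{i \in I_{\mathrm{rand}}} X_i$. Then $X = D+Y$ and $\EX[X] = D + \EX[Y]$, so
\[
  \{X \le (1-\veps)\EX[X]\} = \{Y \le \EX[Y] - \veps\,\EX[X]\},
  \quad
  \{X \ge (1+\veps)\EX[X]\} = \{Y \ge \EX[Y] + \veps\,\EX[X]\}.
\]
Thus the deterministic part only shifts the threshold, and concentration reduces to concentration of the random part $Y$, which is a sum of independent random variables in $[0,b]$.

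Next, I would apply the standard multiplicative Chernoff bound to $Y$. Rescaling by $1/b$ to get variables in $[0,1]$ gives, with $\mu_Y \eqdef \EX[Y]$, the familiar inequalities $\Pr[Y \le (1-\delta)\mu_Y] \le e^{-\delta^2 \mu_Y/(2b)}$ for $\delta \in (0,1]$, and $\Pr[Y \ge (1+\delta)\mu_Y] \le e^{-\delta^2 \mu_Y/(3b)}$ for $\delta \in (0,1]$, together with the tail $\Pr[Y \ge (1+\delta)\mu_Y] \le e^{-\delta \mu_Y /(3b)}$ valid for $\delta \ge 1$. I would handle the degenerate case $\mu_Y = 0$ first: then every $X_i$ with $i \in I_{\mathrm{rand}}$ is almost surely zero (since it is nonnegative and in $[0,b]$), so $X = D$ is itself deterministic and both bounds are immediate.

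For $\mu_Y > 0$, set $\delta \eqdef \veps\,\EX[X]/\mu_Y$, which satisfies $\delta \ge \veps$ because $\EX[X] \ge \mu_Y$. For the lower tail, either $\delta > 1$, in which case $(1-\delta)\mu_Y < 0$ and the probability is trivially $0$, or $\delta \le 1$, in which case the Chernoff bound gives $e^{-\delta^2 \mu_Y/(2b)} = e^{-\veps^2 \EX[X]^2/(2b\mu_Y)} \le e^{-\veps^2 \EX[X]/(2b)}$, using again $\EX[X] \ge \mu_Y$. For the upper tail, if $\delta \le 1$ the same computation gives $e^{-\veps^2 \EX[X]/(3b)}$; if $\delta \ge 1$ the large-deviation form gives $e^{-\delta\mu_Y/(3b)} = e^{-\veps\,\EX[X]/(3b)} \le e^{-\veps^2\EX[X]/(3b)}$ since $\veps \in (0,1)$. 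I do not expect any substantive obstacle: the only conceptual point is the observation that variance-zero summands merely translate $X$ and so do not harm, and may only help, the concentration of $X$ around its mean.
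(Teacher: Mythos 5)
Your proof is correct, but it takes a genuinely different route from the paper's. The paper's proof keeps the deterministic summands inside the sum: it observes that a deterministic $X_i$ (possibly with value $>b$) can be replaced by several deterministic variables each lying in $[0,b]$, after which \emph{all} the variables are in $[0,b]$, independence and $\EX[X]$ are unchanged, and the standard Chernoff bound applies directly (the bound does not depend on $m$). You instead remove the deterministic part $D$ from the sum, apply Chernoff only to the random part $Y$, and show that because the deviation threshold $\veps\EX[X]=\veps(D+\mu_Y)$ is larger than $\veps\mu_Y$, the resulting exponent is at least as strong as claimed; you also need the $\delta>1$ case analysis (using $\Pr[Y\ge(1+\delta)\mu_Y]\le e^{-\delta\mu_Y/(3b)}$ for $\delta\ge1$, and nonnegativity of $Y$ for the lower tail). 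Both arguments are sound. The paper's version is shorter and avoids any case split; your version makes more explicit the intuitive fact that adding a deterministic offset can only improve (multiplicative) concentration, and it would give a slightly stronger exponent $\veps^2\EX[X]^2/(b\mu_Y)$ in the $\delta\le1$ regime, though the theorem as stated does not need this.
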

\begin{myproof}
First, replace every deterministic $X_i$ with multiple random variables
that are still deterministic but are all in the range $[0,b]$.
It suffices to prove the deviation bounds for the new summation,
because the new variables trivially maintain the independence condition,
and the deviation bound does not depend on the number $m$ of random variables.

Assuming now that every random variable is in the range $[0,b]$,
the deviation bounds follow from standard Chernoff bounds \cite{MR95,DP09}
by scaling all the random variables by factor $1/b$.
\end{myproof}

\subsection{Quasi-Bipartite Graphs}

Recall that a quasi-bipartite graph is one where the non-terminals form
an independent set; i.e., there are no edges between non-terminals.

\begin{theorem} \label{thm:bipartiteNew} Let $G=(V,E,c)$ be a
  quasi-bipartite $k$-terminal network, and let $0<\veps<\maxeps$.  Then $G$
  admits a quality $1+\veps$ flow-sparsifier $\hatG$ that has at most
  $\tilde O(k^7/\veps^3)$ vertices.
\end{theorem}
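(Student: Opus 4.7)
The plan is to follow the sampling approach sketched in Section~\ref{sec:intro-sample}, specialized to the primal--dual LP setting of multicommodity flow rather than just cuts. First, applying the Splicing Lemma (Lemma~\ref{lem:splice}) lets me assume all routings are along terminal-free paths; since $G$ is quasi-bipartite, every such path is a 2-hop path through a single non-terminal $v \notin T$, so every flow decomposes as $f_{st}(v)$ = amount of $st$-flow routed through $v$. Thus the input and the desired sparsifier are completely determined by the family of stars $\{(c_{sv})_{s\in T}\}_{v\notin T}$.

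Next, I would build $\hatG$ by independently sampling each non-terminal $v$ with probability
\[
  p_v \;=\; \min\Bigl\{1,\ M\cdot \max_{s\neq t\in T}
   \frac{\min\{c_{sv},c_{vt}\}}{\sum_{v'\notin T}\min\{c_{sv'},c_{v't}\}}\Bigr\},
\]
keeping its incident edges rescaled by $1/p_v$, as motivated by \eqref{eqn:cutPv}. The oversampling parameter $M$ will be set to $\tilde O(k^5/\veps^3)$ so that $\sum_v p_v = \tilde O(k^7/\veps^3)$ (using that the $k^2$ ``$(s,t)$-budgets'' inside the $\max$ each contribute $M$ after summation), which will be the size bound. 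Note that the $p_v$'s depend only on $G$, not on the demand.

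The core of the proof is to show, for a single fixed demand $\dem$, that $\lambda_{\hatG}(\dem) = (1\pm O(\veps))\lambda_G(\dem)$ with failure probability exponentially small in the exponent needed for a union bound. I would attack this separately on the primal (flow) LP and the dual (metric) LP from Lemma~\ref{lem:LP}. For the primal, fix an optimal flow $f^*$ in $G$ routing $\lambda_G(\dem)\dem$; spliced to 2-hop, the load on the star at $v$ is a bounded summand, and rescaling by $1/p_v$ for sampled $v$'s makes the load an unbiased estimator. Using the choice of $p_v$, each summand $I_v/p_v$ times the load at $v$ is bounded by a $1/M$-fraction of the total, which is exactly the variance bound that feeds Theorem~\ref{thm:ChernoffPrime} and gives $(1+\veps)$ concentration. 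The dual direction is dual-analogous: for a feasible metric $(\ell_e,\delta_{st})$ of value near $\lambda_G(\dem)$, the restriction to sampled edges is still feasible in $\hatG$ (shortest paths can only grow), and the objective $\sum_e c_e \ell_e$ again decomposes as $\sum_v (\text{star contribution at }v)$ whose sampled, rescaled version concentrates by the same Chernoff bound.

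The main obstacle will be the union bound: unlike the $2^k$ bipartitions in the cut case, the set of demands (or dual metrics) is continuous. I would handle this by a net argument: discretize using $\Ddisc$ from Section~\ref{sec:DS} for the primal direction, and take an $\veps$-net of size $(1/\veps)^{O(k^2)}$ over unit metrics on the $k$ terminals for the dual direction. Both nets have $\log$-size $\tilde O(k^2)$, so setting the failure probability of each single-demand concentration to $\exp(-\tilde\Omega(k^2/\veps^2))$ suffices. Propagating this through Theorem~\ref{thm:ChernoffPrime} requires $M = \tilde\Omega(k^2/\veps^2)$ per coordinate, and summing the per-$(s,t)$ contributions to $\sum_v p_v$ over the $\binom{k}{2}$ pairs yields the claimed $\tilde O(k^7/\veps^3)$ bound on $|V(\hatG)|$. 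The secondary technical point to verify is that, after sampling, every demand not in the net (primal) or metric not in the net (dual) is handled by a standard $\veps$-perturbation argument using the down-monotonicity of $\D(G)$ (Lemma~\ref{lem:polytope}) and the $1$-Lipschitzness of the LP optimum in its coefficients.
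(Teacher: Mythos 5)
Your high-level plan matches the paper's: reduce to $2$-hop paths via the Splicing Lemma, sample non-terminals with the importance-sampling probabilities of \eqref{eq:pv}, discretize demand vectors for the union bound, and argue concentration separately on the primal (flow) and dual (length) LPs. The primal side of your sketch is essentially the paper's Lemma~\ref{lem:samplingLBfixed}: the per-commodity summand $f^{st}_v I_v/\tilp_v$ is bounded by $F_{st}/M$, and Theorem~\ref{thm:ChernoffPrime} applies. That part is fine.

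The gap is in the dual (upper-bound) direction, where you write that the objective ``decomposes as $\sum_v(\text{star contribution at }v)$ whose sampled, rescaled version concentrates by the same Chernoff bound.'' This is where the real work lives and you have not supplied it. The summand for $v$ is $\widehat Z_v = (I_v/\tilp_v)\sum_{t} c_{vt}\len_{vt}$, and Theorem~\ref{thm:ChernoffPrime} requires a pointwise upper bound on $\widehat Z_v$. Unlike the primal side, there is no direct relation between $\sum_t c_{vt}\len_{vt}$ and $p_v$: the lengths $\len_{vt}$ come from an optimal dual for the given demand and bear no a priori relation to the quantities $\min\{c_{sv},c_{vt}\}$ defining $p_v$. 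A priori the entire dual objective could sit on one low-probability vertex and $\widehat Z_v$ would have no useful bound. The paper's Claim~\ref{cl:samplingUBcontrib} is precisely what fills this hole: it proves $\sum_t c_{vt}\len_{vt}\le \tilp_v\cdot k^2/(2\eta M)$ by replacing the edge lengths at $v$ with a sum of scaled single-commodity dual solutions (Lemma~\ref{lem:samplingDualLPsingle}), verifying feasibility of the modified dual, and crucially using the discretization property $d_{st}\ge\eta F_{st}$. Your proposal does not identify the need for any such step.

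A secondary divergence: for the dual union bound you propose an $\veps$-net over unit metrics on the terminals, whereas the paper nets over the \emph{demand vectors} (exactly as in the primal, via a set $\D^{UB}$ of discretized demands) and then, for each discretized demand, fixes its optimal $G$-dual and tracks its value in $\hatG$. The metric-net route is not clearly viable here, because the relevant dual metric for $\hatG$ depends on the random realization, and because Claim~\ref{cl:samplingUBcontrib} is proved \emph{per demand}, using $d_{st}\ge\eta F_{st}$; a net over metrics does not carry that information. Unless you supply a substitute for Claim~\ref{cl:samplingUBcontrib} that works uniformly over your metric net, the upper-bound half of the argument does not close.
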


Our algorithm is randomized, and is based on importance sampling, as
follows.  Throughout, let $T\subset V$ be the set of $k$ terminals, and
assume the graph is connected.  We may assume without loss of generality
that $T$ also forms an independent set, by subdividing every edge that
connects two terminals (i.e., replacing it with a length $2$ path whose
edges have the same capacities as the edge being replaced).
We use a parameter $M\eqdef C\veps^{-3} k^5\log (\tfrac{1}{\veps}\log k)$,
where $C>0$ is a sufficiently large constant.
\begin{enumerate} \compactify
\item 
For every $s,t\in T$, compute a maximum $st$-flow in $G$ along 2-hops paths.
These path are edge-disjoint and each is identified by its middle vertex,
this flow is given by
\begin{equation} \label{eq:Fst}
  F_{st} \eqdef \sum_{v\in V\setminus T} F_{st,v},
  \quad \text{ where } \quad
  F_{st,v} \eqdef \minn{ c_{sv},c_{vt} }.
\end{equation}
\item 
For every non-terminal $v\in V\setminus T$, define a sampling probability
\begin{equation} \label{eq:pv}
  \tilp_v\eqdef \minn{1,p_v},
  \quad \text{ where } \quad
  p_v \eqdef M \cdot \max 
    \Big\{ \frac{F_{st,v}}{F_{st}}:\ s,t\in T \text{ and } F_{st,v}>0 \Big\}.
  \end{equation}
\item 
Sample each non-terminal with probability $\tilp_v$;
more precisely, for each $v\in V\setminus T$ independently at random,
with probability $\tilp_v$ scale the capacity of every edge incident to $v$ 
by a factor of $1/\tilp_v$, 
and with the remaining probability remove $v$ from the graph.
\item
Report the resulting graph $\hatG$.
\end{enumerate}

For the sake of analysis, it will be convenient to replace step 3 with 
the following step, which is obviously equivalent in terms of flow.
\begin{enumerate} \compactify
\setcounter{enumi}{2}
\renewcommand{\theenumi}{\arabic{enumi}'}
\item \label{it:3'}
For each $v\in V\setminus T$, set independently at random 
$I_v=1$ with probability $\tilp_v$ and $I_v=0$ otherwise (with probability $1-\tilp_v$),
and scale the capacities of every edge incident to $v$ by a factor of $I_v/\tilp_v$.
\end{enumerate}

We first bound the size of $\hatG$,
and then show that \whp\ $\hatG$ is a flow-sparsifier with quality $1+O(\veps)$.

\begin{lemma} \label{lem:samplingSize}
With probability at least $0.9$, 
the number of vertices in $\hatG$ is at most $O(k^2M)$.
\end{lemma}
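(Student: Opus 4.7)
\medskip

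\noindent\textbf{Proof plan for Lemma \ref{lem:samplingSize}.}
The plan is a one-line first-moment calculation followed by Markov's inequality. The number of vertices of $\hatG$ equals $\card{T}+\sum_{v\in V\setminus T} I_v$, so it suffices to upper bound $\EX\big[\sum_v I_v\big]=\sum_v \tilp_v\le \sum_v p_v$ by $O(k^2 M)$, since $\card{T}=k\le k^2M$.

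The key step is to replace the maximum in the definition of $p_v$ by a sum over all terminal pairs. Namely,
\[
  p_v \;=\; M\cdot \max_{s,t\in T}\frac{F_{st,v}}{F_{st}}
  \;\le\; M\sum_{\{s,t\}\in\binom{T}{2}}\frac{F_{st,v}}{F_{st}},
\]
so that summing over non-terminals and interchanging the two sums gives
\[
  \sum_{v\in V\setminus T} p_v
  \;\le\; M\sum_{\{s,t\}\in\binom{T}{2}}\frac{1}{F_{st}}\sum_{v\in V\setminus T} F_{st,v}
  \;=\; M\binom{k}{2},
\]
because by the very definition \eqref{eq:Fst}, $\sum_v F_{st,v}=F_{st}$ for each pair $\{s,t\}$.

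Hence $\EX[\sum_v I_v]\le M\binom{k}{2}=O(k^2 M)$, and Markov's inequality yields $\sum_v I_v\le 10 M\binom{k}{2}$ with probability at least $0.9$. Combined with the trivial $\card{T}=k$ contribution, this gives $\card{V(\hatG)}=O(k^2 M)$ with probability at least $0.9$, as desired. There is no real obstacle here; the only point worth noting is that the ``max-to-sum'' relaxation is tight enough because the per-pair normalization $\sum_v F_{st,v}/F_{st}=1$ is exact, and this is exactly the property that makes the importance-sampling probabilities $p_v$ sum to only $O(k^2 M)$ rather than something depending on $|V|$.
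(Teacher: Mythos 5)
Your proof is correct and follows essentially the same route as the paper: bound $\EX[\sum_v I_v]\le\sum_v p_v$ by relaxing the max in the definition of $p_v$ to a sum over pairs, interchange summation order, use $\sum_v F_{st,v}=F_{st}$ to get $O(k^2M)$, and finish with Markov. (You are slightly more careful than the paper in explicitly adding the $\card{T}=k$ terminal vertices, but this is a cosmetic difference.)
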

\begin{myproof}
The number of vertices in $\hatG$ is exactly $\sum_{v\in V\setminus T} I_v$,
hence its expectation is
\[
  \EX\Big[\sum_{v\in V\setminus T} I_v\Big] 
  \leq \sum_{v\in V\setminus T} p_v
  \leq M \sum_{v\in V\setminus T}\ \sum_{s,t\in T:\ F_{st,v}>0} \tfrac{F_{st,v}}{F_{st}}
  = M \sum_{s,t\in T:\ F_{st}>0}\ \sum_{v\in V\setminus T} \tfrac{F_{st,v}}{F_{st}}
  \leq O(k^2 M),
\]  
where the second inequality simply bounds the maximum in \eqref{eq:pv}
with a summation,
and the last inequality follows from \eqref{eq:Fst}.
The lemma then follows by applying Markov's inequality.
\end{myproof}

\begin{lemma} \label{lem:samplingMain}
Let $d$ range over all nonzero demand vectors in $\R_+^{\binom{T}{2}}$. Then
\begin{align} 
  \Pr\Big[ \forall d\neq 0,\ \lambda_\hatG(d) \geq (1-3\veps) \lambda_G(d) \Big]
  \geq 0.9, \label{eq:samplingLB}
  \\
  \Pr\Big[ \forall d\neq 0,\ \lambda_\hatG(d) \leq (1+4\veps) \lambda_G(d) \Big]
  \geq 0.9. \label{eq:samplingUB}
\end{align}
\end{lemma}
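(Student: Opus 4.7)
The plan is to reduce both inequalities to analyzing 2-hop routings (so that the Chernoff variant of Theorem~\ref{thm:ChernoffPrime} applies term-by-term over non-terminals), to prove the lower bound~\eqref{eq:samplingLB} by sample-rescaling a feasible flow, to prove the upper bound~\eqref{eq:samplingUB} by transporting an optimal dual solution from $G$ to $\hatG$, and to finish by a union bound over a discretized demand set.

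First I would pre-process $G$ by subdividing every terminal--terminal edge (a capacity-preserving modification that leaves $\D(G)$ unchanged), so that $T$ and $V\setminus T$ are both independent sets and every terminal-free $s$-$t$ path has the 2-hop form $s$-$v$-$t$. Applying the Splicing Lemma~\ref{lem:splice} (with $G_a=G$, $G_b=\hatG$, $\rho=1/(1-3\veps)$) reduces~\eqref{eq:samplingLB} to the 2-hop statement: \emph{for every $d$ that is 2-hop routable in $G$, demand $(1-3\veps)d$ is routable in $\hatG$}; and analogously with the roles of $G,\hatG$ swapped and $\rho=1+4\veps$ for~\eqref{eq:samplingUB}. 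Using a net argument (discretize the coordinates of $d$ to powers of $1+\veps$ in $[\veps F_{st}/k^2,F_{st}]$, analogous to the construction of $\Ddisc$ from Section~\ref{sec:DS}), it suffices to union-bound the concentration analysis over a set of at most $(O(1)/\veps\cdot\log k)^{k^2}\cdot\binom{k}{2}$ events.

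For the lower bound, fix a 2-hop routing $\{f_{st,v}\}$ with $\sum_v f_{st,v}=d_{st}$ in $G$, and define the induced flow $f'_{st,v}\eqdef (I_v/\tilp_v)\,f_{st,v}$ in $\hatG$. The capacity constraint $\sum_t f'_{st,v}\leq (I_v/\tilp_v)\sum_t f_{st,v}\leq (I_v/\tilp_v)c_{sv}=\hat c_{sv}$ is automatically satisfied, so this is a valid flow with total $(s,t)$-throughput $X_{st}\eqdef\sum_v (I_v/\tilp_v) f_{st,v}$; its expectation is $d_{st}$ and every summand lies in $[0,F_{st,v}/\tilp_v]\subseteq[0,F_{st}/M]$ by the choice~\eqref{eq:pv} of $p_v$. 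Theorem~\ref{thm:ChernoffPrime} then gives
\[
\Pr\bigl[X_{st}<(1-\veps)d_{st}\bigr]\leq \exp\!\bigl(-\veps^2 d_{st}M/(2F_{st})\bigr)\leq \exp\!\bigl(-\veps^3 M/(2k^2)\bigr),
\]
where the last step uses $d_{st}\geq \veps F_{st}/k^2$ from the discretization; this is $o((|\Ddisc|\cdot k^2)^{-1})$ for $M$ as chosen.

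For the upper bound, fix an optimal dual $(l^*,\dist^*)$ of~\eqref{eq:dualLP1} for $G$, so $\sum_e c_e l^*_e=\lambda_G(d)$ and $\sum_{st}d_{st}\dist^*_{st}=1$. Since $\hatG$'s edges/vertices are contained in $G$'s, $(l^*,\dist^*)$ restricts to a feasible dual for $\hatG$, giving
\[
\lambda_\hatG(d)\leq \sum_e \hat c_e l^*_e=\sum_v (I_v/\tilp_v) A_v,\qquad A_v\eqdef \sum_t c_{vt}\, l^*_{vt},
\]
with $\EX\bigl[\sum_v(I_v/\tilp_v) A_v\bigr]=\sum_v A_v=\lambda_G(d)$. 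The crux is bounding $\max_v A_v/\tilp_v$: LP complementary slackness forces edges $(v,t)$ with $l^*_{vt}>0$ to be saturated ($c_{vt}=\sum_s f^*_{stv}$), and tight shortest paths give $l^*_{vt}\leq \dist^*_{st}$ whenever $f^*_{stv}>0$; combined with $f^*_{stv}\leq F_{st,v}$ and (via $d\in\Ddisc$) $\dist^*_{st}\leq 1/d_{st}\leq k^2/(\veps F_{st})$, one obtains $A_v\leq \sum_{s,t} f^*_{stv}\, l^*_{vt}\leq k^2 F_{s^*t^*,v}\dist^*_{s^*t^*}$ for the maximizing pair $(s^*,t^*)$, hence $A_v/\tilp_v\leq k^2\dist^*_{s^*t^*} F_{s^*t^*}/M\leq k^4/(\veps M)$. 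Chernoff then yields $\Pr\bigl[\sum_v (I_v/\tilp_v) A_v>(1+\veps)\lambda_G(d)\bigr]\leq \exp(-\veps^3\lambda_G(d)M/(3k^4))\leq \exp(-\veps^3 M/(3k^4))$, again beating the union bound.

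The main obstacle is this upper-bound variance estimate: in the primal case the range $F_{st}/M$ falls out directly from the definition~\eqref{eq:pv} of $p_v$, whereas the dual summand $A_v/\tilp_v$ has no a priori local bound; extracting one requires coupling the LP complementary-slackness structure to a ``witness'' pair $(s^*,t^*)$ so that the importance-sampling probability $p_v$ absorbs the dual contribution of $v$. This coupling is precisely why the formula~\eqref{eq:pv} takes a max over pairs $(s,t)$ rather than, say, a sum.
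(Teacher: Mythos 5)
Your lower-bound argument matches the paper's: fix a two-hop routing of the demand, rescale flow-paths by $I_v/\tilp_v$, apply the Chernoff variant with per-summand range $F_{st}/M$ coming from the choice~\eqref{eq:pv}, and union bound over a discretized net; this is exactly Lemma~\ref{lem:samplingLBfixed} plus the net argument that follows it.

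For the upper bound~\eqref{eq:samplingUB}, however, there is a genuine gap in the variance estimate. You transport the optimal dual of the \emph{general}-path LP~\eqref{eq:dualLP1} from $G$ to $\hatG$, and then bound each summand $A_v/\tilp_v$ by complementary slackness, asserting that $l^*_{vt}>0$ forces $c_{vt}=\sum_s f^*_{stv}$ (a sum of two-hop flows) and that $f^*_{stv}>0$ forces $l^*_{vt}\leq\dist^*_{st}$. Those are the complementary-slackness conditions of the \emph{two-hop} primal--dual pair~\eqref{eq:samplingFlowLP}--\eqref{eq:samplingDualLP}, not of~\eqref{eq:dualLP1}. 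In the primal of~\eqref{eq:dualLP1} the flow lives on arbitrary terminal-to-terminal paths, which in the subdivided bipartite graph alternate terminal/non-terminal, so the flow saturating an edge $(v,t)$ can come from long paths on which $t$ occurs as an \emph{internal} terminal. Thus $c_{vt}=\sum_s f^*_{stv}$ is false for the LP1 optimum, the step $A_v \le \sum_{s,t} f^*_{stv}\, l^*_{vt}$ does not follow, and the crucial bound $A_v/\tilp_v \leq k^4/(\veps M)$ is unjustified as written.

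The argument is repaired by running the entire dual analysis with~\eqref{eq:samplingDualLP} in place of~\eqref{eq:dualLP1}, as the paper does. Take the optimal two-hop dual $(l^*,y^*)$ for $G$; its value is the two-hop concurrent-flow value, which is $\le \lambda_G(d)$, so $\EX[\widehat Z]\le\lambda_G(d)$, which only helps. Transport it to $\hatG$ (feasible, since $\hatG$'s paths are a subset of $G$'s). Now \eqref{eq:samplingFlowLP}--\eqref{eq:samplingDualLP} complementary slackness gives exactly your two conditions, and after also verifying (again via complementary slackness for $y_{st}$) that the maximizing pair $(s^*,t^*)$ has $d_{s^*t^*}>0$ so that $y^*_{s^*t^*}\le 1/d_{s^*t^*}$ applies, the bound $A_v/\tilp_v\le k^4/(\veps M)$ goes through. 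With that fix your complementary-slackness route is a valid, and arguably cleaner, alternative to the paper's Claim~\ref{cl:samplingUBcontrib}: the paper instead superposes scaled single-commodity two-hop duals (Lemma~\ref{lem:samplingDualLPsingle}) onto $v$'s incident edges to construct a modified feasible dual, and bounds $A_v$ by the \emph{optimality} of the original solution rather than by its complementary slackness.
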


Observe that Theorem \ref{thm:bipartiteNew} follows immediately from
Lemmas \ref{lem:samplingSize} and \ref{lem:samplingMain}.  It remains to
prove the latter lemma, and we do this next.  We remark that the $0.9$
probabilities above are arbitrary, and can be easily improved to be
$1-o(1)$.

\subsubsection{Proving the Lower Bound \eqref{eq:samplingLB}}

The plan for proving \eqref{eq:samplingLB} is to discretize the set of
all demand vectors, show a deviation bound for each of these demands (separately),
and then apply a union bound.  We will thus need the next lemma, which
shows that for every fixed demand vector $\dem$ that (satisfies some
technical conditions and) is feasible in $G$, 
\whp\ a slightly scaled demand $(1-\veps)\dem$ is feasible in $\hatG$.

Given a demand vector $\dem$, the problem of concurrent flow along 2-hop paths 
can be written as linear program \eqref{eq:samplingFlowLP}.
It has variables $f^{st}_v$ representing flow along a path $s-v-t$,
for the commodity $s,t\in T$ and intermediate non-terminal $v\in V\setminus T$.
Let $N_G(w)$ denote the set of neighbors of vertex $w$ in the graph $G$.
\begin{equation} \label{eq:samplingFlowLP}
\framebox{ $
\begin{array}{llll}
  \max & \lambda
  \\
  \operatorname{s.t.}
  & \sum_{v\in N_G(s)\cap N_G(t)} f^{st}_v \ge d_{st} \lambda \;
  & \forall \aset{s,t}\in\tbinom{T}{2}
  \\
  & \sum_{s\in N_G(v)\setminus\aset{t}} f^{st}_v \leq c_{vt}
  & \forall (v,t)\in E
  \\
  & f^{st}_v \ge 0 
  & \forall \aset{s,t}\in\tbinom{T}{2},\forall v\in N_G(s)\cap N_G(t).
\end{array}
$ } \tag{LP3}
\end{equation}

\begin{lemma} \label{lem:samplingLBfixed}
Fix $\eta>0$ and $\dem\in \R_+^{\binom{T}{2}}\setminus\aset{0}$ 
such that 
(i) demand $\dem$ can be satisfied in $G$ by flow along $2$-hop paths, and
(ii) every nonzero coordinate $d_{st}$ in $\dem$ is a power of $1+\veps$ 
in the range $[\eta F_{st},F_{st}]$.%
\footnote{The range upper bound $F_{st}$ follows anyway from requirement~(i).
We also remark that the requirement about power of $1+\veps$ 
is not necessary for the lemma's proof,
but for later use, it is convenient to include it here.
}
Then 
\begin{equation*}
  \Pr[ \text{demand $(1-\veps)\dem$ admits a flow in $\hatG$ along $2$-hop paths} ] 
    \geq 1 - \textstyle \binom{k}{2}\, e^{-\veps^2\eta M/2}.
\end{equation*}
\end{lemma}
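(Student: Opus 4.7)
\begin{proofof}{Lemma \ref{lem:samplingLBfixed} (proposal)}
The plan is to start from a feasible $2$-hop flow $\{f^{st}_v\}$ in $G$ that routes demand $\dem$ (guaranteed by assumption~(i)) and to rescale it naturally in $\hatG$ by setting
\[
  \hatf^{st}_v \eqdef (I_v/\tilp_v)\cdot f^{st}_v,
\]
then verify that $\hatf$ is a feasible $2$-hop flow for $(1-\veps)\dem$ in $\hatG$ with high probability.

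First I would verify capacity constraints for free: for every edge $(v,t)\in E$ that survives in $\hatG$ its capacity is $\hat c_{vt}=(I_v/\tilp_v) c_{vt}$, and
\[
  \sum_{s} \hatf^{st}_v = (I_v/\tilp_v) \sum_{s} f^{st}_v \leq (I_v/\tilp_v)\, c_{vt} = \hat c_{vt},
\]
because $f$ itself is feasible in $G$. So capacities automatically hold deterministically, independently of the random sampling, and the only thing to control is the demand served.

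The demand for commodity $\aset{s,t}$ served by $\hatf$ is $X^{st}\eqdef \sum_{v} X^{st}_v$ where $X^{st}_v\eqdef (I_v/\tilp_v) f^{st}_v$, and these are independent across $v$. Since $\EX[I_v/\tilp_v]=1$, we have $\EX[X^{st}]=\sum_v f^{st}_v \ge d_{st} \ge \eta F_{st}$. The key step is to identify the correct per-summand bound $b$ for Theorem~\ref{thm:ChernoffPrime}. When $\tilp_v=1$ the variable $X^{st}_v$ is deterministic; otherwise $\tilp_v=p_v$, and for any $v$ with $f^{st}_v>0$ one has $F_{st,v}\ge f^{st}_v > 0$, so by the definition \eqref{eq:pv} of $p_v$,
\[
  \tilp_v = p_v \;\geq\; M\cdot \tfrac{F_{st,v}}{F_{st}} \;\geq\; M\cdot \tfrac{f^{st}_v}{F_{st}},
\]
using also $f^{st}_v \le \min\{c_{sv},c_{vt}\}=F_{st,v}$. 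Hence $X^{st}_v = (I_v/\tilp_v) f^{st}_v \le f^{st}_v/\tilp_v \le F_{st}/M \eqdef b$.

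Now I would apply the lower-tail bound of Theorem~\ref{thm:ChernoffPrime} to $X^{st}$ with this $b$, obtaining
\[
  \Pr\bigl[X^{st} \le (1-\veps)\EX[X^{st}]\bigr] \;\le\; \exp\!\bigl(-\veps^2 \EX[X^{st}]/(2b)\bigr) \;\le\; e^{-\veps^2 \eta M/2},
\]
using $\EX[X^{st}]/b \ge \eta F_{st} \cdot M/F_{st} = \eta M$. Since $\EX[X^{st}]\ge d_{st}$, the event $X^{st}\ge (1-\veps)\EX[X^{st}]$ implies $X^{st}\ge (1-\veps) d_{st}$. A union bound over at most $\binom{k}{2}$ commodities $\{s,t\}$ with $d_{st}>0$ yields that, with probability $\ge 1-\binom{k}{2}e^{-\veps^2 \eta M/2}$, the flow $\hatf$ routes $(1-\veps)\dem$ along $2$-hop paths in $\hatG$. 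The main (and essentially only) conceptual obstacle is the per-summand bound; once the definition of $p_v$ is unwound it gives exactly $b=F_{st}/M$, which is why the importance-sampling rule was chosen this way.
\end{proofof}
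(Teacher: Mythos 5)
Your proposal is correct and follows essentially the same route as the paper's proof: rescale the witness flow by $I_v/\tilp_v$, observe feasibility is automatic, compute the expectation, derive the per-summand bound $F_{st}/M$ by unwinding the definition of $p_v$ together with $f^{st}_v \le F_{st,v}$, apply Theorem~\ref{thm:ChernoffPrime}, and union-bound over the $\binom{k}{2}$ commodities. The only cosmetic difference is that you spell out the deterministic capacity-feasibility check and the observation $f^{st}_v>0\Rightarrow F_{st,v}>0$, which the paper leaves implicit.
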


\begin{myproof}
Given demand vector $\dem$, 
fix a flow $f$ that satisfies it in $G$ along $2$-hop paths.
Thus, $f_{st}=d_{st} \ge \eta F_{st}$.
Let $\hatG$ be the graph constructed using the above randomized procedure,
and recall that random variable $I_v$ is an indicator for the event
that non-terminal $v$ is sampled in step \ref{it:3'},
which happens independently with probability $\tilp_v$.

Define a flow $\hatf$ in $\hatG$ in the natural way:
scale every flow-path in $f$ whose intermediate vertex is 
$v\in V\setminus T$ by the corresponding $I_v/\tilp_v$. 
The resulting flow $\hatf$ is indeed feasible in $\hatG$ along $2$-hop paths.
It remains to prove that \whp\ this flow $\hatf$ routes at least
(i.e., a demand that dominates) $(1-\veps)\dem$.

Fix a demand pair (commodity) $s,t\in T$.
The amount of flow shipped by $\hatf$ along the path $s-v-t$ is
$\hatf^{st}_v \eqdef f^{st}_v I_v/\tilp_v$,
and the total amount shipped by $\hatf$ between $s$ and $t$
is 
\begin{equation} \label{eq:hatf}
  \hatf^{st} 
  \eqdef \sum_{v\in N_G(s)\cap N_G(t)} \hatf^{st}_v
  = \sum_{v\in N_G(s)\cap N_G(t)} f^{st}_v\cdot I_v/\tilp_v.
\end{equation}
By linearity of expectation,
$
  \EX[\hatf^{st}] 
  = \sum_v f^{st}_v\cdot \EX[I_v]/\tilp_v
  = \sum_v f^{st}_v
  = f^{st}
$.
Furthermore, we wrote $\hatf^{st}$ in \eqref{eq:hatf} 
as the sum of independent non-negative random variables, 
where each of summand is either deterministic (when $\tilp_v=1$),
or (when $\tilp_v=p_v<1$) can be bounded using \eqref{eq:pv} by
\[
  f^{st}_v\cdot I_v/\tilp_v 
  \leq f^{st}_v/\tilp_v 
  \leq F_{st,v}/p_v 
  \leq F_{st}/M.
\]
Applying Theorem \ref{thm:ChernoffPrime}, we obtain, as required,
\[
  \Pr[\hatf^{st} \le (1-\veps) f^{st}]
  \leq e^{-\veps^2 f^{st} / (2 F_{st}/M)}
  \leq e^{-\veps^2\eta M/2}.
\]
A straightforward union bound over the $\binom{k}{2}$ choices of $s,t$ 
completes the proof of Lemma~\ref{lem:samplingLBfixed}.
\end{myproof}

We proceed now to prove \eqref{eq:samplingLB} using Lemma \ref{lem:samplingLBfixed}. 
\begin{myproof}[Proof of Eqn.~\eqref{eq:samplingLB}]
Set $\eta\eqdef \veps/k^2$ and define
\begin{equation*}  %
  \D_{LB} \eqdef 
  \Big\{ \text{$\dem\in \R_+^{\binom{T}{2}}\setminus\aset{0}$ that satisfy requirements 
      (i) and (ii) in Lemma \ref{lem:samplingLBfixed}} 
  \Big\}.
\end{equation*}
Then clearly
$
  \card{\D^{LB}} 
  \leq \Big(2+\log_{1+\veps} \tfrac1\eta \Big)^{\binom{k}{2}}
  \leq \Big(\tfrac1\veps \log \tfrac{k}{\veps} \Big)^{k^2}
  \leq \Big(\tfrac{\log k}{\veps}\Big)^{O(k^2)}.
$
Applying Lemma \ref{lem:samplingLBfixed} to each $d\in\D^{LB}$ and using a
union bound, we get that with probability at least
$1-\card{\D^{LB}}\cdot \binom{k}2 e^{-\veps^2 \eta M/2} \ge 0.9$, for every $d\in\D^{LB}$
we have that $(1-\veps)d$ can be satisfied in $\hatG$ by 2-hop flow
paths.  We assume henceforth this high-probability event indeed occurs,
and show how this assumption implies the event described in \eqref{eq:samplingLB}.

To this end, fix a demand vector $\dem\in
\R_+^{\binom{T}{2}}\setminus\aset{0}$, and let us prove that
$\lambda_\hatG(\dem) \geq (1-3\veps) \lambda_G(\dem)$. We can make two
simplifying assumptions about the demand vector $\dem$, both of which
are without loss of generality.  Firstly, we assume that
$\lambda_G(\dem)\geq 1$, i.e., demand $\dem$ can be satisfied in $G$, 
because event in~\eqref{eq:samplingLB} is invariant under scaling of $\dem$. 
Secondly, we assume that $\dem$ can be satisfied in $G$ by $2$-hop flow
paths; if each such demand $\dem$ can be satisfied in $\widehat{G}$ with
congestion at most $1/(1-3\veps)$, then Lemma~\ref{lem:splice} implies 
that every demand satisfiable in $G$ (without the restriction to $2$-hop paths)
can be satisfied in $\widehat{G}$ with the same congestion.

So consider a demand $\dem \in \R_+^{\binom{T}{2}}\setminus\aset{0}$,
such that $\lambda_G(\dem)=1$ and $\dem$ can be satisfied in $G$ by
2-hop flow paths.  Let $\dem^-$ be the vector obtained from $\dem$ by
zeroing every coordinate $d_{st}$ that is smaller than $2\eta\, F_{st}$.
This vector can be written as $\dem^-=\dem-\sum_{st} \alpha_{st}
\mathbf{e}_{st}$, where $\mathbf{e}_{st}\in\R_+^{\binom{T}{2}}$ is the
standard basis vector for pair $(s,t)$, and $\alpha_{st}\eqdef d_{st}$
if this value is smaller than $2\eta F_{st}$, and zero
otherwise. Rounding each nonzero coordinate of $\dem^-$ down to the next
power of $1+\veps$ yields a demand vector $\dem^{LB}\in \D^{LB}$, and thus by
our earlier assumption, $(1-\veps)\dem^{LB} \geq \frac{1-\veps}{1+\veps}\dem^-$
can be satisfied in $\hatG$ by 2-hop flow paths.
For each $s,t\in T$, consider the demand vector $F_{st}\mathbf{e}_{st}$.
By rounding its single nonzero coordinate down to the next power of
$1+\veps$, we obtain a vector in $\D^{LB}$. Hence we conclude that
$\frac{1-\veps}{1+\veps} F_{st}\mathbf{e}_{st}$ can be satisfied in
$\hatG$ by 2-hop flow paths.  The set of demands satisfiable by 2-hop
flow paths in $\hatG$ is clearly convex, so taking a combination of such
demand vectors with coefficients that add up to $(1-\veps) +
\sum_{s,t\in T}\frac{\alpha_{st}}{F_{st}} \leq (1-\veps) +
\tfrac{k^2}{2}\cdot 2\eta = 1$, we conclude that
\[
  (1-\veps)\cdot \tfrac{1-\veps}{1+\veps}\,\dem^- 
  + \sum_{st}\frac{\alpha_{st}}{F_{st}} \cdot \tfrac{1-\veps}{1+\veps} F_{st}\mathbf{e}_{st}
  \geq \tfrac{(1-\veps)^2}{1+\veps} \Big[ \dem^- + \sum_{st} \alpha_{st} \mathbf{e}_{st}\Big]
  \geq (1-3\veps) \dem
\]
can be satisfied in $\hatG$.
This implies that $\lambda_\hatG(\dem) \ge 1-3\veps$, 
which completes the proof of \eqref{eq:samplingLB}.
\end{myproof}

\subsubsection{Proving the Upper Bound \eqref{eq:samplingUB}}

The plan for proving \eqref{eq:samplingUB} is similar, i.e., to prove a
deviation bound for every demand in a small discrete set and then apply
a union bound.  However, we need to bound the deviation in the opposite
direction, and thus use the LP that is dual to flow (which can be viewed as
``fractional cut'').  
We will need a statement of the following form: 
for every fixed demand vector $\dem$ that (satisfies some technical
conditions and) is \emph{not} feasible in $G$, 
\whp\ the slightly further scaled-up demand
$(1+\veps)\dem$ is \emph{not} feasible in $\hatG$.
The next lemma proves such a statement, 
except that it considers only flow along $2$-hop paths,
and that $\dem$ is scaled by another $1+\veps$ factor.

\begin{lemma} \label{lem:samplingUBfixedDemand}
Fix $\eta>0$ and let $\dem\in \R_+^{\binom{T}{2}}\setminus\aset{0}$ 
be a demand vector such that 
(i) demand $(1+\veps)\dem$ cannot be satisfied in $G$ by flow along 2-hop paths, 
and 
(ii) every nonzero coordinate $d_{st}$ is a power of $1+\veps$ 
in the range $[\eta F_{st},F_{st}]$.%
\footnote{Again, the power of $1+\veps$ requirement is not really necessary to 
prove the lemma, and will be needed only later.
But by introducing it right now, we avoid having two versions of condition (ii).
}
Then 
$$
  \Pr[ \text{demand $(1+\veps)^2\dem$ admits a flow in $\hatG$ along $2$-hop flow paths} ] 
    \leq e^{-\veps^2 \eta M/k^2}.
$$  
\end{lemma}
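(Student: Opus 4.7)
The plan is to mirror Lemma~\ref{lem:samplingLBfixed} but in the \emph{dual} direction: rather than transferring a feasible primal flow from $G$ to $\hatG$, I will transfer an infeasibility witness in the dual of~\eqref{eq:samplingFlowLP}. That dual has edge-length variables $\len_{vt}\ge 0$ for $(v,t)\in E$ and distance variables $\dist_{st}\ge 0$ per terminal pair, with objective $\min \sum_{(v,t)\in E} c_{vt}\len_{vt}$ and constraints $\sum_{s,t}d_{st}\dist_{st}\ge 1$ together with $\dist_{st}\le \len_{sv}+\len_{vt}$ for every 2-hop triple $(s,v,t)$. Condition~(i) and strong duality force the dual optimum strictly below $1+\veps$, so I fix an optimal $(\len^*,\dist^*)$ normalized so that $\sum_{s,t}d_{st}\dist^*_{st}=1$ and $\sum_{(v,t)\in E}c_{vt}\len^*_{vt}<1+\veps$.

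Plugging the \emph{same} $(\len^*,\dist^*)$ into the analogous dual LP for $\hatG$ (whose only change from $G$ is $c_{vt}\mapsto \hat c_{vt}=(I_v/\tilp_v)c_{vt}$), weak LP duality implies that $(1+\veps)^2\dem$ cannot be routed in $\hatG$ along 2-hop paths whenever
\[
Y \eqdef \sum_{(v,t)\in E}\hat c_{vt}\len^*_{vt} = \sum_{v\in V\setminus T}\tfrac{I_v}{\tilp_v}A_v \;<\; (1+\veps)^2, \qquad A_v \eqdef \sum_{t\in T}c_{vt}\len^*_{vt}.
\]
The summands $Y_v\eqdef (I_v/\tilp_v)A_v$ are independent non-negative, each deterministic (when $\tilp_v=1$) or taking values in $[0,A_v/p_v]$, with $\EX[Y]=\sum_v A_v<1+\veps$. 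Since $(1+\veps)^2\ge (1+\veps)\EX[Y]$, a pointwise range bound $b\eqdef \max_v A_v/p_v=O(k/(\eta M))$ combined with Theorem~\ref{thm:ChernoffPrime} (or its additive/Bernstein form when $\EX[Y]$ is small) yields $\Pr[Y\ge (1+\veps)^2]\le e^{-\Omega(\veps^2\eta M/k)}$, which is stronger than the lemma's $e^{-\veps^2\eta M/k^2}$ for every $k\ge 1$.

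The technical heart is producing the range bound $A_v/p_v=O(k/(\eta M))$, equivalently finding per non-terminal $v$ a pair $(s,t)$ with $F_{st,v}/F_{st}\ge \Omega(\eta A_v/k)$. The plan has three ingredients: (a) since $\sum_{s,t}d_{st}\dist^*_{st}=1$ and $d_{st}/F_{st}\le 1$, a maximum dominates any convex combination, giving $p_v/M\ge \sum_{s,t:d_{st}>0}(d_{st}/F_{st})\,F_{st,v}\dist^*_{st}$; (b) condition~(ii), $d_{st}/F_{st}\ge\eta$, upgrades this to $p_v/M\ge \eta\sum_{s,t:d_{st}>0}F_{st,v}\dist^*_{st}$; (c) primal complementary slackness pairs each $t$ with $\len^*_{vt}>0$ to some $s(t)\in T$ routing at least $c_{vt}/k$ units of the optimal primal flow through the 2-hop $s(t)\text{-}v\text{-}t$, whence $d_{s(t),t}>0$, $\dist^*_{s(t),t}=\len^*_{s(t),v}+\len^*_{vt}$, and $c_{s(t),v}\ge c_{vt}/k$; therefore $F_{s(t),t,v}\dist^*_{s(t),t}\ge (c_{vt}/k)\len^*_{vt}$. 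The witnessing pairs $(s(t),t)$ for distinct $t$ are themselves distinct, so summing over $t$ gives $\sum_{s,t:d_{st}>0}F_{st,v}\dist^*_{st}\ge A_v/k$, closing the loop.

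The hard part will be step~(c): pure dual feasibility only yields $F_{st,v}\dist^*_{st}\le c_{sv}\len^*_{sv}+c_{vt}\len^*_{vt}$, which points in the ``wrong'' direction and would discard most of $A_v$ under naive averaging. What saves the argument is primal complementary slackness on the capacity constraint for edge $(v,t)$---$\len^*_{vt}>0$ forces $\sum_s f^{*st}_v=c_{vt}$---combined with pigeonhole over the $\le k$ terminals, producing a witness $s(t)$ that carries at least a $1/k$ fraction of that flow; CS on the primal variable $f^{*s(t),t}_v$ then pins down $\dist^*_{s(t),t}$ along the 2-hop $s(t)\text{-}v\text{-}t$, and the bound $c_{s(t),v}\ge f^{*s(t),t}_v\ge c_{vt}/k$ gives the required lower bound on $F_{s(t),t,v}$. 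Substituting $b=O(k/(\eta M))$ into Theorem~\ref{thm:ChernoffPrime} completes the proof.
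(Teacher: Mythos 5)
Your high-level skeleton is the same as the paper's: pass to the dual LP for 2-hop flows, fix an optimal normalized dual $(\len^*,\dist^*)$ with value $z<1+\veps$, observe that plugging the same lengths into the dual for $\hatG$ gives a random upper bound $Y=\sum_v (I_v/\tilp_v)A_v$ on $\lambda_\hatG$ with $\EX[Y]=z$, and then invoke the Chernoff variant. The genuine divergence is in how you obtain the per-vertex range bound $A_v/\tilp_v \le O(k/(\eta M))$ (the paper's Claim~\ref{cl:samplingUBcontrib}). The paper constructs an explicitly modified \emph{feasible} dual solution by replacing the lengths on edges incident to $v$ with a scaled sum of the single-commodity duals from Lemma~\ref{lem:samplingDualLPsingle}, then uses optimality of the original solution to deduce that $v$'s contribution is bounded by $v$'s contribution under the modified lengths. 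You instead work with an optimal primal $f^*$ paired with $(\len^*,\dist^*)$ and use complementary slackness directly: $\len^*_{vt}>0$ forces the capacity constraint on $(v,t)$ tight, pigeonhole produces a witness $s(t)$, and CS on $f^{*s(t)t}_v>0$ pins down $\dist^*_{s(t),t}=\len^*_{s(t),v}+\len^*_{vt}$ and gives $c_{s(t),v}\ge c_{vt}/k$, hence $F_{s(t),t,v}\dist^*_{s(t),t}\ge (c_{vt}/k)\len^*_{vt}$. This is a more self-contained route that bypasses Lemma~\ref{lem:samplingDualLPsingle} entirely and also yields a range bound $O(k/(\eta M))$, a factor $\Theta(k)$ tighter than the paper's $k^2/(2\eta M)$, so it buys a better exponent $e^{-\Omega(\veps^2\eta M/k)}$.

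Two small assertions in step~(c) deserve full justification. First, ``whence $d_{s(t),t}>0$'': from $f^{*s(t)t}_v>0$ and $\len^*_{vt}>0$, CS gives $\dist^*_{s(t),t}>0$, and then CS on the demand constraint gives $\sum_{v'}f^{*s(t)t}_{v'}=d_{s(t),t}\lambda^*$, which together with $\lambda^*>0$ (guaranteed since condition~(ii) forces $F_{st}>0$ whenever $d_{st}>0$) yields $d_{s(t),t}>0$; you should spell this out. Second, ``the witnessing pairs $(s(t),t)$ for distinct $t$ are themselves distinct'' is true as \emph{ordered} pairs, but the target sum $\sum_{\aset{s,t}:\,d_{st}>0}F_{st,v}\dist^*_{st}$ runs over \emph{unordered} pairs, and it is possible that $s(t_1)=t_2$ and $s(t_2)=t_1$, in which case you would otherwise be double-counting the single term $\aset{t_1,t_2}$. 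This collision can be handled: in that case $f^{*t_1t_2}_v\ge\max\aset{c_{vt_1},c_{vt_2}}/(k-1)$, $F_{t_1t_2,v}\ge f^{*t_1t_2}_v$, and $\dist^*_{t_1t_2}=\len^*_{vt_1}+\len^*_{vt_2}$, so $F_{t_1t_2,v}\dist^*_{t_1t_2}\ge \frac{1}{k-1}(c_{vt_1}\len^*_{vt_1}+c_{vt_2}\len^*_{vt_2})$, preserving the $A_v/(k-1)$ bound; but this case analysis needs to appear. Neither issue is a missing idea, just missing bookkeeping, so the proof is fixable and the alternative route via complementary slackness is sound.
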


Our proof of Lemma \ref{lem:samplingUBfixedDemand} uses LP duality for
flows along 2-hop paths, which we discuss first.  Recall that for a
given demand vector $\dem$, our linear program \eqref{eq:samplingFlowLP} describes
the problem of maximizing concurrent flow along $2$-hop paths.
Its dual LP, written below, has variables $\len_e$
representing the lengths of edges $e\in E$, 
and variables $y_{st}$ representing the distance
(along the shortest 2-hop path) between $s,t\in T$.
\begin{equation} \label{eq:samplingDualLP}
\framebox{ $
\begin{array}{llll}
  \min 
  &  \sum_{v\in V\setminus T}\sum_{t\in N_G(v)} c_{vt}\len_{vt} %
  \\
  \operatorname{s.t.}
  &  \sum_{{s,t}\in\tbinom{T}{2}} d_{st} y_{st} \ge 1 
  \\
  &  y_{st} \le \len_{sv}+\len_{vt} 
  & \forall \aset{s,t}\in\tbinom{T}{2},\ \forall v\in N_G(s)\cap N_G(t)
  \\
  & \len_e \ge 0 
  & \forall e\in E
  \\
  & y_{st} \ge 0 
  & \forall \aset{s,t}\in\tbinom{T}{2}.
\end{array}
$ } \tag{LP4}
\end{equation}
By strong LP
duality, (\ref{eq:samplingDualLP}) has the same value
as~(\ref{eq:samplingFlowLP}) (assuming the primal LP is feasible and
bounded, which happens if, for every demand $d_{st}>0$, there is a
non-terminal $v\in V\setminus T$ connected to both $s$ and $t$ with
edges of positive capacity).

We can use these two LPs to reinterpret our algorithm's sampling probabilities,
namely the values $F_{st}$ and $p_v$ computed in \eqref{eq:Fst} and \eqref{eq:pv}.
(These will be needed in the proof of Lemma \ref{lem:samplingUBfixedDemand}.)
Consider a demand vector $\dem=\mathbf{e}_{s't'}$ for some fixed $s',t'\in T$, 
i.e., a unit demand for commodity $\aset{s',t'}$ and zero otherwise.
We shall assume there is $v\in V\setminus T$ that is connected 
to both $s'$ and $t'$ with edges of positive capacity. 
The next two lemmas analyze the optimal solutions to the two LPs above
for this demand vector.

\begin{lemma} \label{lem:samplingFlowLPsingle}
Fix a demand vector $\dem=\mathbf{e}_{s't'}$.
Then LP \eqref{eq:samplingFlowLP} has an optimal solution with
$f^{s't'}_v\eqdef F_{s't',v}$, all other flows are $0$, and $\lambda\eqdef F_{s't'}$.
\end{lemma}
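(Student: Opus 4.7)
The plan is to prove Lemma~\ref{lem:samplingFlowLPsingle} in two short steps: verify that the proposed assignment is feasible for~\eqref{eq:samplingFlowLP} with value $\lambda=F_{s't'}$, and then show no feasible assignment can do better. Because the only nonzero demand coordinate is $d_{s't'}=1$, there is no benefit to routing any other commodity, so the LP effectively reduces to a single-commodity maximum-flow problem along $2$-hop paths, which trades on the edge-disjointness of distinct $2$-hop paths.

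First, I would check feasibility. For every non-terminal $v\in N_G(s')\cap N_G(t')$, set $f^{s't'}_v\eqdef F_{s't',v}=\min\{c_{s'v},c_{vt'}\}$, set all other $f^{st}_v=0$, and set $\lambda\eqdef F_{s't'}$. The demand constraint for $\{s',t'\}$ becomes $\sum_v f^{s't'}_v = F_{s't'} \ge \lambda\cdot 1$, which holds with equality; demand constraints for other pairs are trivial since their $d_{st}=0$. For the capacity constraint on edge $(v,t')$, only $f^{s't'}_v$ contributes, and $f^{s't'}_v\le c_{vt'}$ by definition of the min; symmetrically for $(v,s')$. All other edges carry no flow. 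Non-negativity is immediate.

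Next, I would show optimality. Consider any feasible solution $(\{f^{st}_v\},\lambda^\star)$. Since only $d_{s't'}>0$, the demand constraints impose no lower bound on flows of other commodities, so we may set them to zero without decreasing $\lambda^\star$. Now the only active variables are $f^{s't'}_v$, and the edge-capacity constraints reduce to $f^{s't'}_v\le c_{s'v}$ (from edge $(v,s')$) and $f^{s't'}_v\le c_{vt'}$ (from edge $(v,t')$); the two $2$-hop paths through distinct middle vertices use disjoint edge pairs, so these constraints are independent across $v$. Hence $f^{s't'}_v\le F_{s't',v}$, and summing over $v$ yields $\lambda^\star \le \sum_v f^{s't'}_v \le F_{s't'}$, matching our construction.

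There is really no obstacle here: the assertion is essentially that a single-commodity $2$-hop max-flow equals the sum of the bottlenecks of the edge-disjoint length-two paths, and the proposed solution saturates each such bottleneck. The only care needed is to point out (i) that optimality permits zeroing out other commodities, and (ii) that distinct middle vertices yield edge-disjoint $2$-hop paths, so the per-$v$ bottlenecks aggregate additively.
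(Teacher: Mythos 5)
Your proposal is correct and takes the same approach as the paper, which simply declares the lemma ``immediate from the fact that $2$-hop flow paths are edge-disjoint.'' You have merely spelled out the feasibility and optimality checks that the paper leaves implicit.
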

\begin{proof}
Immediate from the fact that $2$-hop flow paths are edge-disjoint, 
as explained in \eqref{eq:Fst}.
\end{proof}

\begin{lemma} \label{lem:samplingDualLPsingle}
Fix a demand vector $\dem=\mathbf{e}_{s't'}$.
Then LP \eqref{eq:samplingDualLP} has an optimal solution 
$\aset{\len^{s't'}_e}_{e\in E},\aset{y^{s't'}_{st}}_{s,t\in T}$ 
where every non-terminal $v\in V\setminus T$ contributes to the objective
$\sum_{t \in N_G(v)} c_{vt}\len^{s't'}_{vt} = F_{s't',v}$. 
\end{lemma}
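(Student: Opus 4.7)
The plan is to exhibit an explicit feasible dual solution whose per-vertex contributions are exactly $F_{s't',v}$, and then invoke LP duality together with Lemma~\ref{lem:samplingFlowLPsingle} to conclude optimality.

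First I would set the dual distance variables to $y^{s't'}_{s't'} \eqdef 1$ and $y^{s't'}_{st} \eqdef 0$ for every other pair $\aset{s,t}\in\binom{T}{2}$. This immediately satisfies the demand constraint $\sum_{st} d_{st} y_{st} \ge 1$ since $d_{s't'}=1$.

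Next I would define the edge lengths vertex-by-vertex. For each non-terminal $v$ that is a common neighbor of $s'$ and $t'$ (i.e., $v\in N_G(s')\cap N_G(t')$), distinguish two cases: if $c_{s'v}\leq c_{vt'}$, set $\len^{s't'}_{s'v}\eqdef 1$ and $\len^{s't'}_{vt}\eqdef 0$ for every other $t\in N_G(v)$; otherwise set $\len^{s't'}_{vt'}\eqdef 1$ and $\len^{s't'}_{vt}\eqdef 0$ for every other $t\in N_G(v)$. For every non-terminal $v$ that is not in $N_G(s')\cap N_G(t')$, set $\len^{s't'}_{vt}\eqdef 0$ for all $t\in N_G(v)$; note that $F_{s't',v}=\min\{c_{s'v},c_{vt'}\}=0$ in that case (treating capacities of non-edges as $0$), so the contribution is correct.

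Then I would verify feasibility of the distance constraints. The only non-trivial constraints have the form $y^{s't'}_{st}\leq \len^{s't'}_{sv}+\len^{s't'}_{vt}$. For $\aset{s,t}=\aset{s',t'}$ and $v\in N_G(s')\cap N_G(t')$, the construction makes $\len^{s't'}_{s'v}+\len^{s't'}_{vt'} = 1 = y^{s't'}_{s't'}$, so the constraint holds with equality. For any other pair $\aset{s,t}$, the right-hand side is nonnegative while $y^{s't'}_{st}=0$, so the constraint is trivially satisfied. Finally, the per-vertex contribution to the objective is $\sum_{t\in N_G(v)} c_{vt}\len^{s't'}_{vt} = \min\{c_{s'v},c_{vt'}\} = F_{s't',v}$ by construction, so the total objective equals $\sum_{v\in V\setminus T} F_{s't',v} = F_{s't'}$.

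Optimality is then immediate: by Lemma~\ref{lem:samplingFlowLPsingle} the primal LP \eqref{eq:samplingFlowLP} has optimum value $F_{s't'}$, and by strong LP duality so does \eqref{eq:samplingDualLP} (the primal is feasible and bounded by our standing assumption that there exists $v\in V\setminus T$ connected to both $s'$ and $t'$ with positive-capacity edges, which is all that is required for finite $F_{s't'}>0$). Since our feasible dual solution achieves objective value $F_{s't'}$, it is optimal, and its per-vertex decomposition is the one claimed. There is no real obstacle here---the lemma is essentially a complementary-slackness read-off of Lemma~\ref{lem:samplingFlowLPsingle}; the only care needed is ensuring that, even though different common neighbors $v$ may pick different ``tight'' edges, all of them yield paths of length exactly $1$ from $s'$ to $t'$ through $v$, which is guaranteed because only one length per $v$ is set to $1$ and all other incident lengths are $0$.
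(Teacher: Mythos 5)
Your construction is identical to the paper's: set $y_{s't'}=1$ and all other $y$'s to $0$, assign length~$1$ to the cheaper of the two edges $(s',v)$ and $(v,t')$ for each common neighbor $v$, assign all other edge lengths $0$, verify feasibility, and invoke duality against Lemma~\ref{lem:samplingFlowLPsingle}. The only difference is that you spell out explicitly the case $v\notin N_G(s')\cap N_G(t')$, which the paper leaves implicit; your argument is correct.
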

\begin{proof} 
Let us construct a solution to LP \eqref{eq:samplingDualLP},
denoted $\aset{\len_e},\aset{y_{st}}$ 
(we omit the superscript in this proof to simplify notation).
Let all edges $e$ not incident to either $s'$ or $t'$ have length $\len_e=0$,
and let $y_{st}=0$ for all $\aset{s,t}\neq \aset{s',t'}$.
Let $y_{s't'}=1$ and for every $v\in N_G(s')\cap N_G(t')$, 
let one of the two edges $(s',v)$ and $(v,t')$,
namely the one of cheaper cost
have length $1$, and the other one have length $0$ (breaking ties arbitrarily).
It is easy to verify that this is a feasible solution,
and every non-terminal $v$ contributes to the objective
$
  \sum_{t \in N_G(v)} c_{vt}\len_{vt} 
  = \sum_{t\in \aset{s',t'}} c_{vt}\len_{vt} 
  = \minn{c_{s'v},c_{vt'}} 
  = F_{s't',v}
$. 
Furthermore, the value of this solution is 
$\sum_{v\in V\setminus T} F_{s't',v} = F_{s't'}$.

Observing that the optimal LP value must be at least $F_{s't'}$ 
because of weak LP duality and Lemma \ref{lem:samplingFlowLPsingle},
we conclude that the constructed solution is indeed an optimal one.
\end{proof}

\begin{proof}[Proof of Lemma \ref{lem:samplingUBfixedDemand}]
Fix $\eta>0$ and let $\dem\in \R_+^{\binom{T}{2}}\setminus\aset{0}$
be a demand vector satisfying the two requirements.
We may assume that 
\begin{equation} \label{eq:dst_Fst}
  \forall s,t\in T,\ 
  \text{if $d_{st}>0$ then $F_{st}>0$},
\end{equation}
as otherwise the demand cannot be satisfied and the lemma's assertion holds trivially (the probability is $0$).
By requirement (i), the value of LP \eqref{eq:samplingFlowLP}, 
and thus also of LP \eqref{eq:samplingDualLP}, 
is smaller than $1+\veps$. 
Fix an optimal solution $\aset{\vec \ell, \vec y}$ for the latter LP; %
we can then write its value as 
\[
  z
  \eqdef \sum_{e\in E} c_e\len_e 
  = \sum_{v\in V\setminus T} \sum_{t\in N_G(v)} c_{vt}\len_{vt}
  < 1+\veps.
\]
In addition, the first constraint is tight, i.e., $\sum_{st} d_{st}y_{st}=1$,
as otherwise we can scale the entire solution to obtain a strictly better one.
(This holds for every optimal solution for every demand vector.)
Now consider the same values $\aset{\len_e},\aset{y_{st}}$ as the LP solution
for the graph $\hatG$ and same demand $\dem$,
where we use the viewpoint of step~\ref{it:3'} according to 
which $\hatG$ has the same edges as $G$
but the capacities of edges incident to every $v\in V\setminus T$ 
are scaled by $I_v/\tilp_v$.
This LP solution is obviously feasible also for $\hatG$,
and what remains is to prove a deviation bound on its objective value
\begin{equation}\label{eq:hatZ}
  \widehat Z 
  \eqdef \sum_{v\in V\setminus T}\sum_{t \in N_G(v)} (c_{vt}I_v/\tilp_v) \len_{vt}
  = \sum_{v\in V\setminus T}(I_v/\tilp_v) \sum_{t \in N_G(v)} c_{vt}\len_{vt}.
\end{equation}
By construction $\EX[I_v/\tilp_v]=1$, hence 
$\EX[\widehat Z] = z < 1+\veps$.
To prove a deviation bound on $\widehat Z$ using concentration from
Theorem~\ref{thm:ChernoffPrime}, we need an upper bound on each term
of the summation over $v$'s.  For this, we analyze how each sampling
probabilities $\tilp_v$ (which are set without ``knowing'' the demand
vector $\dem$) relate to the potential contributions $\sum_{t \in
  N_G(v)} c_{vt}\len_{vt}$ (which depend on $\dem$).  The key insight
is captured by the following claim.

\begin{claim} \label{cl:samplingUBcontrib} 
For every non-terminal $v\in V\setminus T$ with $\tilp_v<1$,
its maximum contribution to $z$ is 
$\sum_{t \in N_G(v)} c_{vt}\len_{vt} \le \tilp_v\cdot k^2/(2\eta M)$.
\end{claim}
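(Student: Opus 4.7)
The plan is to apply LP duality and complementary slackness (CS) between the 2-hop flow LP \eqref{eq:samplingFlowLP} and its dual \eqref{eq:samplingDualLP}, so as to express $z_v = \sum_{t \in N_G(v)} c_{vt}\len_{vt}$ in terms of the primal 2-hop flow through $v$. That flow is bottlenecked by $F_{st,v}$, which in turn determines the sampling probability $p_v$ via \eqref{eq:pv}, and so the hypothesis $\tilp_v = p_v < 1$ will translate directly into a bound on $z_v$.

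Concretely, fix an optimal primal $\aset{f^{*st}_v}$ paired with the given optimal dual $\aset{\vec\len,\vec y}$; by optimality we may assume $y_{st}=0$ and $f^{*st}_v=0$ whenever $d_{st}=0$, since such coordinates affect neither LP value. CS then gives (a) $\len_{vt}>0 \Rightarrow c_{vt}=\sum_s f^{*st}_v$, and (b) $f^{*st}_v>0 \Rightarrow y_{st}=\len_{sv}+\len_{vt}$, so in particular $\len_{vt} \le y_{st}$ whenever $f^{*st}_v>0$. Combining these, $c_{vt}\len_{vt}\le \sum_s y_{st}\, f^{*st}_v$ for each $t$, and summing over $t\in N_G(v)$ gives the key inequality
\[
z_v \;\le\; \sum_{s,t} y_{st}\, f^{*st}_v \;\le\; \sum_{s,t} y_{st}\, F_{st,v},
\]
where the second step uses primal feasibility $f^{*st}_v\le \min\aset{c_{sv},c_{vt}} = F_{st,v}$. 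Now $\tilp_v=p_v<1$ together with \eqref{eq:pv} forces $F_{st,v}/F_{st}\le p_v/M$ for every pair, and hypothesis (ii) forces $F_{st}\le d_{st}/\eta$ whenever $d_{st}>0$; combined with the dual normalization $\sum d_{st}y_{st}=1$ these yield
\[
z_v \;\le\; \frac{p_v}{M}\sum_{s,t:\, d_{st}>0} y_{st}\, F_{st} \;\le\; \frac{p_v}{\eta M}\sum d_{st}\,y_{st} \;=\; \frac{p_v}{\eta M},
\]
which is at most $\tilp_v\cdot k^2/(2\eta M)$ for all $k\ge 2$ (and the case $k=1$ is trivial).

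The main technical step is the CS-based rewriting $z_v\le \sum_{s,t}y_{st}\,f^{*st}_v$: it bridges $v$'s dual contribution and its weighted primal flow, which by the very definition of the sampling probability is controlled by the ratios $F_{st,v}/F_{st}$. After that pivot, the remaining bounds only use primal capacity constraints, the definition of $p_v$, and the discretization hypothesis (ii) on $\dem$.
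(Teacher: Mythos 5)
Your proof is correct, but it takes a genuinely different route from the paper's. The paper modifies the fixed optimal dual by replacing the lengths on edges incident to $v$ with a scaled sum of the single-commodity dual optima from Lemma~\ref{lem:samplingDualLPsingle}, checks feasibility of the modified solution, and then invokes optimality to bound $v$'s contribution. You instead bring in a well-chosen optimal primal of~\eqref{eq:samplingFlowLP} and use complementary slackness with the fixed optimal dual to get the clean pivot $z_v\le \sum y_{st}f^{*st}_v$; after that, everything reduces to the capacity bound $f^{*st}_v\le F_{st,v}$, the definition of $p_v$, the hypothesis $d_{st}\ge \eta F_{st}$, and the normalization $\sum d_{st}y_{st}=1$ (which the paper observes is tight for every optimal dual). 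Your argument is arguably shorter and more conceptual, and it even gives a sharper constant: the correct constant from your derivation is $2p_v/(\eta M)$ rather than $p_v/(\eta M)$, because $\sum_{t\in N_G(v)}\sum_{s\in N_G(v)\setminus\{t\}}$ double-counts unordered pairs relative to the normalization $\sum_{\aset{s,t}}d_{st}y_{st}=1$; either way this is well below the stated $k^2/2$ slack. One small fix you should make: you cannot assume $y_{st}=0$ for $d_{st}=0$, since $\vec y$ is fixed by Lemma~\ref{lem:samplingUBfixedDemand} before the claim is applied; however you do not actually need that assumption---it suffices that an optimal primal can be taken with $f^{*st}_v=0$ when $d_{st}=0$ (any optimal primal is in complementary slackness with any optimal dual), which already kills the corresponding terms in $\sum y_{st}f^{*st}_v$ and lets you restrict the subsequent sums to $d_{st}>0$.
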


\begin{proof}[Proof of Claim \ref{cl:samplingUBcontrib}]
Fix $v$ with $\tilp_v<1$, which implies $p_v=\tilp_v<1$.
The plan is to modify the optimal LP solution $\aset{\len_e},\aset{y_{st}}$,
by assigning new lengths to just the edges incident to $v$, and keeping
the old length assignments for the other edges.
Once we verify that the modified solution is feasible,
this modified solution will give us an upper bound on 
$v$'s contribution to the objective in the optimal LP solution.

We set the new edge lengths $\aset{\widetilde \len_e}$ as follows.
Consider a demand $\aset{s',t'}\in\binom{T}{2}$ such that $d_{s't'}>0$,
which implies $d_{s't'}\in [\eta F_{s't'},F_{s't'}]$;
moreover, $F_{s't'}>0$, by~(\ref{eq:dst_Fst}).
Let $\aset{\len^{s't'}_e}_{e\in E},\aset{y^{s't'}_{st}}_{s,t\in T}$ be 
an optimal LP solution for the single-commodity demand $\mathbf{e}_{s't'}$, 
as computed in Lemma~\ref{lem:samplingDualLPsingle}.
Now scale the edge lengths in this solution by $1/(\eta F_{s't'})$,
and add up over all such $\aset{s',t'}$, 
to get the new length for edges incident to $v$.
Formally, for every edge $e$, let
\begin{equation} \label{eq:tildel}
  \widetilde \len_e \eqdef 
  \begin{cases}
    \sum_{s',t':\ d_{s't'}>0} \ \len^{s't'}_e/(\eta F_{s't'})\quad
    & \text{if $e$ is incident to $v$ (in $G$),} \\
    \len_e
    & \text{otherwise.}
  \end{cases}
\end{equation}

To verify this LP solution is feasible, we only need to check that
$y_{st} \leq \widetilde \len_{sv}+\widetilde \len_{vt}$ for all $s,t\in N(v)$.
To this end, fix $s,t\in N(v)$.
We may assume that $d_{st}>0$, as otherwise 
$y_{st}$ can be set to a large enough value without affecting the objective
(strictly speaking, this modifies the LP also in some $y_{st}$
variables). 
We now have
\begin{align*}
  \widetilde \len_{sv}+\widetilde \len_{vt}
  & = \sum_{s',t':\ d_{s't'}>0} (\len^{s't'}_{sv} + \len^{s't'}_{vt}) / (\eta F_{s't'})
  & \text{by plugging \eqref{eq:tildel}}
  \\
  & \geq \sum_{s',t':\ d_{s't'}>0} y^{s't'}_{st} / (\eta F_{s't'})
  & \text{$\vec \len^{s't'},\vec y^{s't'}$ is feasible}
  \\ 
  & \geq y^{st}_{st} / (\eta F_{st})
  & \text{using $s'=s,\ t'=t$}.
\end{align*}
Moreover, the LP solution $\vec \len^{st}, \vec y^{st}$ 
(for the single-commodity demand $\mathbf{e}_{st}$),
satisfies the first constraint of LP, which simplifies to $y^{st}_{st}=1$.
Also the LP solution $\vec \ell, \vec y$ (for demand $\dem$) 
satisfies the first constraint, and then using requirement (ii), we have
$
  1 
  \geq d_{st} y_{st}
  \geq \eta F_{st} y_{st}
$.
Combining our last three estimates, we obtain
\[
  \widetilde \len_{sv}+\widetilde \len_{vt}
  \geq y^{st}_{st} / (\eta F_{st})
  = 1 / (\eta F_{st})
  \geq y_{st},
\]
which completes the verification that the modified LP solution is feasible.

The objective value $\sum_e c_e \len_e$ of the optimal LP solution is
clearly at most the objective value $\sum_e c_e \widetilde \len_e$ of
the modified LP solution, but since we only modified the length of edges
incident to $v$ (in $G$), we get
\begin{align*}
  \sum_\text{$e$ incident to $v$} c_{e}\len_{e} 
  & \le \sum_\text{$e$ incident to $v$} c_{e}\widetilde \len_{e} 
  \\
  & = \sum_\text{$e$ incident to $v$} c_e \cdot \sum_{s',t':\ d_{s't'}>0} \ \len^{s't'}_e/(\eta F_{s't'})
  & \text{by plugging \eqref{eq:tildel}}
  \\
  & = \sum_{s',t':\ d_{s't'}>0} 1/(\eta F_{s't'}) \cdot \sum_\text{$e$ incident to $v$} c_e \ \len^{s't'}_e
  & \text{interchanging summations}
  \\ 
  & = \sum_{s',t':\ d_{s't'}>0} 1/(\eta F_{s't'}) \cdot F_{s't',v}
  & \text{by lemma~\ref{lem:samplingDualLPsingle}}
  \\
  & \leq \sum_{s',t':\ d_{s't'}>0} p_v/(\eta M)
  & \text{by \eqref{eq:pv}}.
\end{align*}
The claim now follows by recalling that $p_v=\tilp_v$ 
and $\card{\binom{T}{2}}\leq k^2/2$.
\end{proof}

We can now continue with the proof of Lemma \ref{lem:samplingUBfixedDemand}
and prove the desired deviation bound on $\widehat Z$.
Recalling \eqref{eq:hatZ}, we can write $\widehat Z=\sum_{v\in
  V\setminus T} \widehat Z_v$;
each non-negative random variable $\widehat Z_v$ is either deterministic if $\tilp_v=1$, or
else $\tilp_v<1$, in which case we apply Claim
\ref{cl:samplingUBcontrib} to get the upper bound 
\[
  \widehat Z_v 
  \leq (1/\tilp_v) \sum_{t \in N_G(v)} c_{vt}\len_{vt}
  \leq k^2/(2\eta M).
\]
Applying Theorem \ref{thm:ChernoffPrime} 
and recalling that $\EX[\widehat Z] = z < 1+\veps$, we have
$
  \Pr[\widehat Z \geq (1+\veps)^2] 
  \leq e^{-\veps^2 (1+\veps)\eta M/k^2}
$,
which completes the proof of Lemma \ref{lem:samplingUBfixedDemand}.
\end{proof}

\begin{proof}[Proof of Eqn.~\eqref{eq:samplingUB}]
The proof generally resembles that of \eqref{eq:samplingLB},
although several details are different and somewhat more complicated.
Set $\eta\eqdef \veps/k^2$ and define
\begin{equation}  \label{eq:samplingUBDdisc}
  \D_{UB} \eqdef 
  \Big\{ \text{$\dem\in \R_+^{\binom{T}{2}}\setminus\aset{0}$ that satisfy requirements 
      (i) and (ii) in Lemma \ref{lem:samplingUBfixedDemand}} 
  \Big\}.
\end{equation}
Then clearly
$
  \card{\D^{UB}} 
  \leq \Big(2+\log_{1+\veps} \tfrac1\eta \Big)^{\binom{k}{2}}
  \leq \Big(\tfrac1\veps \log \tfrac{k}{\veps} \Big)^{k^2}
  \leq \Big(\tfrac{\log k}{\veps}\Big)^{O(k^2)}.
$
Applying Lemma \ref{lem:samplingUBfixedDemand} to each $\dem\in\D^{UB}$ 
and a straightforward union bound, 
we see that with probability at least 
$1-\card{\D^{UB}}\cdot e^{-\veps^2 \eta M/k^2}\ge 0.9$,
for every $\dem \in\D^{UB}$ we have that demand $(1+\veps)^2 \, \dem$ 
cannot be satisfied in $\hatG$ by flow along $2$-hop paths.
We assume henceforth that this high-probability event indeed occurs,
and show how this assumption implies the event in \eqref{eq:samplingUB}.

We thus aim to show that for every $\dem\in \R_+^{\binom{T}{2}}\setminus\aset{0}$ 
we have $\lambda_\hatG(\dem)\leq (1+4\veps)\lambda_G(\dem)$.
By scaling $\dem$ appropriately, 
it suffices to show that whenever $\lambda_\hatG(\dem) \geq 1$,
i.e., the demand $\dem$ can be satisfied in $\hatG$,
the slightly scaled demand $\tfrac{1}{1+4\veps}\ \dem$ can be satisfied in $G$.
By Lemma~\ref{lem:splice}, it suffices to prove a statement that is similar, 
but with the stronger hypothesis
that $\dem$ can be satisfied in $\hatG$ \emph{along $2$-hop paths}.
And indeed, this is what we prove next by way of contradiction.

Suppose, for sake of a contradiction, there is a demand $\dem\neq 0$ 
that can be satisfied in $\hatG$ \emph{along $2$-hop paths},
but $\tfrac{1}{1+4\veps}\dem$ cannot be satisfied in $G$,
i.e., $\lambda_G(\dem) < \tfrac{1}{1+4\veps}$.
Let $\dem^-$ be the vector obtained from $\dem$ by zeroing every
coordinate $d_{st}$ that is smaller than $2\eta F_{st}$.  We can write
this vector as $\dem^-=\dem - \sum_{st} \alpha_{st}\mathbf{e}_{st}$,
where $\mathbf{e}_{st}$ is the standard basis vector for pair $(s,t)$,
and $\alpha_{st}\eqdef d_{st}$ if this value is smaller than $2\eta
F_{st}$, and otherwise $\alpha_{st}\eqdef 0$.  Round each nonzero
coordinate of $\dem^-$ down to the next power of $1+\veps$, to obtain a
demand vector $\dem^{UB} \geq \tfrac{1}{1+\veps}\dem^-$.

\begin{claim} \label{cl:samplingUBinDdisc} 
$\dem^{UB}/(1+\veps)^2\in \D^{UB}$.
\end{claim}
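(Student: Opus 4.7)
\begin{myproof}
The plan is to verify the two defining conditions of $\D^{UB}$ for $\dem^{UB}/(1+\veps)^2$: that (i) $(1+\veps)$ times this vector is not $2$-hop routable in $G$, and (ii) its nonzero coordinates are powers of $1+\veps$ in $[\eta F_{st},F_{st}]$. The main obstacle is the upper endpoint $F_{st}$ in (ii), because $\dem$ is only known to be routable in $\hatG$, where the single-commodity $2$-hop capacity may well exceed $F_{st}$; securing it will require a self-contained application of the high-probability event to short ``probe'' demands.

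For (ii), dividing by $(1+\veps)^2$ preserves the ``power of $1+\veps$'' property, and the lower endpoint $\eta F_{st}$ is immediate from $\dem^{UB}\geq\dem^-/(1+\veps)$ together with $\dem^-_{st}\geq 2\eta F_{st}$ on its support, where the slack factor $(1+\veps)^3\leq 2$ (valid since $\veps<\maxeps$) absorbs the rounding losses. To handle the upper endpoint I will plug into the high-probability event the single-commodity probes $F_{st}^{(p)}\mathbf{e}_{st}$, where $F_{st}^{(p)}$ is the largest power of $1+\veps$ not exceeding $F_{st}$. A direct check gives $F_{st}^{(p)}\mathbf{e}_{st}\in\D^{UB}$, since $(1+\veps)F_{st}^{(p)}>F_{st}$ makes the scaled probe unroutable in $G$; the event then forces $\hat F_{st}<(1+\veps)^2 F_{st}$, and hence $d_{st}<(1+\veps)^2 F_{st}$ whenever $\dem$ is $2$-hop routable in $\hatG$.

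For condition (i), assuming toward a contradiction that $\dem^{UB}/(1+\veps)$ is $2$-hop routable in $G$, I will form the convex combination with the single-commodity flows $F_{st}\mathbf{e}_{st}$ (which lie in $\D(G)$ by definition of $F_{st}$),
\[
  \dem' \eqdef (1-\beta)\cdot\dem^{UB}/(1+\veps) + \sum_{s,t:\,F_{st}>0}(\alpha_{st}/F_{st})\cdot F_{st}\mathbf{e}_{st},
  \quad \text{where } \beta \eqdef \sum_{s,t} \alpha_{st}/F_{st} < 2\eta\tbinom{k}{2}\leq\veps.
\]
The coefficients are non-negative and sum to $1$, so $\dem'\in\D(G)$. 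A short coordinate-wise check, relying on $\dem^-_{st}\leq(1+\veps)\dem^{UB}_{st}$ and on the numeric inequality $(1-\beta)(1+4\veps)\geq(1+\veps)^2$ (immediate for $\beta\leq\veps\leq 1/5$), will show $\dem'\geq \dem/(1+4\veps)$ coordinate-wise. Down-monotonicity of $\D(G)$ then yields $\lambda_G(\dem)\geq 1/(1+4\veps)$, contradicting the standing assumption; the bootstrapping trick for (ii) together with this convex-combination argument parallels the structure of the lower-bound proof of~\eqref{eq:samplingLB}.
\end{myproof}
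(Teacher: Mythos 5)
Your proof is correct and takes essentially the same route as the paper: the upper endpoint in condition (ii) is secured by the same single-commodity probe demands (your $F_{st}^{(p)}$ is the paper's $\widetilde F_{st}$) plugged into the standing high-probability event, and condition (i) is proved by the same contradiction via a convex combination of $\dem^{UB}/(1+\veps)$ with the probes $F_{st}\mathbf{e}_{st}$ and down-monotonicity of $\D(G)$. The only cosmetic difference is that you use the exact coefficient $1-\beta$ rather than the paper's $1-\veps$ (which sums to at most $1$), and both yield the needed numeric inequality under $\veps<\maxeps$.
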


\begin{proof}[Proof of Claim \ref{cl:samplingUBinDdisc}]
  We need to show that $\dem^{UB}/(1+\veps)^2$ satisfies the two conditions of
  Lemma \ref{lem:samplingUBfixedDemand}.  
  Starting with the proof of condition~(i) by way of contradiction,
  let us assume that demand $\dem^{UB}/(1+\veps)$ can be
  satisfied in $G$ by flow along $2$-hop paths.  The set of demands
  satisfiable in this manner (by $2$-hop flow paths in $G$) is convex
  and down-monotone, and by definition contains also the demands $F_{st}
  \mathbf{e}_{st}$ for every $s,t\in T$.  Thus, a linear combination of
  vectors in the set, whose coefficients are non-negative and sum up to
  $(1-\veps) + \sum_{s,t\in T}\frac{\alpha_{st}}{F_{st}} \leq (1-\veps)
  + \tfrac{k^2}{2}\cdot 2\eta = 1$, must also be in the set.  Taking
  the linear combination
\[
  (1-\veps)\cdot \frac{\dem^{UB}}{1+\veps} + \sum_{st\in T} \frac{\alpha_{st}}{F_{st}}\cdot F_{st} \mathbf{e}_{st}
  \geq \frac{1-\veps}{(1+\veps)^2}\, \dem^- + \sum_{st\in T} \alpha_{st} \mathbf{e}_{st}
  \geq \frac{1}{1+4\veps} \Big[ \dem^- + \sum_{st\in T} \alpha_{st} \mathbf{e}_{st} \Big]
  = \frac{1}{1+4\veps}\, \dem,
\] 
we see that $\frac{1}{1+4\veps}\,\dem$ can be satisfied in $G$ 
by flow along $2$-hop paths, 
and clearly also without the restriction on the flow paths.
The latter contradicts our earlier assumption that 
$\lambda_G(\dem) < \tfrac{1}{1+4\veps}$,
and thus proves condition~(i).

We now prove condition~(ii), which asserts that every nonzero coordinate 
$d'_{st}/(1+\veps)^2$ is in the range $[\eta F_{st},F_{st}]$.
One direction is immediate: if $d'_{st}$ is non-zero, then 
$
  d'_{st} 
  \geq d^{-}_{st}/(1+\veps) 
  \geq 2\eta F_{st}/(1+\veps) 
  \geq (1+\veps)^2\eta F_{st}
$.
For the other direction, observe that $F_{st}>0$ because otherwise
$G$ has no $2$-hop path of positive capacity between $s$ and $t$,
which implies the same in $\hatG$, and we get $d'_{st}\leq d_{st}=0$.
Define $\widetilde F_{st}$ to be $F_{st}$ rounded down 
to the next power of $1+\veps$, which means $(1+\veps)\widetilde F_{st} > F_{st}$.
Then the corresponding demand $\widetilde F_{st} \mathbf{e}_{st}\in \D^{UB}$,
and using our assumption about all demands in $\D^{UB}$,
demand $(1+\veps)^2\widetilde F_{st}$ cannot be satisfied by 
flow along $2$-hop paths in $\hatG$. 
Recalling that demand $\dem$ can be satisfied in that manner,
we derive the other direction
$
  d'_{st}
  \leq d^{-}_{st}
  \leq d_{st}
  < (1+\veps)^2 \widetilde F_{st}
  \leq (1+\veps)^2 F_{st}
$.
This completes the proof of requirement (ii), and of Claim \ref{cl:samplingUBinDdisc}.
\end{proof}

We now finish the proof of Eqn.~\eqref{eq:samplingUB}.  Recall that we
assumed the high probability event in
Lemma~\ref{lem:samplingUBfixedDemand} occurred for every demand in
$\D^{UB}$. Using Claim \ref{cl:samplingUBinDdisc} we know that
$\dem^{UB}/(1+\veps)^2$ satisfies the properties of
Lemma~\ref{lem:samplingUBfixedDemand}, which implies that the scaled-up
demand $\dem^{UB}$ cannot be satisfied in $\hatG$ by flow along $2$-hop
paths. Since $\dem^{UB} \leq \dem^- \leq \dem$, also demand $\dem$ cannot be
satisfied in this manner, but this contradicts the choice of $\dem$.
This completes the proof of \eqref{eq:samplingUB}.
\end{proof}
\subsection{An Extension to More General Graphs}
\label{sec:small-graphs}

An extension of the techniques for quasi-bipartite graphs is to the
following case: let $G$ be a terminal network such that if we delete the
terminal set $T$ then each component of $G \setminus T$ has at most $w$
nodes in it. (The case of quasi-bipartite graphs is precisely when
$w=1$.) The sampling technique extends to this case; we now sketch the
ideas for the extension.

Let the vertex sets of the components in $G \setminus T$ be $V_1, V_2,
\ldots, V_l$, with each $\card{V_i} \leq w$. (Again, assume $T$ forms an
independent set.) For each $s,t \in T$, compute a max-flow that is
terminal-free, i.e., it only uses flow-paths that go from $s$ to $t$
using vertices within a single $V_i$, and does not contain terminals as
internal nodes. Let $F_{st, i}$ be the value of this flow using $V_i$,
and $F_{st} = \sum_i F_{st,i}$ be the value of the maximum $s$-$t$
terminal-free flow itself. Observe that $F_{st,i}$ also equals the value
of the $s$-$t$ min-cut within the graph $G[V_i\cup\aset{s,t}]$. Define $p_i := M \cdot
\maxx{ \frac{F_{st,i}}{F_{st}} : s,t \in T }$, and the sampling
probability of component $i$ is then $\tilp_i := \min\{p_i, 1\}$. Now we
sample each component $i$ (i.e., keep subset $V_i$) independently
with probability $\tilp_i$, in which case we scale the capacities
of its incident edges by $1/\tilp_i$, to get overall a graph $\hatG$. 

The analysis proceeds almost unchanged.  The number of vertices in
$\hatG$ is now $O(k^2 M \cdot w)$ with high probability. The proof of the
lower bound~(\ref{eq:samplingLB}) is unchanged, apart from replacing the
use of $2$-hop paths by terminal-free paths. For the upper bound, we
again write down the dual LP for terminal-free flows (analogous
to~(\ref{eq:samplingDualLP})), construct dual solutions $\{
\len^{s't'}_e \}, \{y^{s't'}_{st} \}$ using the max-flow/min-cut duality
(as in Lemma~\ref{lem:samplingDualLPsingle}), and argue that the
contribution of each component $i$ to the LP value is bounded (as in
Claim~\ref{cl:samplingUBcontrib}). The rest of the arguments in the
upper bound proceed analogously to the quasi-bipartite case; 
details omitted.

\ignore{
\section{Directions}

\begin{itemize}
\item Can we get a smaller single-source flow sparsifier? Using
  sampling, say?

\item (Hard?) Can we show any $O(1)$-terminal network where we need
  $\omega(1)$ nodes for exact flow-sparsification?

\item Simpler question: for a tree we can just keep $k-1$ Steiner nodes
  and keep an exact flow sparsifier. Can we show that we cannot do exact
  flow sparsification using $o(k)$ Steiner nodes? What can we get for a
  tree with $(1+\varepsilon)$-flow sparsification?

\end{itemize}
} %

\section{Results Using Flow/Cut Gaps}
\label{sec:flowcuts}

Given the $k$-terminal network and demand matrix $\dem\neq 0$, recall that
$\lambda_G(\dem)$ is the maximum multiple of $\dem$ that can be sent
through $G$. We can also define the \emph{sparsity} of a cut $(S, V
\setminus S)$ to be 
\[ \Phi_G(S;\dem) \eqdef \frac{ \sum_{e \in \partial S}
  c_e }{ \sum_{i,j: |\{i,j\} \cap S| = 1} d_{ij} }, \] 
and the \emph{sparsest cut} as
\[ \Phi_G(\dem) \eqdef \min_{S \subseteq V} \Phi_G(S;\dem) . \] 
Define the \emph{flow-cut gap} as
\[ \gap(G) := \max_{\dem \in \D(G)} \Phi_G(\dem)/\lambda_G(\dem). \] It
is easy to see that $\Phi_G(\dem) \geq \lambda_G(\dem)$ for each demand
vector $\dem$ (and hence $\gap(G) \geq 1$); a celebrated result
of~\cite{LLR95,AR98} shows that for every $k$-terminal network $G$,
the gap is $\gap(G) \leq O(\log k)$. Many results are known about the flow-cut gap
based on the structure of the graph $G$ and that of the support of the
demands in $\D(G)$; in this section we use these results together with 
known results about cut sparsifiers, 
to derive new results about flow sparsifiers.

It will be convenient to generalize the notion of a $k$-terminal
network. Given a $k$-terminal network $G = (V,E,c)$ with its associated
subset $T \subseteq V$ of $k$ terminals, define the
\emph{demand-support} to be another undirected graph $H = (T,F)$ with some
subset of edges $F$ between the terminals $T$. The \emph{demand
  polytope} with respect to $(G,H)$ is the set of all demand vectors
$\dem = (d_e)_{e \in H}$ which are supported on the edges in the
demand-support $H$, that are routable in $G$; i.e.,
\begin{gather}
  \D(G,H) \eqdef \aset{ \dem \in \R_+^{F} :\ \text{demand $\dem$ can be
      routed in $G$} }. \label{eq:2}
\end{gather}
This is a generalization of~(\ref{eq:1}), where we defined $H$ to be the
complete graph on the terminal set~$T$. Define the flow-cut gap with
respect to the pair $(G,H)$ as
\[ \gap(G,H) := 
   \maxx { \Phi_G(\dem)/\lambda_G(\dem) :\ \dem \in \D(G,H)\setminus\aset{0}}. 
\] 

Analogously to a flow sparsifier, we can define cut-sparsifiers. Given a $k$-terminal network $G$ with
terminals $T$, a \emph{cut-sparsifier for $G$ with quality $\beta\geq 1$} is a
graph $G' = (V',E',c')$ with $T \subseteq V'$, such that for every
partition $(A,B)$ of $T$, we have
\[ \mincut_{G}(A,B) \leq \mincut_{G'}(A,B) \leq \beta \cdot
\mincut_{G}(A,B). \] A cut-sparsifier $G'$ is \emph{contraction-based}
if it is obtained from $G$ by increasing the capacity of some edges and
by identifying some vertices (from the perspective of cuts and flows, 
the latter is equivalent to adding infinite capacity edges between vertices).

\begin{theorem}
  \label{thm:cut-flow}
  Given a k-terminal network $G$ with terminals $T$, let $G'$ be a
  quality $\beta\ge1$ cut-sparsifier for $G$. Then for every
  demand-support $H$ and all $\dem \in \R_+^{E(H)}$,
  \begin{gather}
    \frac{1}{\gap(G',H)} \leq \frac{\lambda_{G'}(\dem)}{\lambda_G(\dem)}
    \leq \beta\cdot\gap(G,H). \label{eq:3}
  \end{gather}
  Therefore, the graph $G'$ with edge capacities scaled up by $\gap(G', H)$ 
  is a quality $\beta \cdot \gap(G,H) \cdot \gap(G',H)$
  flow sparsifier for $G$ for all demands supported on $H$.

  Moreover, if $G'$ is a contraction-based cut-sparsifier, then trivially
  $\lambda_G(\dem) \leq \lambda_{G'}(\dem)$,
  and hence $G'$ itself is a quality $\beta\cdot \gap(G,H)$
  flow sparsifier for $G$ for demands supported on $H$.
\end{theorem}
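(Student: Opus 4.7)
The plan is a short chain of inequalities obtained by sandwiching $\lambda$ between the sparsest cut on $G$ and the sparsest cut on $G'$, using the flow-cut gap on each side and the cut-sparsifier guarantee in the middle. The key preliminary observation is that since every demand $\dem \in \R_+^{E(H)}$ is supported on terminal pairs, the sparsest-cut ratio $\Phi_G(S;\dem)$ depends on $S$ only through its intersection $S \cap T$; in other words, taking the minimum over $S \subseteq V$ reduces to taking a minimum over bipartitions $T = A \cup B$, where the numerator becomes $\mincut_G(A,B)$. This is exactly the quantity preserved by a cut-sparsifier, so the guarantee $\mincut_G(A,B) \le \mincut_{G'}(A,B) \le \beta\, \mincut_G(A,B)$ lifts term by term to
\[
  \Phi_G(\dem) \;\le\; \Phi_{G'}(\dem) \;\le\; \beta\,\Phi_G(\dem).
\]

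Given this, the upper bound in~\eqref{eq:3} is immediate: $\lambda_{G'}(\dem) \le \Phi_{G'}(\dem) \le \beta\,\Phi_G(\dem) \le \beta\cdot\gap(G,H)\cdot\lambda_G(\dem)$, where the first step is the trivial direction of max-flow/min-cut and the last step is the definition of $\gap(G,H)$ applied to $\dem \in \D(G,H)$ (if $\dem \notin \D(G,H)$ then $\lambda_G(\dem) < 1$ can still be handled by a scaling argument, or equivalently by noting $\lambda_G(\alpha \dem) = \alpha^{-1} \lambda_G(\dem)$ so the gap inequality extends to all of $\R_+^{E(H)}\setminus\{0\}$). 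For the lower bound, run the chain in the opposite direction: $\lambda_G(\dem) \le \Phi_G(\dem) \le \Phi_{G'}(\dem) \le \gap(G',H)\cdot\lambda_{G'}(\dem)$, which rearranges to $\lambda_{G'}(\dem)/\lambda_G(\dem) \ge 1/\gap(G',H)$.

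For the ``Therefore'' clause, let $G''$ be obtained from $G'$ by multiplying every edge capacity by $\gap(G',H)$; since $\lambda$ scales linearly with capacities, $\lambda_{G''}(\dem) = \gap(G',H)\cdot\lambda_{G'}(\dem)$. Plugging into~\eqref{eq:3} gives $\lambda_G(\dem) \le \lambda_{G''}(\dem) \le \beta\cdot\gap(G,H)\cdot\gap(G',H)\cdot\lambda_G(\dem)$, which is exactly the flow-sparsifier guarantee with quality $\beta\cdot\gap(G,H)\cdot\gap(G',H)$. For the ``Moreover'' clause, note that a contraction-based cut-sparsifier is obtained from $G$ by raising capacities and/or identifying vertices (equivalently, adding infinite-capacity edges); either operation can only enlarge the feasible-flow polytope, so $\D(G) \subseteq \D(G')$ and in particular $\lambda_G(\dem) \le \lambda_{G'}(\dem)$ without any rescaling. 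Combined with the upper bound from~\eqref{eq:3}, this yields quality $\beta\cdot\gap(G,H)$ directly.

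There is no real obstacle here: the whole argument is a bookkeeping exercise once the reduction of $\Phi_G(\dem)$ to a minimum over terminal bipartitions is noted. The only point that deserves care is that the cut-sparsifier's guarantee is stated for bipartitions of $T$, not for arbitrary $S \subseteq V$, so one must verify that the ``extra'' non-terminal vertices contribute nothing new to the sparsest-cut minimization---which is exactly what the terminal-supported demand assumption gives us.
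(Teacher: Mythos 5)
Your argument matches the paper's proof essentially step for step: both reduce the sparsest-cut minimization over $S \subseteq V$ to a minimum over terminal bipartitions with $\mincut$ in the numerator, sandwich $\lambda$ between $\Phi_G$ and $\Phi_{G'}$ via the cut-sparsifier guarantee on one side and the flow-cut gap on the other, and handle the ``Therefore'' and ``Moreover'' clauses by rescaling capacities and by monotonicity of the flow polytope under contraction, respectively. The one imprecision worth fixing is your opening observation: $\Phi_G(S;\dem)$ does \emph{not} depend on $S$ only through $S \cap T$ --- the numerator $\sum_{e \in \partial S} c_e$ clearly depends on which non-terminals $S$ contains. The correct statement, which you in fact use in the very next clause, is that for a fixed intersection $A = S \cap T$ the denominator is determined and the numerator is minimized at $\mincut_G(A, T \setminus A)$, so the \emph{minimum} over all $S$ reduces to a minimum over terminal bipartitions; rephrasing that way removes the false ``in other words'' equivalence without changing anything downstream.
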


\begin{myproof}
  Consider a demand $\dem \in \D(G,H)$; the maximum multiple of it we
  can route is $\lambda_G(\dem)$. For any partition $(A,B)$ of the
  terminal set $T$, let $d(A,B) \eqdef \sum_{\aset{i,j}: |\aset{i,j} \cap A| = 1}
  d_{ij}$. 
  The flow across a cut cannot exceed that cut's capacity, 
  hence $\lambda_G(\dem)\cdot d(A,B) \leq \mincut_G(A, B)$.
  Since $G'$ is a cut sparsifier of $G$, we have
  $\mincut_{G}(A, B) \leq \mincut_{G'}(A, B)$, and together we obtain
  \[ \lambda_G(\dem) \leq \frac{\mincut_{G}(A, B)}{ d(A,B) }. \] 
  Minimizing the right-hand side over all partitions $(A,B)$ of the terminals, 
  we have $\lambda_G(\dem) \leq \Phi_{G'}(\dem)$. The flow-cut gap for $G'$
  implies that $\lambda_G(\dem) \leq \gap(G',H)\cdot\lambda_{G'}(\dem)$,
  which shows the first inequality in~(\ref{eq:3}). 
  For the second one, we just reverse the roles of $G$
  and $G'$ in the above argument, but now have to use that $
  \mincut_{G'}(A,B) \leq \beta \cdot \mincut_{G}(A,B)$ to get 
  $\frac{\lambda_{G'}(\dem)}{\beta} \leq \frac{\mincut_{G}(A, B)}{ d(A,B) }$, 
  and hence eventually that 
  $\lambda_{G'}(\dem) \leq \beta\cdot \gap(G,H)\cdot \lambda_{G}(\dem)$.

  For the second part of the theorem, observe that if $G'$ is
  contraction-based, then it is a better flow network than $G$, which
  means $\lambda_{G'}(\dem) \geq \lambda_{G}(\dem)$.
\end{myproof}

This immediately allows us to infer the following results.

\begin{corollary}[Single-Source Flow Sparsifiers]
  \label{cor:single-source}
  For every $k$-terminal network $G$, there exists a graph $G'$ with
  $2^{2^k}$ vertices that preserves (exactly) all single-source and
  two-source flows.%
  \footnote{A two-source flow means that there are two terminals $t',t''\in T$
    such that every non-zero demand is incident to at least one of $t',t''$. 
    Single-source flows are defined analogously with a single terminal.}
\end{corollary}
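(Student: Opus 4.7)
The plan is to combine Theorem~\ref{thm:cut-flow} with the known exact cut-sparsifier of size $2^{2^k}$ due to \cite{HKNR98,KRTV12}. Take $G'$ to be this contraction-based cut-sparsifier of quality $\beta=1$. Since $G'$ arises from $G$ by identifying vertices (equivalently, by adding infinite-capacity edges), the ``moreover'' clause of Theorem~\ref{thm:cut-flow} gives $\lambda_{G'}(\dem)\geq\lambda_G(\dem)$ automatically for every demand $\dem$, while the main inequality gives $\lambda_{G'}(\dem)\leq \beta\cdot\gap(G,H)\cdot\lambda_G(\dem)=\gap(G,H)\cdot\lambda_G(\dem)$ whenever $\dem$ is supported on $H$.

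The remaining task is to show $\gap(G,H)=1$ whenever $H$ is a single-source or a two-source demand-support. For the single-source case, let $H$ consist of the edges incident to one fixed terminal $t'\in T$; then every $\dem\in\D(G,H)\setminus\aset{0}$ is a single-source multi-sink instance. Merging all sinks into one super-sink via auxiliary edges of capacity proportional to the $d_{t',v}$ (and scaling the concurrency parameter accordingly) reduces concurrent flow to an ordinary $t'$-to-super-sink maximum flow, and the max-flow/min-cut theorem yields $\lambda_G(\dem)=\Phi_G(\dem)$, so $\gap(G,H)=1$. For the two-source case, let $H$ consist of edges incident to $t'$ or $t''$. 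Splitting the demands according to which of $\aset{t',t''}$ their $H$-endpoint is, and introducing one super-sink per group, realizes the concurrent problem as a two-commodity flow in an undirected network, for which Hu's two-commodity flow theorem (applied in its concurrent form) gives $\gap=1$.

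Applying Theorem~\ref{thm:cut-flow} in both cases yields $\lambda_G(\dem)\leq\lambda_{G'}(\dem)\leq\lambda_G(\dem)$, so the same graph $G'$, with its $2^{2^k}$ vertices, exactly preserves both single-source and two-source flows.

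The main obstacle I expect is the two-source case: Hu's theorem is classically stated for two single-pair commodities $(s_1,t_1),(s_2,t_2)$, so one must carefully justify the reduction from concurrent two-source multi-sink demands to a genuine two-commodity instance, verifying that the super-sink construction preserves both the concurrent-flow value $\lambda_G(\dem)$ and every sparsest-cut ratio $\Phi_G(S;\dem)$, so that the flow-equals-cut conclusion in the reduced instance transports back to the original $(G,H)$. Once this is in place, the corollary follows immediately from Theorem~\ref{thm:cut-flow} and the cited $2^{2^k}$-size exact cut-sparsifier.
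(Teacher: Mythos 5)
Your approach is the same as the paper's at the structural level: invoke Theorem~\ref{thm:cut-flow} with the $2^{2^k}$-size contraction-based exact cut-sparsifier of \cite{HKNR98,KRTV12} ($\beta=1$), then argue the flow-cut gap is $1$ for single- and two-source demand supports. The only divergence is in how the gap-$1$ fact is established. The paper simply cites a known result (Schrijver's book, Theorem~71.1c: if the demand graph $H$ has a vertex cover of size at most $2$, then $\gap(G,H)=1$), whereas you sketch a direct proof. Your single-source sketch is the standard parametric super-sink reduction: with auxiliary edges of capacity $\lambda d_{t',v}$ into a super-sink $t^*$, max-flow/min-cut on the augmented graph shows $\lambda_G(\dem)=\min_{S \ni t'}\Phi_G(S;\dem)$. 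Your two-source sketch via Hu's two-commodity theorem is also the right idea, and you correctly flag that it is the delicate step: you need to add (not contract) two fresh super-sinks $s_1,s_2$ with auxiliary edges of capacity $\lambda d_{t',v}$ and $\lambda d_{t'',w}$, absorb any demand on the pair $\{t',t''\}$ into one of the two commodities, and then check that Hu's cut condition on the augmented graph is equivalent to $\Phi_G(\dem)\ge\lambda$ in the original $G$ (this works because the auxiliary-edge contribution cancels against the demand term on both sides of the cut inequality). These details are precisely what the Schrijver citation packages up; your route re-proves the cited theorem rather than applying it, but yields the same corollary and does not change the size bound or the final application of Theorem~\ref{thm:cut-flow}.
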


\begin{myproof}
  Hagerup et al.~\cite{HKNR98} show that all graphs have
  (contraction-based) cut-sparsifiers with quality $\beta=1$ and size
  $2^{2^{k}}$ 
  (see also \cite{KRTV12} for a slight improvement for undirected graphs). 
  Moreover, it is known that whenever $H$ has a vertex
  cover of size at most $2$, the flow-cut gap is exactly 
  $\gap(G,H)=\gap(G',H)=1$~\cite[Theorem~71.1c]{Sch-book}.
\end{myproof}

\begin{corollary}[Outerplanar Flow Sparsifiers]
  \label{cor:translate-planar}
  If $G$ is a planar graph where all terminals $T$ lie on the same face,
  then $G$ has an exact (quality $1$) flow sparsifier with $O(k^2 2^{2k})$ vertices.
  In the special case where $G$ is outerplanar, 
  the size bound improves to $O(k)$.
\end{corollary}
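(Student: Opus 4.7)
The plan is to apply the contraction-based part of Theorem~\ref{thm:cut-flow}: if $G'$ is a contraction-based cut-sparsifier of $G$ with quality $\beta$, then $G'$ itself is a quality $\beta \cdot \gap(G,H)$ flow sparsifier for demands supported on $H$. Taking $H$ to be the complete graph on the terminal set $T$, we need two ingredients: (i) a contraction-based exact ($\beta=1$) cut-sparsifier of the appropriate size, and (ii) a proof that $\gap(G,H)=1$ for the graph class under consideration.

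For the first bullet (planar graphs with all terminals on a single face), ingredient~(ii) is the celebrated Okamura--Seymour theorem, which asserts that in a planar network with demand-support on a single face, the cut condition is sufficient for routability, i.e., $\gap(G,H) = 1$. For ingredient~(i), we invoke the known exact contraction-based cut-sparsifier for planar graphs with terminals on one face, whose size is $O(k^2 2^{2k})$ (the improvement over the generic $2^{2^k}$ bound of~\cite{HKNR98} comes from the fact that, for each of the $2^k$ terminal bipartitions, the minimum cut in a planar graph with terminals on one face has a highly constrained structure and can be ``encoded'' with only $O(k^2 2^k)$ additional Steiner vertices). Applying Theorem~\ref{thm:cut-flow} then yields an exact flow sparsifier of the claimed size.

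For the second bullet (outerplanar graphs), the Okamura--Seymour argument still gives $\gap(G,H)=1$, since an outerplanar graph is planar with every vertex (and hence every terminal) on the outer face. The size improvement to $O(k)$ comes from the additional rigidity of outerplanar graphs, which admit a direct $O(k)$-vertex exact contraction-based cut-sparsifier; alternatively, since every outerplanar graph is series-parallel, one can simply invoke Theorem~\ref{thm:sp} to obtain an $O(k)$-vertex exact flow sparsifier without even passing through the cut-flow machinery.

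The main obstacle is locating and citing the correct cut-sparsifier construction for planar graphs with terminals on one face (the hardest technical ingredient); the flow-cut gap input is the classical theorem of Okamura and Seymour, and the composition through Theorem~\ref{thm:cut-flow} is purely mechanical once both ingredients are in hand.
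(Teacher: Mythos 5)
Your proposal matches the paper's proof in essence: combine Theorem~\ref{thm:cut-flow} with the Okamura--Seymour theorem (giving $\gap(G,H)=1$) and a quality-$1$ contraction-based cut sparsifier of the right size. The references you flag as the missing ingredient are Krauthgamer--Rika~\cite{KR13} for the $O(k^2 2^{2k})$ planar cut sparsifier (which in fact holds for \emph{all} planar graphs, not just those with terminals on one face, though that is all the corollary needs) and Chaudhuri et al.~\cite{CSWZ00} for the $O(k)$ outerplanar cut sparsifier. One minor difference: the paper additionally notes $\gap(G',H)=1$, using that a contraction-based sparsifier of a planar/outerplanar graph with terminals on the outer face remains in the same class, whereas you rely only on the contraction-based part of Theorem~\ref{thm:cut-flow}; both routes deliver quality~$1$. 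Your alternative derivation of the outerplanar bound from Theorem~\ref{thm:sp} (via outerplanar $\Rightarrow$ series-parallel) is a valid and non-circular shortcut, since Theorem~\ref{thm:sp} depends only on Corollary~\ref{cor:translate-K4} and the Composition Lemma, but the paper does not take that route.
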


\begin{myproof}
  Okamura and Seymour~\cite{OS81} show that the flow-cut gap for
  planar graphs with all terminals on a single face is $\gap(G,H)=1$, and
  Krauthgamer and Rika~\cite{KR13} show that every planar graph $G$
  has a contraction-based cut-sparsifier $G'$ with quality $\beta=1$ and size
  $O(k^2 2^{2k})$. And since the latter is contraction-based, also this $G'$ 
  is planar with all terminals on a single face, hence $\gap(G',H)=1$.

  To improve the bound when $G$ is outerplanar, we use a result of 
  Chaudhuri \etal~\cite[Theorem 5(ii)]{CSWZ00} that every outerplanar
  graph $G$ has a cut-sparsifiers $G'$ with quality $\beta=1$ and size $O(k)$,
  and moreover, this also $G'$ is outerplanar and thus $\gap(G',H)=1$.
\end{myproof}

\begin{corollary}[$4$-terminal Flow Sparsifiers]
  \label{cor:translate-K4}
  For $k \leq 4$, every $k$-terminal network has an exact (quality $1$)
  flow sparsifier with at most $k+1$ vertices.
\end{corollary}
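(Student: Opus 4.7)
The plan is to invoke Theorem~\ref{thm:cut-flow} with demand support $H = K_k$ (the complete graph on terminals), using two classical ingredients specific to the regime $k \leq 4$: a contraction-based exact cut-sparsifier on at most $k+1$ vertices, and a flow-cut gap of exactly $1$ for the $K_k$ demand.

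\emph{Step 1.} Show that for $k \leq 4$, every $k$-terminal network $G$ admits a contraction-based cut-sparsifier $G'$ with quality $\beta = 1$ and $\card{V(G')} \leq k+1$. For $k \leq 2$ this is trivial --- keep only the terminals with the single edge between them having capacity equal to the pairwise mincut. For $k \in \aset{3,4}$ this is the classical small-$k$ ``mimicking network'' construction, where the general $2^{2^k}$ bound of Hagerup et al.~\cite{HKNR98} is improved to $k+1$ by a direct gadget argument tailored to the (constantly many) terminal bipartitions.

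\emph{Step 2.} Show that $\gap(G, K_k) = 1$ for every $k$-terminal network with $k \leq 4$. For $k \leq 3$ the complete demand graph has at most three edges and gap~$1$ follows from Hu's two-commodity flow theorem (single-source for $k \leq 2$ and a triangle argument for $k = 3$). For $k = 4$, gap~$1$ for $K_4$ demand is the classical multicommodity max-flow/min-cut theorem of Lomonosov and Seymour. Since $G'$ is contraction-based, it has the same terminal set as $G$ and thus $\gap(G', K_k) = 1$ as well, though the ``moreover'' clause of Theorem~\ref{thm:cut-flow} means we do not actually need this.

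\emph{Step 3.} Apply the ``moreover'' part of Theorem~\ref{thm:cut-flow}: with $\beta = 1$ and $\gap(G, K_k) = 1$, we obtain $\lambda_G(\dem) \leq \lambda_{G'}(\dem) \leq \beta \cdot \gap(G, K_k) \cdot \lambda_G(\dem) = \lambda_G(\dem)$ for every demand $\dem$ between the terminals, so $G'$ is an exact flow-sparsifier with at most $k+1$ vertices as required.

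The main obstacle is that both ingredients --- the $(k+1)$-vertex exact cut-sparsifier for $k \in \aset{3,4}$ and the flow-cut gap-$1$ property for the $K_4$ demand --- are classical but nontrivial theorems, so the work of the corollary is really to package these inputs through Theorem~\ref{thm:cut-flow}, which cleanly converts a small exact cut-sparsifier into an exact flow-sparsifier of the same size whenever the relevant flow-cut gap is~$1$.
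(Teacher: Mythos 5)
Your proposal is correct and takes essentially the same route as the paper: cite an exact $(k+1)$-vertex cut-sparsifier for $k\leq 4$ (the paper credits Chaudhuri \etal~\cite{CSWZ00}), cite the exactness of the flow-cut gap for $\leq 4$ terminals (the paper credits Lomonosov~\cite{Lomonosov85}), and push these through Theorem~\ref{thm:cut-flow}. One small caution: you assert without reference that the small-$k$ cut-sparsifiers are contraction-based so that the ``moreover'' clause applies, and that is not obviously the case for the construction in~\cite{CSWZ00}; also, your stated reason that $\gap(G',K_k)=1$ (``since $G'$ is contraction-based, it has the same terminal set'') is the wrong justification --- every cut-sparsifier shares the terminal set by definition, and what actually matters is that $G'$ has at most $4$ terminals so Lomonosov's theorem applies directly. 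Since you do note $\gap(G',K_k)=1$ anyway, the general (non-``moreover'') part of Theorem~\ref{thm:cut-flow} with $\beta=\gap(G,H)=\gap(G',H)=1$ saves the argument, which is exactly the route the paper takes.
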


\begin{myproof}
  Lomonosov~\cite{Lomonosov85} shows that the flow-cut gap for at most $4$
  terminals is $\gap(G,H)=\gap(G',H)=1$, 
  and Chaudhuri \etal~\cite{CSWZ00} show that all graphs with $k\leq 5$
  terminals have cut-sparsifiers with quality $\beta=1$ and at most 
  $k+1$ vertices. (See also \cite[Table 1]{KRTV12}.)
\end{myproof}

The above two results are direct corollaries, but we can use
Theorem~\ref{thm:cut-flow} to get flow-sparsifiers with quality~1 from
results on cut-sparsifiers, even when the flow-cut gap is more than
$1$. E.g., for series-parallel graphs we know that the flow-cut gap is 
exactly~2~\cite{CJLV08,CSW13,LR10}, but we give in the next section
quality $1$ flow-sparsifiers by using cut-sparsifiers more directly.

\subsection{Series-Parallel Graphs and Graphs of Bounded Treewidth}

To begin, we give some definitions. An $s$-$t$ \emph{series-parallel
  graph} is defined recursively: it is either~(a)~a single edge
$\aset{s,t}$, or (b)~obtained by taking a parallel composition of two
smaller $s$-$t$ series-parallel graphs by identifying their $s$ and $t$
nodes, or (c)~obtained by a series composition of an $s$-$x$
series-parallel graph with an $x$-$t$ series-parallel graph 
by identifying their $x$ node. See
Figure~\ref{fig:sp}. The vertices $s, t$ are called the \emph{portals}
of $G$, and the rest of the vertices will be called the \emph{internal}
vertices.

\begin{figure}[h]
\centering
\includegraphics[scale=0.5]{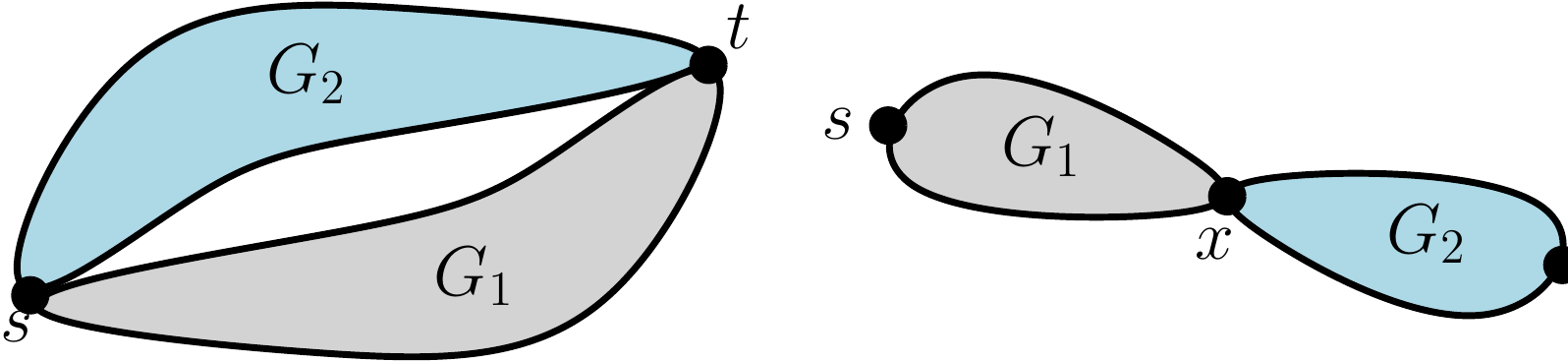}
\caption{Series and Parallel Compositions}
\label{fig:sp}
\end{figure}

\begin{theorem}[Series-Parallel Graphs]
  \label{thm:sp}
  Every $k$-terminal series-parallel network $G$ admits an exact (quality $1$)
  flow sparsifier with $O(k)$ vertices.
\end{theorem}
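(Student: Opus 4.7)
The plan is to reduce $G$ to a canonical irreducible form via flow-preserving local operations, and then bound the size of the resulting graph by analyzing its series-parallel decomposition. I would apply three reductions exhaustively: \textbf{(i)}~merge any two parallel edges between the same vertex pair into a single edge of combined capacity; \textbf{(ii)}~smooth any degree-$2$ non-terminal $v$ (with neighbors $u,w$ and edge capacities $c_1,c_2$) by replacing $v$ and its two incident edges with a single edge $(u,w)$ of capacity $\min\{c_1,c_2\}$; and \textbf{(iii)}~for any $2$-separator $\{u,v\}$ of $G$ whose enclosed subgraph $H$ contains no terminals other than possibly $u,v$, replace $H$ by a single edge $(u,v)$ of capacity $\mincut_H(u,v)$. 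Each operation preserves all multicommodity terminal flows exactly --- (i) is immediate, (ii) because a degree-$2$ non-terminal acts as a simple bottleneck, and (iii) because any flow interacting with $H$ must enter and exit through $\{u,v\}$ and is therefore bounded by the $u$-$v$ max-flow in $H$. The operations also preserve series-parallelness, so the reduced graph $G^{\ast}$ is an SP network with the same multicommodity flow polytope as $G$.

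The heart of the argument is then to show $|V(G^{\ast})|=O(|T|)$. For this I would analyze the SP-tree $\mathcal{T}^{\ast}$ of $G^{\ast}$: a binary tree whose leaves correspond to edges of $G^{\ast}$ and whose internal nodes correspond to series or parallel compositions, each labeled by a portal pair in $V(G^{\ast})$. Irreducibility of $G^{\ast}$ forces strong constraints on $\mathcal{T}^{\ast}$: no parallel node may have two single-edge children (else (i) applies), no series node at a non-terminal vertex may have a single-edge child (else (ii) applies after merging), and no internal node's subgraph may be terminal-free (else (iii) applies). Together, these conditions imply that every internal node of $\mathcal{T}^{\ast}$ witnesses at least one terminal lying strictly inside its subgraph but not among its portals. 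A charging argument, in which each internal node is charged to such a terminal, then yields $O(|T|)$ internal nodes in $\mathcal{T}^{\ast}$ and hence $|V(G^{\ast})|=O(|T|)$.

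The main obstacle is making the charging scheme both clean and tight: the local irreducibility conditions preclude wasted structure, but one must ensure that each terminal is charged $O(1)$ times globally across the recursive structure of $\mathcal{T}^{\ast}$. An alternative route, if the direct SP-tree analysis proves cumbersome, is a structural induction using the Composition Lemma (Lemma~\ref{lem:compose}): handle parallel compositions and series compositions at terminal vertices by routine gluing of sub-sparsifiers, and handle series compositions at non-terminal middle vertices $x$ by first checking whether one of the two sides is terminal-free (in which case reduce it to a single edge via (iii)), and otherwise adding $x$ as an auxiliary terminal in both subproblems and recursing. The amortized invariant carrying this induction (for instance, a bound of the form $|V(H')|\le c\cdot|T_H^{\mathrm{int}}|+O(1)$ with $T_H^{\mathrm{int}}=T_H\setminus\{s,t\}$) must absorb the extra Steiner vertex introduced in the last sub-case; bounding the aggregate count of such Steiner vertices by $O(|T|)$ is the crux of the argument and is exactly what the charging scheme above delivers.
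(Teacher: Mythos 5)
Your local-reduction approach has a genuine gap: the rules (i)--(iii) do not drive every series-parallel graph down to $O(k)$ vertices, so the charging argument you hope for cannot exist. Consider the fan graph on vertices $s = x_0, x_1, \ldots, x_m, t$ with edges $(x_i, x_{i+1})$ for $0 \le i < m$ and $(x_i, t)$ for $0 \le i \le m$, and take the terminal set $T = \{s, t, x_m\}$, so $k=3$. This is series-parallel (it is outerplanar), yet it is irreducible under your rules: there are no parallel edges; each $x_i$ with $1 \le i \le m-1$ has degree~$3$ (neighbors $x_{i-1}, x_{i+1}, t$) so rule~(ii) never fires; and every $2$-separator is of the form $\{x_i, t\}$, where one side of the cut contains the terminal $s$ and the other contains the terminal $x_m$, so rule~(iii) never fires either. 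Thus $G^* = G$ has $m+2$ vertices with only $3$ terminals, refuting $|V(G^*)| = O(|T|)$. The charging claim ``every internal node of $\mathcal{T}^*$ witnesses a terminal strictly inside its subgraph but not among its portals'' is true after rule~(iii), but it only shows that each subtree has $\ge 1$ internal terminal --- it does not bound the number of internal nodes, because the same terminal can be the sole witness for an unboundedly long chain of nested subtrees (as the fan graph shows).

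The missing idea, which the paper's proof supplies, is that one must also be able to compress pieces with a \emph{constant but larger-than-$2$} number of boundary vertices. The paper locates, in the SP-tree, a deepest subtree $\T'$ containing all remaining internal terminals; the complement $G \setminus G'$ then has only the four terminals $s, t, s', t'$, and is replaced wholesale by an $O(1)$-vertex exact flow sparsifier via Corollary~\ref{cor:translate-K4} (which rests on Lomonosov's $k \le 4$ flow-cut gap result and known exact cut-sparsifier bounds). This is what collapses the long ``chain of $2$-separators with terminals on both sides'' that defeats your reductions: in the fan example, the entire region between $\{s,t\}$ and $\{x_{m-1}, t\}$ has four boundary/terminal vertices and gets replaced in one shot. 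Your sketched alternative route (Composition Lemma plus adding Steiner vertices) is the right skeleton --- indeed it mirrors the paper's recursion --- but you defer the crucial step (bounding the aggregate number of added Steiner vertices) back to the same broken charging scheme, and you never invoke the $4$-terminal sparsifier that makes the recurrence $S(k) \le S(k_1) + S(k_2) + O(1)$ work with $k_1, k_2 \ge 1$ and $k_1 + k_2 \le k$.
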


\begin{myproof}
  The way we build a series-parallel graph $G$ gives us a decomposition tree
  $\T$, where the leaves of $\T$ are edges in $G$, and each internal
  node prescribes either a series or a parallel combination of the
  graphs given by the two subtrees.  We can label each node in $\T$ by
  the two portals. We will assume w.l.o.g.\ that $\T$ is binary.

  Consider some decomposition tree $\T$ where the two portals for the
  root node are themselves terminals, and let the number of internal
  terminals in $\T$ be $k$ (giving us a total of $k+2$ terminals,
  including the portals). 
  We construct a sparsifier for $G$ by working on the decomposition 
  tree $\T$ recursively as follows, producing a sparsifier graph with
  at most $S(k)$ vertices, for $S(k)$ that will be determined later.
  Consider the two subtrees $\T_1, \T_2$. The
  easy case is when the number of internal terminals in $\T_1, \T_2$, 
  which we denote $k_1, k_2$, are both strictly less than $k$. Let $G_i$ be the
  graph defined by $\T_i$. In case $G_1, G_2$ are composed in parallel,
  recursively construct for them sparsifiers $G_1', G_2'$, and compose
  these two sparsifiers in parallel to get $G'$; 
  the Composition Lemma~\ref{lem:compose} implies this is a
  sparsifier for $G$. In case they are in series, the middle vertex may
  not be a terminal: so add it as a new terminal, recurse on $G_1, G_2$, 
  and again compose $G'_1, G'_2$. 
  In either case, the number of internal vertices in the new
  graph is $S(k) \leq S(k_1) + S(k_2) + 1$, where we account for adding
  the middle vertex as a terminal.

  Now suppose all the $k$ internal terminals of the root of $\T$ are
  also internal terminals of $\T_1$. In this case, find the node in $\T$
  furthest from the root such that the subtree $\T'$ rooted at this
  node still has $k$ internal terminals, but neither of its child
  subtrees $\T_1', \T_2'$ contains all the $k$ internal
  terminals.\footnote{The easy case above is in fact a special case of this,
    where $\T' = \T$.} 
  There must be such a node, because the leaves of
  $\T$ contain no internal terminals. Say the portals of $\T'$ are $s',
  t'$. And say the graphs given by $\T', \T_1', \T_2'$ are $G', G_1',
  G_2'$. The picture looks like one of the cases in
  Figure~\ref{fig:recursion}. 

  \ifstoc
  \begin{figure}[h]
    \centering
    \includegraphics[scale=0.4]{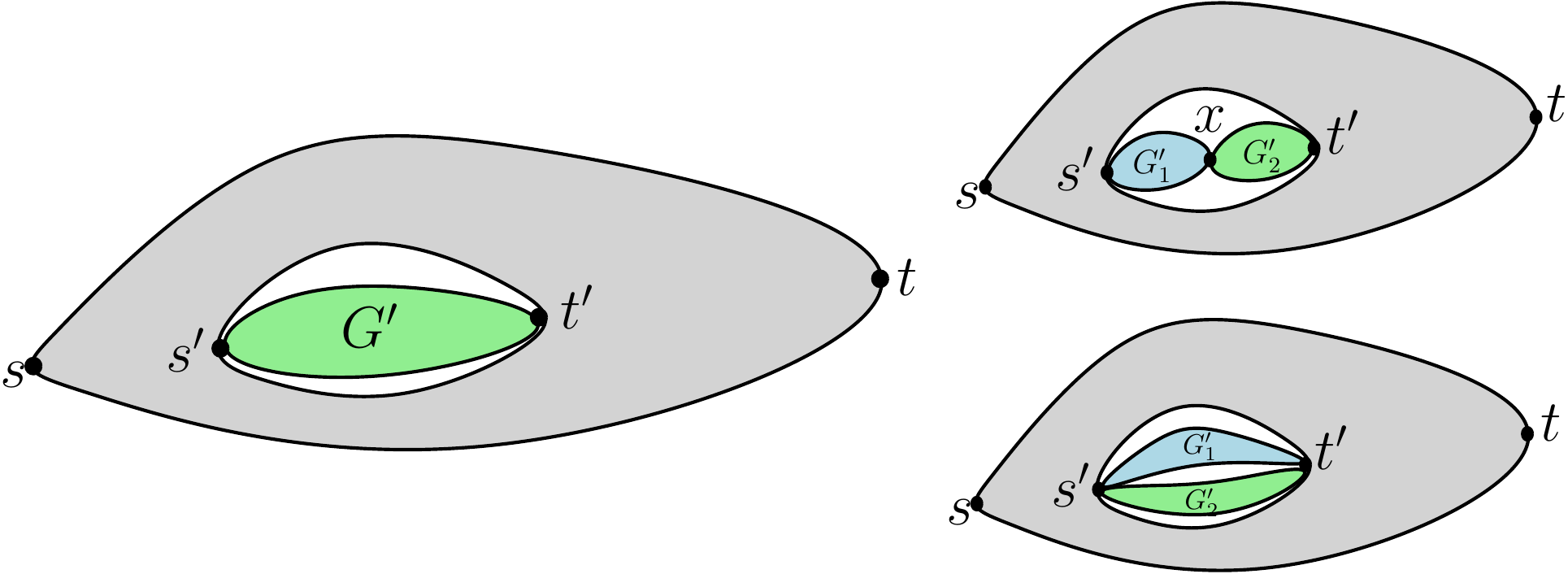}
    \caption{The subgraph $G'$ within $G$}
    \label{fig:recursion}
  \end{figure}
  \else
  \begin{figure}[h]
    \centering
    \includegraphics[scale=0.5]{recursion}
    \caption{The subgraph $G'$ within $G$}
    \label{fig:recursion}
  \end{figure}
  \fi

  Add the two portals $s', t'$ of $G'$, and if it was a series combination
  then also the middle vertex $x$ between $G_1', G_2'$, as new terminals.
  Observe that $G$ is obtained by composing $G\setminus G'$ with $G'$
  (at the new terminals $s',t'$), hence we can apply the Composition
  Lemma to the sparsifiers for $G \setminus G'$ and for $G'$. We can use
  Corollary~\ref{cor:translate-K4} to find a flow sparsifier for $G
  \setminus G'$: it has only $4$ terminals $s,t,s',t'$. To find a flow
  sparsifier for $G'$, we recurse on $G_1', G_2'$ and then combine the
  resulting sparsifiers $H_1', H_2'$ by the Composition Lemma to get a
  sparsifier $H'$ for $G'$.  Overall, we obtain a flow sparsifier for
  $G$ with at most $S(k) \leq S(k_1) + S(k_2) + (c_4 - 4) + 3$ internal
  vertices, where the number of new vertices generated by
  Corollary~\ref{cor:translate-K4} is at most $c_4 - 4$, and we added in
  at most $3$ new terminals (namely $s', t'$ and possibly
  $x$).  

  In either case, we arrive at the recurrence $S(k) \leq S(k_1) + S(k_2)
  + c_4$, where $k_1 + k_2 \leq k$ and $k_1, k_2 \geq 1$. The base case
  is when there are at most $2$ internal terminals, in which case we can
  use Corollary~\ref{cor:translate-K4} again to get $S(1), S(2) \leq
  c_4$.  The recurrence solves to $S(k) \leq (2k - 1)\cdot c_4$.
  Adding the two portal terminals of $\T$ still remains $O(k)$, and
  proves the theorem.
\end{myproof}

\subsection{Extension to Treewidth-$w$ Graphs}
\label{sec:treewidth}

The general theorem about bounded treewidth graphs follows a similar argument
but with looser bounds.  The only fact about a treewidth-$w$ graph $G = (V,E)$
we use is the following.
\begin{theorem}[\cite{Reed92}]
  \label{thm:reed}
  If a graph $G = (V,E)$ has treewidth $w$, then for every subset $T \sse
  V$, there exists a subset $X \sse V$ of $w$ vertices such that each
  component of $G - X$ contains at most $\frac23|T \setminus X|$
  vertices of $T$.
\end{theorem}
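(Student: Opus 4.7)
The plan is to obtain the separator $X$ as the contents of a suitably chosen bag in a tree decomposition of $G$. To begin, I would fix a width-$w$ tree decomposition $(\mathcal{T},\{B_t\}_{t\in V(\mathcal{T})})$ of $G$, so that every bag satisfies $|B_t|\le w$ (modulo the usual off-by-one convention in the definition of treewidth). To reduce the problem to a purely combinatorial one on $\mathcal{T}$, assign each terminal $v\in T$ to a unique bag $B_{\tau(v)}$ containing it (such a bag exists by the covering axiom of tree decompositions), producing a weight function on $V(\mathcal{T})$ of total mass $|T|$.

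The central step is a standard centroid argument on the weighted tree $\mathcal{T}$: there exists a node $t^\star\in V(\mathcal{T})$ such that every connected component of $\mathcal{T}-t^\star$ carries total weight at most $\tfrac{2}{3}|T|$. One finds $t^\star$ by descent: start at an arbitrary node, and while some neighbor's side contains weight exceeding $\tfrac23|T|$, move to that neighbor; the side just vacated then has weight below $\tfrac13|T|$, so the procedure cannot revisit a node and must terminate. I would then take $X\eqdef B_{t^\star}$, which immediately gives $|X|\le w$.

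To verify that $X$ separates $G$ as required, I would invoke the consequence of the tree decomposition axioms that for every vertex $v\in V$, the set $\{t:v\in B_t\}$ induces a connected subtree of $\mathcal{T}$. For $v\notin X$ this subtree avoids $t^\star$, hence lies entirely inside one component of $\mathcal{T}-t^\star$. An edge $(u,v)$ of $G-X$ forces $u$ and $v$ into a common bag $B_s$, so their respective subtrees live in the same component of $\mathcal{T}-t^\star$; chaining along a path shows that the vertices of any connected component of $G-X$ all lie in bags belonging to one component of $\mathcal{T}-t^\star$. Combined with the centroid bound, each connected component of $G-X$ meets $T$ in at most $\tfrac23|T|$ vertices.

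The main obstacle is tightening the conclusion from $\tfrac23|T|$ to the slightly stronger $\tfrac23|T\setminus X|$, since many terminals could in principle lie inside the chosen bag $X$. I would address this by re-running the centroid step after zeroing out the weights of terminals that fall inside $X$, iterating the selection until the bag is stable against this reweighting; in the degenerate case $|T|\le 3w$ one can simply take $X\eqdef T$. Neither adjustment changes the essence of the argument: the real content is the centroid-bag construction and the subtree-connectivity property of tree decompositions.
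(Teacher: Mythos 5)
The paper itself does not prove this statement; it is quoted from Reed~[Reed92] without an in-paper argument, so there is no paper proof to compare against, and I can only assess your argument on its own terms. Your main step --- placing weights on the nodes of a tree decomposition, finding a centroid node $t^\star$ so that every component of $\mathcal{T}-t^\star$ carries weight at most $\tfrac23 |T|$, and then using the connected-subtree property to conclude that every component of $G-B_{t^\star}$ meets $T$ in at most $\tfrac23|T|$ vertices --- is sound, and is indeed the standard route. (Your termination argument is fine: in a tree, a walk that never immediately backtracks is a simple path.)

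The genuine gap is exactly where you flag it: the strengthening from $\tfrac23|T|$ to $\tfrac23|T\setminus X|$. Neither of your proposed patches is correct. The ``iterate the centroid with the weights of terminals inside $X$ zeroed out'' scheme has no termination argument (there is no monotone potential; successive centroid bags can oscillate), and the fallback ``if $|T|\le 3w$ take $X := T$'' yields $|X| = |T|$, which can be as large as $3w$ and so violates the required $|X|\le w$. More fundamentally, taking $X$ to be a full bag cannot work: consider the path $u - c - v$ with $T=\{u,v\}$ and $w=1$. The bags of a width-$1$ decomposition are $\{u,c\}$ and $\{c,v\}$; with either choice of $X$ there is a single remaining component with one terminal, violating $1\le \tfrac23|T\setminus X| = \tfrac23$ (and also $|X|=2>w$). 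Only $X=\{c\}$, a strict subset of a bag, works. This shows the ``off-by-one convention'' you dismiss is not cosmetic: under the standard convention bags have $w+1$ vertices, and the choice of which vertex to discard from the bag is precisely what simultaneously enforces $|X|\le w$ and inflates the denominator $|T\setminus X|$. A correct argument must make that choice explicit (e.g., via the intersection $B_s\cap B_t$ of two adjacent bags, or a bag with a judiciously chosen terminal removed) and re-verify the balance condition for the reduced set, rather than treating the shrinkage as an afterthought.
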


\begin{theorem}
  \label{thm:twk}
  Suppose every $k$-terminal network admits a flow sparsifier of quality
  $q(k)$ and size $S(k)$. Then every $k$-terminal network $G$ with
  treewidth $w$ has a $q(6w)$-quality flow sparsifier with at most $k^4
  \cdot S(6w)$ vertices.
\end{theorem}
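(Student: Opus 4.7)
The approach is to recursively decompose $G$ using the small separators guaranteed by Reed's theorem, promote the separator to a terminal set at each level, and glue the recursive sparsifiers via the Composition Lemma~\ref{lem:compose}. The recursion bottoms out once the number of terminals in a piece has shrunk to $6w$, at which point the hypothesis of the theorem delivers a sparsifier of quality $q(6w)$ and size $S(6w)$ for free.

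\textbf{The recursive step.} Given a subproblem with $k$ terminals $T$ and the graph of treewidth $\le w$, apply Theorem~\ref{thm:reed} to obtain a separator $X$ with $|X|\le w$ such that each component $C$ of $G-X$ satisfies $|T\cap V(C)|\le \tfrac23 |T\setminus X|$. Promote every vertex of $X$ to a terminal; for each component $C$ form the piece $G_C := G[V(C)\cup X]$ with terminal set $T_C := (T\cap V(C))\cup X$. Then $|T_C|\le \tfrac23 (k-|T\cap X|)+w$, which is at most $\tfrac56 k$ whenever $k>6w$, so the recursion strictly decreases the terminal count by a $\tfrac56$ factor until hitting $6w$. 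Recurse on each $G_C$; if $|T_C|\le 6w$ apply the hypothesis directly. Observe $G$ is the iterated $\phi$-merge of all $G_C$ along $X$, and each piece has treewidth at most $w$ (as a subgraph of $G$).

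\textbf{Gluing and quality.} By repeatedly applying the Composition Lemma~\ref{lem:compose} along the shared separator $X$ (whose vertices are terminals in every relevant piece), gluing the recursively constructed $G_C'$ yields a flow-sparsifier of $G$ with respect to $T\cup X$, and hence also with respect to $T$. Since the Composition Lemma preserves quality (taking the max over the two pieces), an induction on recursion depth shows the overall quality is the base-case quality $q(6w)$.

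\textbf{Main obstacle: the size bound.} Showing $|V(G')|\le k^4\cdot S(6w)$ is the delicate part. The recursion depth is only $O(\log_{6/5}(k/w))$, but Reed's theorem can produce many components, so the recursion tree branches unboundedly, and each level contributes up to $w$ auxiliary terminals per piece. The plan is to charge vertices by a potential argument: each piece sparsifier contributes at most $S(6w)$ new ``internal'' vertices at the leaves plus $|X|\le w$ at each internal node, while the aggregate terminal budget across a level grows only by an additive $O(w)$ per piece. Using that the sum of terminals across all pieces at a level is bounded by $k$ plus $w$ times the number of pieces, and that each piece has at least one original terminal in $T$ when non-trivial, one can show the total number of recursive subproblems is at most $\poly(k)$, absorbing constants into the $k^4$ factor. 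The chief care needed is in separating pieces containing only $X$-terminals (which can be handled as near-trivial base cases of size $S(w)\le S(6w)$) from those carrying original terminals, so that the overall vertex budget telescopes to $k^4\cdot S(6w)$.
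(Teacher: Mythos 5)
Your high-level plan is exactly the paper's: apply Theorem~\ref{thm:reed} to split off a separator $X$ of size $\le w$, promote $X$ to terminals, recurse on the pieces $G[V_i\cup X]$, glue the recursive sparsifiers with the Composition Lemma~\ref{lem:compose}, and bottom out at $6w$ terminals. The quality argument and the $\tfrac56$ shrinkage per level are also the same as the paper's.

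The place where both you and the paper are sketchy is the size bound, and you are right to flag it as the crux. The paper simply asserts ``the number of leaves is at most $2^h$,'' which silently assumes binary branching; but $G-X$ can have arbitrarily many components, and your recursion (like the paper's, as literally written) branches on all of them. Your proposed repair, however, does not obviously close the gap. The specific claim ``each piece has at least one original terminal in $T$ when non-trivial'' is false once the recursion depth exceeds $6$: a piece at depth $j$ inherits up to $jw$ terminals from ancestor separators, so for $j\ge 6$ a piece can have more than $6w$ terminals, all of them promoted separator vertices and none from the original $T$. And the observation that the sum of terminals per level grows by $O(w)$ per piece does not by itself control the number of pieces, since separator vertices are replicated into every child, so piece counts can in principle compound across levels. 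The charging scheme is therefore left genuinely incomplete.

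The cleanest way to make the argument rigorous (and to justify the $2^h$ count that the paper implicitly uses) is to bipartition the components of $G-X$ into two groups, each containing at most $\tfrac23$ of the terminals of the current piece; this is always possible since Reed's theorem guarantees every single component carries at most $\tfrac23$ of them. Form $G_A=G[(\bigcup_{i\in A}V_i)\cup X]$ and $G_B=G[(\bigcup_{i\notin A}V_i)\cup X]$ with terminal sets including $X$ (assigning the edges inside $X$ to one side), note each still has treewidth $\le w$ and at most $\tfrac56$ of the terminals, and recurse. Now the recursion tree is binary of depth $h\le\log_{6/5}k$, so there are at most $2^h=k^{\log_{6/5}2}<k^4$ leaves, each contributing at most $S(6w)$ vertices (the separator vertices at internal nodes are already counted among the leaf sparsifiers' terminals), which yields the stated $k^4\cdot S(6w)$ bound. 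Alternatively, your idea of merging all components with no local terminals into a single $\le w$-terminal piece can also be pushed through with a more careful induction showing $O(k)$ leaves; but as written your sketch does not establish any polynomial bound.
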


\begin{myproof}
  The proof is by induction. Consider a graph $G$: if it has at most
  $6w$ terminals, we just build a $q(6w)$-quality vertex sparsifier of
  size $S(6w)$.

  Else, let $T$ be the set of terminals in $G$, and use
  Theorem~\ref{thm:reed} to find a set $X$ such that each component of
  $G - X$ contains at most $\frac23|T \setminus X|$ terminals. Suppose
  the components have vertex sets $V_1, V_2, \ldots, V_l$; let $G_i :=
  G[V_i \cup X]$. Recurse on each $G_i$ with terminal set $(T \cap V_i)
  \cup X$ to find a flow sparsifier $G_i'$ of quality $q(6w)$.  Now use
  the Composition lemma to merge these sparsifiers $G_i'$ together and
  give the sparsifier $G'$ of the same quality.  Now use the Composition
  lemma to merge these sparsifiers $G_i'$ together and give the
  sparsifier $G'$ of the same quality.

  If the number of terminals in $G$ was $k_G$, the number of terminals
  in each $G_i$ is smaller by at least $\frac13 k_G - w = k_G/6$, and hence
  $k_{G_i} \leq 5/6\, k_G$. Hence the depth of the recursion is at most
  $h := \log_{6/5} (k/w) \leq \log_{6/5} k$, and the number of leaves is
  at most $2^h$. Each leaf gives us a sparsifier of size $S(6w)$, and
  combining these gives a sparsifier of size at most $S(6w) \cdot
  k^{\log_{6/5} 2} \leq S(6w) \cdot k^4$.
\end{myproof}
Using, e.g., results from Englert et al.~\cite{EGKRTT10} we can achieve
$q(k) = O\big( \frac{\log k}{\log\log k} \big)$ and $S(k) = k$, which
gives the results stated in Section~\ref{sec:intro}.

{ \small
\bibliographystyle{alphainit}
\bibliography{robi}
}

\end{document}